\documentclass[preprint]{elsarticle}




\usepackage{amssymb}
\usepackage{amsbsy,amsthm}
\usepackage{amsmath}
\usepackage{xcolor}
\usepackage{graphicx}
\usepackage{pdflscape}
\usepackage{afterpage}
\usepackage{capt-of}
\usepackage{fullpage}
\usepackage{mathtools}






\usepackage[normal,tight,center]{subfigure}
\usepackage{array}
\usepackage{calc}
\usepackage[thinlines]{easytable}
\usepackage{tabu}

\newcommand{\colk}[1]{\textcolor{black}{#1}}

\newcommand{\req}[1] {Eq.~(\ref{eq:#1})}

\newcommand{\rtab}[1] {Table~\ref{tab:#1}}
\newcommand{\rtabs}[2] {Tables~\ref{tab:#1} and \ref{tab:#2}}
\newcommand{\rsec}[1] {section~\ref{sec:#1}}
\newcommand{\rsecs}[2] {sections~\ref{sec:#1} and \ref{sec:#2}}
\newcommand{\rfig}[1] {Fig.~\ref{fig:#1}}
\newcommand{\eqn}[1]{Eq.~(\ref{eqn:#1})}
\newcommand{\fig}[1]{fig.~\ref{fig:#1}}
\newcommand{\Fig}[1]{Fig.~\ref{fig:#1}}

\newcommand{\eqnlabel}[1]{\label{eqn:#1}}
\newcommand{\figlabel}[1]{\label{fig:#1}}

\newcommand{\Ltwo}[1]{\left\|#1\right\|_{\mathcal{L}_2}}
\newcommand{\shift}[1]{\vo{\Phi}_{#1}}

\newcommand{\Ad}{\vo{A}_d}
\newcommand{\Adp}{\Ad^{\vo{\Phi}}}


\newcommand{\real}{\mathbb{R}}

\newcommand{\ad}{\vo{a}_d}

\newcommand{\vo}[1]{\mathbf{#1}}
\newcommand{\bdel}{\boldsymbol{\delta}}
\newtheorem{lem}{Lemma}

\def\dx {\Delta x}
\def\dt {\Delta t}
\def\ns {M}
\def\kdx {\eta}
\def\ksdx {\tilde{\eta}}
\def\II  {{\cal I}(y,x)}

\def\mw {\ksdx}

\journal{Journal of Computational Physics}

\begin{document}

\begin{frontmatter}



\title{A Unified Approach for Deriving Optimal Finite Differences}


\author[label1]{Komal Kumari}
\address[label1]{Department of Aerospace Engineering, Texas A\&M University, College Station, TX 77843, United States}
\author[label1]{Raktim Bhattacharya\corref{cor1}}
\ead{raktim@tamu.edu}
\author[label1]{Diego A.\ Donzis\corref{cor1}}
\ead{donzis@tamu.edu}
\cortext[cor1]{Corresponding author.}

\begin{abstract}
A unified approach to derive optimal finite differences is presented
which combines three critical elements for numerical performance
especially for multi-scale physical problems, namely,  
order of accuracy, spectral resolution and stability.
The resulting mathematical framework reduces to a minimization
problem subjected to equality and inequality constraints. 
\colk{
We show that the framework
can provide 
analytical results for optimal schemes and their 
numerical performance including, for example, the type of 
errors that appear for spectrally optimal schemes.
By coupling the problem in this unified framework, one
can effectively decouple the requirements for order of
accuracy and spectral resolution, for example.
Alternatively, we show how the framework exposes the tradeoffs 
between e.g.\ accuracy and stability and how this can be 
used to construct explicit schemes that remain stable 
with very large time steps. We also show how spectrally 
optimal schemes only bias odd-order derivatives to remain stable,
at the expense of accuracy,
}
while leaving even-order derivatives with symmetric coefficients.
\colk{
Schemes constructed within this framework are tested for 
diverse model problems with an emphasis on reproducing the physics
accurately. 
}
\end{abstract}

\begin{keyword}


\end{keyword}

\end{frontmatter}


\section{Introduction}

Ordinary differential equations (ODEs) as well
as partial differential equations (PDEs) are pervasive 
in science and engineering as they model accurately 
a large number of natural and man-made systems.
Unfortunately, 
many of these equations are exceedingly complex 
at realistic conditions and analytical solutions 
are virtually impossible.  Thus, significant
advances in understanding these phenomena have
relied on the use of computer simulations for which
an appropriate numerical method needs to be used
to assure certain degree of accuracy in the solution.

Perhaps the most widely used method to discretize
these governing equations in order to solve them on a 
computer is the so-called finite differences. 
The derivation of explicit finite difference schemes 
is in general very well known and has been studied 
extensively \cite{hirsch.I}. The general idea is to linearly 
combine the values of the function to be differentiated 
at neighboring points. The weights in this linear
combination are determined so as to minimize error in 
some sense. The specific choice of this objective function,
we show here, has a critical effect on the resulting schemes.

More formally, the standard procedure starts with an
approximation of the derivative of a function $f$ at a point $x_i$ of the form
\begin{equation}
\left. f'_i = {\partial f\over \partial x}\right|_{x_i} \approx 
{1\over\dx}\sum_{m=-\ns}^{\ns}{a_m f_{i+m}} + {\cal O}(\dx^{p+1}) .
\label{eq:du}
\end{equation}
The last term indicates that the truncation error of the approximation
is of order $p+1$.
In a traditional derivation 
one first selects the stencil size, that is 
the number of neighboring points to use in the approximation 
which is $(2\ns+1)$ in \req{du}, 
and then finds the coefficients $a_m$ such that
$p+1$, the order of the truncation error, is largest. This is 
done by eliminating all terms of order lower than $p+1$ 
in a Taylor expansion of the right-hand-side of \req{du}.

However, in many problems, especially those which involve multi-scale
phenomena, one is concerned not only with a global measure of the 
error but also with the error incurred at 
different scales or, in Fourier space, at different wavenumbers.
\colk{The range of wavenumbers and, thus, the resolution 
requirements depend on non-dimensional parameters such as Reynolds number
for turbulent flows, or the 
Prandtl and Schmidt numbers for mixing problems, 
as they define length and time scales 
of interest in the problem. It is therefore of consequence to use schemes that 
are optimal for the time and length scale posed by the physics of the 
problems.} 
This concern is typically addressed by analyzing the spectral characteristics
of scheme derived above by using, for example,  
von Neumann analysis \cite{hirsch.I} or a modified wavenumber approach
\cite{lele1992}. The result of this analysis is used to determine 
if the scheme derived from an order-of-accuracy consideration
is indeed appropriate to resolve all relevant scales in 
the problem at hand.
Note that this is in general {\it a posteriori} evaluation of the scheme. 

There has been efforts in the literature to devise 
schemes with general properties in terms of
spectral accuracy. Early work on acoustic computations
\cite{TW1992} showed that some desired spectral behavior can be obtained 
by solving an 
optimization problem where the objective function to 
be minimized is some measure of these errors. 
\colk{A number of different applications based on the same
general approach have been presented in the literature
\cite{TW1992,KL1996,WANG2001,Zhuang2002,PONZ2003,BB2004,MTWW2006,TKS2006,Fang2013,ZY2013,PIROZ2007}.
In all these studies specific requirements were put forth
typically based on the physics of interest 
which often resulted in subjective criteria to account for
those specific requirements. Some were limited in scope,
for exmaple, by limiting 
the results to approximating only the first derivative
\cite{TW1992,KL1996,BB2004,Zhuang2002,TKS2006}.}
These optimizations were also limited in the sense that 
the nature of the unknown coefficients (e.g.\ whether they 
are symmetric or antisymmetric) 
is specified a priori 
which resulted in the objective function comprising either only the real part 
or only the imaginary part of the spectral error, depending upon the
order of the derivative 
being computed. However, as we show below, 
that symmetry and antisymmetry 
of the coefficients for even and odd derivatives respectively, 
can be obtained as a {\it consequence}
of the minimization problem without any externally
imposed conditions.
Some others 
\cite{ZY2013} formulated an optimization
based on the maximum norm to minimize spectral errors 
but the algorithm adheres to certain ad-hoc rules  
(such as the nature of 
the coefficients or the relative magnitude of the coefficients in 
a stencil) and presents challenges in finding the global optimum.
Other specialized optimizations have been conducted to obtain
schemes with lower errors that, for example, can resolve very strong gradients 
in fluid flow calculations \cite{MTWW2006,WANG2001,PONZ2003},
though some degree of subjectivity and trial-and-error was 
used in the formulations.
Both \cite{WANG2001,PONZ2003} utilize a two level 
optimization to achieve better 
resolving efficiency and \cite{PONZ2003} also incorporated a weight 
function to emphasize on the relevant scales. Another approach is 
presented in \cite{Fang2013} where the unknown coefficient 
is selected according to the level of dissipation required 
without carrying out any formal optimization. 
A general conclusion one can arrive at from all these studies, 
is that more sensible choices
than commonly made can lead to numerical schemes that
\colk{can outperform standard finite differences 
\cite{lele1992,TW1992}}.
Here we support this idea and show that 
the precise meaning of this metric has a clear impact 
on the scheme obtained. 
\colk{Another observation is that formulations are typically tailored 
with specific applications and constraints in mind
due to the different requirements dictated by the physics of 
interest.
It is thus not surprising that there seems to be no 
general rigorous mathematical
framework under which these particular cases can be derived.}
Here, in a first step, we provide such a framework along 
with some rigorous results on the nature of the error that result 
from optimization formulations. \colk{As we will show, the 
framework can incorporate the different requirements needed 
for different cases and conditions.}

A third critical aspect when considering numerical schemes is
their stability. Obviously, to be usable, a scheme must be numerically
stable when utilized to solve an ODE or PDE. Again, this
has traditionally been {\it a posteriori} undertaking:
after selecting 
a scheme of a given order, with a desired spectral accuracy, one
would check if the scheme is stable or not and under which conditions.
This has been the case for standard or optimized schemes 
\cite{TW1992,AZ2003}.
\colk{This is also the case for the two-step optimization of 
\cite{HARAS1994} in which an optimal spatial scheme is obtained 
first, followed by an optimization to get a stable time marching scheme.
This also highlights the importance 
of considering the relation 
between the space and time discretization to increase 
the computational efficiency of the spatial operators.} 
\colk{In \cite{PIROZ2007}, 
the spatial and the temporal schemes are decoupled and optimized 
separately in order to achieve maximum resolving efficiency for both the operators.
The optimization of the temporal scheme is also subject to the stability constraint. 
}
Schemes that remain stable for a broader range in the appropriate 
parameter space (time step size, grid spacing, non-dimensional groups 
such as CFL number, etc.) are typically preferred as simulations with 
larger time steps are computationally less expensive. \colk{While the 
time step and grid spacing are conventionally subject to the stability 
and resolution requirements, 
\cite{PIROZ2007} shows that optimal values for these that 
minimize computational cost for some error level can also be obtained.}

An overriding question is, thus, whether it is possible to 
find optimal schemes of given order, that are stable and 
that minimize the spectral error in a suitably defined manner
typically informed by the physical characteristics of the
problem being solved. This is the main motivation of the present work.
The mathematical framework in which this can be achieved 
reduces to an optimization 
problem with equality and inequality constraints which can be solved, 
under certain conditions, analytically.
\colk{The importance of this work is that it allows us to express
physically meangingful desired properties and 
constraints into a unified 
mathematical 
framework which results in highly-accurate schemes for a particular 
problem of interest.} 
Another important aspect of the proposed 
framework is that it also exposes explicitly tradeoffs that can be 
profitably used in specific circumstances. For example, we 
show that it is possible to construct explicit schemes
that can remain stable for very large time steps 
(even an order of magnitude larger than equivalent 
standard schemes) 
when constraints on accuracy at some scales can be relaxed.
We can also show that unlike traditional finite differences, 
one can design ``spectrally flat'' schemes which present a
more homogeneous accuracy distribution across wavenumbers.
\colk{Furthermore, because of optimality in spectral properties, 
we also show that the resulting schemes present better
performance in terms of important physical properties like its
dispersion relation, and group and phase velocity \cite{TKS2003,
TKS2006, TKS2014}.}

The rest of the paper is organized as follows. In \rsec{framework}
we present the fundamental ideas behind the framework
which include order of accuracy, spectral accuracy 
and stability. In \rsec{simulations} we present numerical
results using optimal schemes constructed using the tools presented 
here. We present results on accuracy, stability and tradeoffs that 
can be explicitly incorporated into the design of finite differences.
Final remarks and conclusions are discussed in \rsec{conclusions}.
Proof of two important theorems are included as appendices.

\section{The framework for deriving finite differences} 
\label{sec:framework}

As described above, in this work we construct a
framework to derive 
finite differences in which three important 
design characteristics, namely 
order of accuracy, spectral accuracy,
and stability, are combined into a 
rigorous mathematical framework.
We now present each in turn.



\subsection{Order of accuracy}
A generalization of the approximation in \req{du} to the $d$-th
\colk{spatial}
derivative is given by
\begin{align}
f^{(d)}_i	= \frac{1}{(\dx)^d}\sum_{m=-\ns}^{\ns} a_m f_{i+m},
\eqnlabel{explicit}
\end{align}
where, as before, we have $\ns$ points on either side of the $i$-th 
grid point where the derivative is sought.
The stencil size is then $S\equiv 2\ns+1$.
A Taylor series for a term on the
right-hand-side of \eqn{explicit} 
can be written as $$f_{i+m}=f_i + (m\dx)f_i' + (m\dx)^2f_i''/2!+ \dots.$$
Upon constructing the entire sum in \eqn{explicit}, 
a $(p+1)$-th order approximation of the $d$-th derivative
requires that  
the term with the $d$-th derivative be equal to $d!$ and that the rest
of the terms up to order $p$
be zero. 
After some algebra these constraints can be written as 

\begin{equation}
\sum_{m=-\ns}^{\ns} m^q a_m  = \begin{cases}
0  & q\ne d, \\
d! & q=d,
\end{cases}
\end{equation}
for $q\le d+p$,
or more compactly as
\begin{align}
\vo{a}_d^T\vo{X}_d &= \vo{y}_d,	\eqnlabel{exp:orderConstr}
\end{align}
where
\begin{align}
\vo{a}_d^T &:= \begin{bmatrix}a_{-\ns} & a_{-\ns+1} & \cdots a_{\ns-1} & a_{\ns}\end{bmatrix},\\
\vo{m}^T &:= \begin{bmatrix} -\ns & -\ns+1 & \cdots & \ns-1 & \ns \end{bmatrix},\\
\vo{X}_d &:= \begin{bmatrix}
\vo{1}_{S\times 1} & \vo{m} & \cdots &
\vo{m}^{d-1} & \vo{m}^d & \cdots &\vo{m}^{d+p}
\end{bmatrix}, \eqnlabel{def_Xd} \\
\vo{y}_d &:= \begin{bmatrix}
\vo{0}_{1\times d} & d! & \vo{0}_{1\times p}
\end{bmatrix}.
\end{align}
with $\vo{1}_{S\times 1}$ is an $S\times 1$ vector with ones as its 
elements and $\vo{0}_{1\times d}$ a $1\times d$ vector with zeros as 
its elements. The vectors $\vo{m}^n$ in \eqn{def_Xd} are 
defined as vectors composed 
of each element of $\vo{m}$ raised to the power $n$.

Equation \eqn{exp:orderConstr} is the linear system that,
for a given stencil size $\ns$, results in a finite difference scheme 
of order $p+1$. 
This approach, though presented in different forms across the
literature, forms the basis for standard 
derivation of finite differences when $S=d+p$.
In this case, the number of unknown coefficients in 
$\vo{a}_d$ equals the number of terms that need to be 
eliminated 
to maintain a certain order of accuracy.
If on the other hand, $S>d+p$ then the solution to \eqn{exp:orderConstr}
is not unique. The remaining degrees of freedom 
can then be used to, e.g., assure spectral accuracy.
This is presented next.

\subsection{Spectral accuracy}

In order to understand the behavior of discrete differentiation 
operators at different scales or frequencies, 
it is common to evaluate them utilizing a 
Fourier representation.
For simplicity consider a single mode in a 
\colk{spatial} 
discrete Fourier series:
\begin{equation}
f(x) = \hat{f}e^{jkx} 
\end{equation}
where $\hat{f}$ is the Fourier coefficient of the mode
at wavenumber $k$, and $j=\sqrt{-1}$.
Clearly, the exact $d$-th derivative is given by
\begin{equation}
f^{(d)}(x) = (jk)^d f(x). \eqnlabel{spectral}
\end{equation}

Now consider a discrete approximation of the derivative of the 
form \eqn{explicit}. 
Since,
\begin{align}
f_i &:= f(x_i) = \hat{f}e^{jkx_i} \eqnlabel{singleFmode},\\
f_{i+m} & := f(x_i + m\Delta x) = \hat{f}e^{jkx_i}e^{jkm\Delta x} =
f_ie^{jkm\Delta x}
\end{align}
equation \eqn{explicit} becomes 
\begin{align}
f_i^{(d)} &= \left(\frac{1}{(\dx)^d} \sum_m a_m e^{jkm\Delta x}\right) f_i \eqnlabel{explicit:spectral}
\end{align}
in terms of this single Fourier mode.

Comparison between the numerical approximation \eqn{explicit:spectral}
and exact differentiation \eqn{spectral} suggests a natural way to 
define the error at a given waveumber as
\begin{equation}
	e(\kdx) = \sum_m a_m e^{jm\kdx} - (j\kdx)^d,
	\eqnlabel{e_w}
\end{equation}
where $\kdx:=k\Delta x$
is a convenient normalized wavenumber in the interval 
$[0,\pi]$.
More compactly, this can be expressed in matrix form as 
\begin{equation}
	e(\kdx) = \left(\vo{C}^T(\kdx) + j\vo{S}^T(\kdx)\right)\vo{a}_d - (j\kdx)^d,
\end{equation}
where
\begin{align}
\vo{C}(\kdx) := \begin{bmatrix} \cos(-\ns \kdx)\\\vdots \\ \cos(-\kdx) \\ 1 \\ \cos{\kdx} \\ \vdots \\\cos(\ns \kdx) \end{bmatrix}, \text{ and } \vo{S}(\kdx) := \begin{bmatrix} \sin(-\ns \kdx)\\\vdots \\ \sin(-\kdx) \\ 0 \\ \sin(\kdx) \\ \vdots \\\sin(\ns \kdx) \end{bmatrix}. \eqnlabel{SinCos:def}
\end{align}

Equation \eqn{e_w} can also be written 
in terms of the so-called modified wavenumber 
$(j\ksdx)^d=\sum_m a_m e^{jm\kdx} = \left(\vo{C}^T(\kdx) +
j\vo{S}^T(\kdx)\right)\ad$ as 
\begin{equation}
e(\kdx)=(j\ksdx)^d-(j\kdx)^d = j^d(\ksdx^d-\kdx^d) .
\eqnlabel{e_modw}
\end{equation}
Clearly, the difference between $\ksdx^d$ and $\kdx^d$
provides a measure of the spectral error at wavenumber 
$\kdx$. The ratio of the modified wavenumber to the actual
wavenumber is then
\begin{equation}
G_d := {\ksdx^d\over \kdx^d} = {e(\kdx)\over (j \kdx)^d} + 1 ,
\eqnlabel{modw}
\end{equation}
which is a complementary measure of error across wavenumbers and 
will be used later on when comparing different schemes.

A global figure of merit to assess how accurately the scheme captures 
spectral content can be defined as the weighted 
$\mathcal{L}_2$ norm of the error $e(\kdx)$:
\begin{equation}
\|e(\kdx)\|^2_{\mathcal{L}_2} := \int_0^\pi \gamma(\kdx)e^\ast(\kdx)e(\kdx)d\kdx =: \left\langle e^\ast(\kdx)e(\kdx) \right\rangle .
\eqnlabel{sp_error}
\end{equation}
Here $e^\ast(\kdx)$ is the complex conjugate of $e(\kdx)$, and $\gamma(\kdx)$ is a
weighting function introduced to provide control 
over which wavenumbers are to be more accurately resolved. 
The selection of $\gamma(\kdx)$
would depend, in general, on the physical characteristics 
of the system of interest. For example, for PDEs with multi-scale broadband
solutions a natural choice would be a constant $\gamma(\kdx)$
over the range of $\kdx$ of interest and zero elsewhere. 
For a system with two well defined wavenumber bands, on the other hand, 
one can define $\gamma(\kdx)$ presenting relatively large values 
around those bands but negligible values everywhere else.
Examples on the impact of this choice 
will be provided in \rsec{simulations}.

We can now frame the problem of the derivation of 
finite difference schemes combining 
spectral resolution and order of accuracy.
Formally, 
our goal is to determine $\vo{a}_d$ such that
$\|e(\kdx)\|^2_{\mathcal{L}_2}$ is minimized, subject to 
a given order of
accuracy defined by \eqn{exp:orderConstr}, i.e.
\begin{align}
\min_{\vo{a}_d\in\real^{2\ns+1}} \|e(\kdx)\|^2_{\mathcal{L}_2}, \text{ subject to \eqn{exp:orderConstr}.}
\eqnlabel{order_spect}
\end{align}

Equation \eqn{order_spect} provides, then, the unifying formalism 
to find  the 
coefficients in \eqn{explicit} that both provides a given order
of accuracy and minimizes error in spectral space \cite{TW1992}.
This formulation, written in different ways,
has been used extensively as pointed out
in the introduction, but here is presented in a very general form.
Note that if the number of unknowns ($S=2\ns +1$) is 
equal to the number of terms to be removed from the truncation 
error to achieve a given order, then \eqn{exp:orderConstr}
has a unique solution and no optimization is possible.
This is the case of standard finite difference schemes.
If, on the other hand, the stencil
size makes the number of unknowns greater that those needed to 
achieve a given order, 
the system will utilize those degrees of freedom
to minimize $\|e(\kdx)\|_{\mathcal{L}_2}^2$. 

We note here that in standard derivations of finite differences
based only on order-of-accuracy considerations, the spectral 
behavior is found a posteriori. Spectral resolution is then
coupled to (dependent on) order of accuracy.
While typically increasing the formal 
order of accuracy of the scheme, leads to a better spectral 
resolution, standard techniques provide no mechanism to
constrain the spectral behavior of the resulting schemes.
In the approach presented above, on the other hand,
specifications on spectral accuracy are independent 
of order of accuracy.
Thus, we see that
by {\it coupling the two mathematical systems} 
into a unified formulation, 
we effectively 
{\it decouple the requirements} for order of accuracy and 
spectral resolution. 
The result is that when the stencil size is increased, 
the extra degrees of freedom can be used 
to either increase the order of accuracy further or
improve specific spectral behavior. 
In \rsec{simulations} 
we present specific examples of optimized 
low-order schemes that are shown to 
have better spectral resolution 
than standard high-order finite differences.

To develop the theory further, it is convenient 
to distinguish between odd-order and even-order derivatives which 
leads to different behavior in Fourier space. 
In both cases an analytical solution can be found and 
is presented next.
 
 \subsubsection{Even derivatives}\label{sec:even}
 For $d=2q$, $q=\{1,2,\cdots\}$, $e(\kdx)$ becomes
 \begin{align}
 	\nonumber e(\kdx) &= \left(\vo{C}^T(\kdx)\vo{a}_d-(-1)^q\kdx^{d}\right) + j\vo{S}(\kdx)^T\vo{a}_d.
 \end{align}  
 
The $\mathcal{L}_2$ norm of the error is therefore,
\begin{align}
\|e(\kdx)\|^2_{\mathcal{L}_2} 
  &:= \int_0^\pi \gamma(\kdx)
  \left[\left(\vo{C}^T(\kdx)\vo{a}_d-(-1)^q\kdx^{d}\right)^2 +
  \left(\vo{S}(\kdx)^T\vo{a}_d\right)^2\right] d\kdx
    \eqnlabel{error_SC},\\
& = \vo{a}_d^T\underbrace{\left(\left\langle  \gamma(\kdx)\vo{C}(\kdx)\vo{C}^T(\kdx) \right\rangle + \left\langle \gamma(\kdx)\vo{S}(\kdx)\vo{S}^T(\kdx)\right\rangle\right)}_{\vo{Q}_d}\vo{a}_d - 2\vo{a}_d^T\underbrace{\left\langle \gamma(\kdx) (-1)^q\kdx^d\vo{C}(\kdx)\right\rangle}_{\vo{r}_d} + \left\langle \gamma(\kdx) \kdx^{2d} \right\rangle ,
\end{align}
and the optimization problem \eqn{order_spect} can be written as 
\begin{equation}
\min_{\vo{a}_d\in\real^{2\ns+1}} 
(\vo{a}_d^T\vo{Q}_d\vo{a}_d - 2\vo{a}_d^T\vo{r}_d),
\text{ subject to \eqn{exp:orderConstr}},
\eqnlabel{cost:a}
\end{equation}
 where the constant term $\left\langle \kdx^{2d} \right\rangle$ is ignored in the
 minimization. The system \eqn{cost:a} is a quadratic programming problem,
 with linear equality constraints. We can solve this problem analytically,
 which is determined from the 
 Karush-Kuhn-Tucker (KKT) condition
 \cite{bertsekas1999nonlinear}. The optimal solution for the 
 coefficients $\vo{a}_d$, which will be denoted with an asterisk 
 ($\vo{a}_d^\ast$), satisfies the following KKT condition
 \begin{align}
 \underbrace{\begin{bmatrix} \vo{Q}_d & \vo{X}_d \\
 \vo{X}_d^T & \vo{0}_{(d+p+1)\times(d+p+1)}\end{bmatrix}}_{:=\vo{K}} \begin{pmatrix}\vo{a}_d^\ast\\ \boldsymbol{\lambda}_d^\ast\end{pmatrix} = 
 \begin{pmatrix} \vo{r}_d \\ \vo{y}_d^T\end{pmatrix}, \eqnlabel{KKT1}
 \end{align}
 where $\boldsymbol{\lambda}_d\in\real^{d+p+1}$ is the Lagrange multiplier associated with the constraint in \eqn{exp:orderConstr}. 
Since $\vo{X}_d^T\in\real^{(d+p+1)\times S}$ has full row rank, then
the KKT matrix  $\vo{K}$ is nonsingular and \eqn{KKT1} has unique
solution $(\vo{a}_d^\ast,\boldsymbol{\lambda}^\ast_d)$ given by
 \begin{align}
  \begin{pmatrix}\vo{a}_d^\ast\\ \boldsymbol{\lambda}_d^\ast\end{pmatrix} = \begin{bmatrix} \vo{Q}_d & \vo{X}_d \\
 \vo{X}_d^T & \vo{0}_{(d+p+1)\times(d+p+1)}\end{bmatrix}^{-1} \begin{pmatrix} \vo{r}_d \\ \vo{y}_d^T\end{pmatrix}.\eqnlabel{optimal:Explicit}
 \end{align}

Since everything on the right-hand-side is known, 
\eqn{optimal:Explicit} provides 
the coefficients $\vo{a}_d^\ast$ 
for the finite difference approximation 
\eqn{explicit} of order
 $p+1$ which minimize the spectral error \eqn{sp_error}.
 
As an example consider a second-order scheme for a
second derivative. In this case, a 3-point stencil 
($S=2\ns+1=3$) yields a unique solution to \eqn{exp:orderConstr},
namely the common approximation 
$\vo{a}_2^\ast=\begin{bmatrix}1 & -2 & 1 \end{bmatrix}^T$ 
or
$f'_i \approx (f_{i-1}-2f_i+f_{i+1})/\dx^2$.
If one now retains a second-order approximation but 
increases the stencil size $S$, the additional degrees of 
freedom are utilized to reduce, through the minimization
process, the spectral error in the approximation.
For illustration purposes assume we would like to 
resolve as accurately as possible all wavenumbers in the range 
$\kdx\in[0,2.5]$. In this case, one can naturally choose  
$\gamma(\kdx)=1$ for $\kdx\in[0,2.5]$ and $\gamma(\kdx)=0$ otherwise.
Upon solving \eqn{optimal:Explicit},
we obtain the coefficients of the resulting schemes which are
shown in \rfig{second_optim_ad} and \rtab{second_optim_ad}. 

\begin{figure}[h!]
\includegraphics[trim={3cm 0 3cm 0},clip,width=\textwidth]{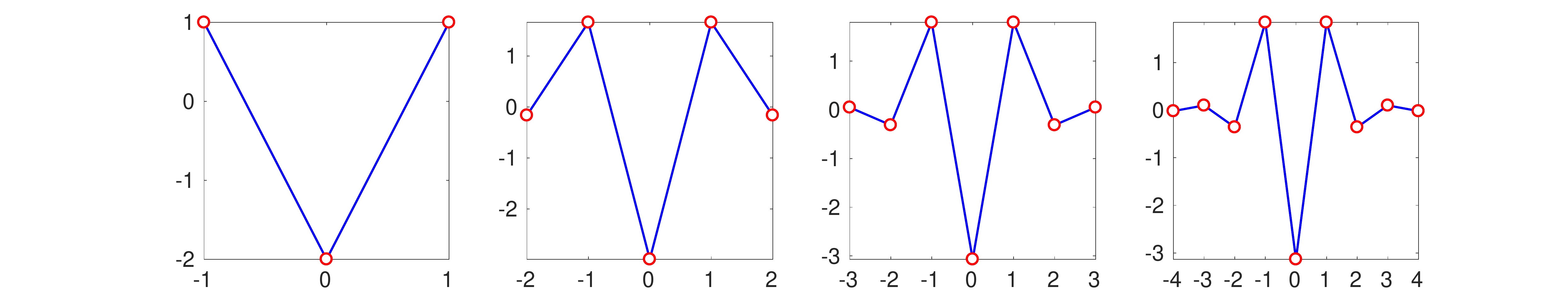}	
\begin{picture}(0,0)
        \put(10,60){\rotatebox{0} {$\vo{a}_2^*$}}
        \put(55,0){$-\ns:\ns$}
	\put(165,0){$-\ns:\ns$}
	\put(275,0){$-\ns:\ns$}
	\put(385,0){$-\ns:\ns$}
        \end{picture}
\caption{Optimal coefficients for various stencil sizes 
$S=2\ns+1$, for a second derivative with second order accuracy. 
The coefficients were
obtained using $\gamma(\kdx)=1$, for $\kdx\in[0,2.5]$ and 
$\gamma(\kdx) = 0$ otherwise.
}
\label{fig:second_optim_ad}
\end{figure}

\begin{table}[h!]
\begin{center}
\setlength{\tabcolsep}{15pt}
\begin{tabular}{|c|c|c|c|c|}
\hline
     & $\ns=1$ &$\ns=2$ & $\ns=3$& $\ns=4$\\[0.5ex]
\hline 
       $a_0^*$ & -2 & -2.986945912146335 & -3.067324780469417 & -3.132525936497260 \\[0.5ex]
       $a_1^*$ &  1 & 1.657963941430890 & 1.795865984254199 & 1.843958787844204 \\[0.5ex]
       $a_2^*$ &    & -0.164490985357722 & -0.312793272384242 & -0.357929955982910 \\[0.5ex]
       $a_3^*$ &    & & 0.050589678364752 & 0.099426449444277 \\[0.5ex]
       $a_4^*$ &    & & & -0.019192313056941 \\[0.5ex]
\hline
\end{tabular}
\end{center}
\caption{Numerical coefficients for the schemes 
in \rfig{second_optim_ad}. Note that since the resulting 
schemes are symmetric around the central grid point,
only coefficients on one side are shown.
}
\label{tab:second_optim_ad}
\end{table}


The spectral accuracy of these schemes is shown in \rfig{second_optim}
where we compare $\ksdx^2$ (magenta) to $\kdx^2$ (dashed black)
as in \eqn{e_modw}: the difference between these two curves
correspond the error representing the derivative
at a given frequency. The standard second order scheme, $\ns=1$, 
is shown with the red line for comparison.
In the top panels of the figure,
we clearly see that as we increase $\ns$ a better
representation of the exact derivative is achieved in
the range $\kdx\in[0,2.5]$ as expected. There 
is also a stark difference between the standard and the optimized 
schemes which becomes more prominent as $\kdx$ or $\ns$ increases. 

\begin{figure}[h!]
\includegraphics[trim={4cm 0 4cm 0},clip,width=\textwidth]{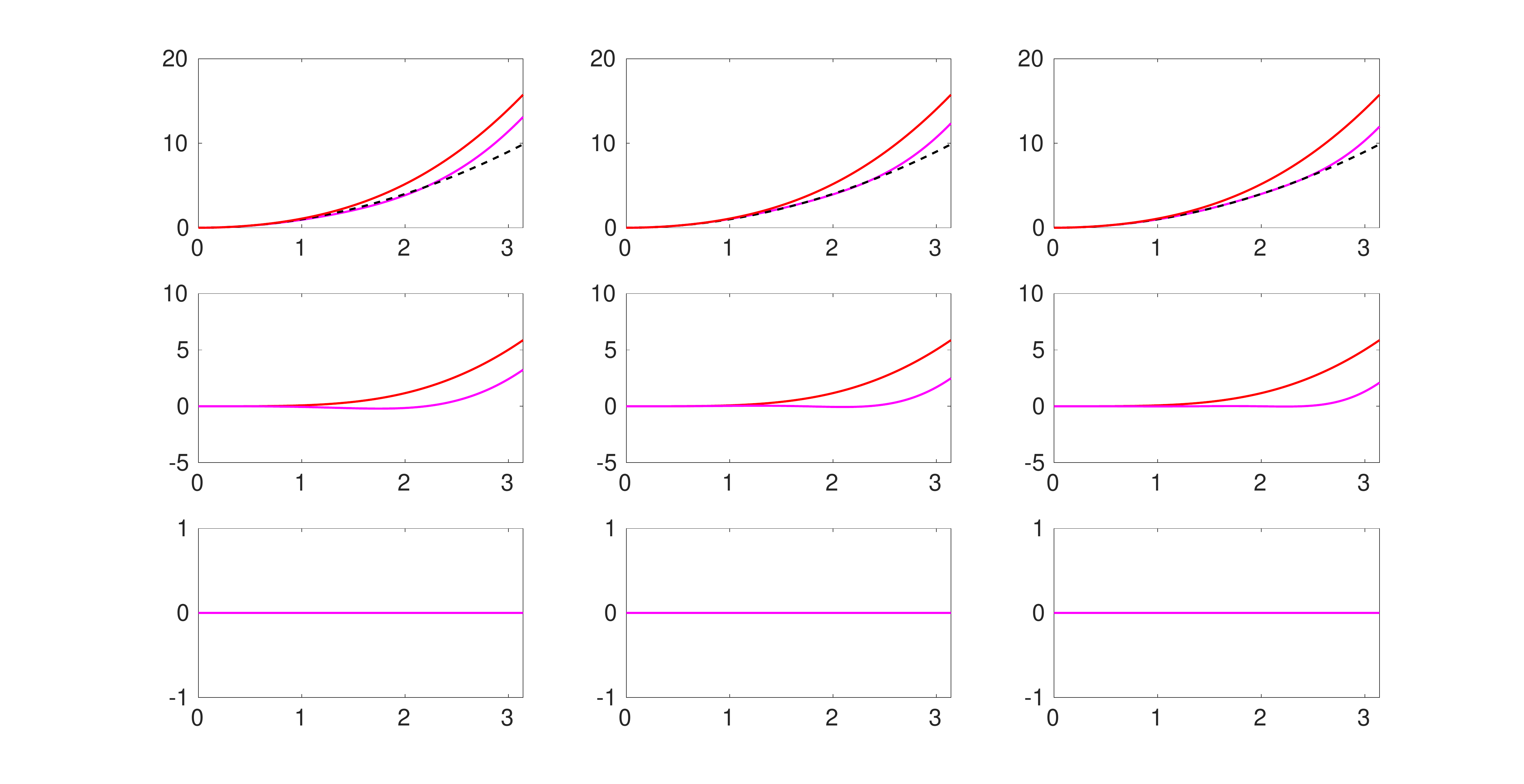}	
	\begin{picture}(0,0)
        \put(70,285){$M=2$}
        \put(230,285){$M=3$}
        \put(385,285){$M=4$}
        \put(3,235){\rotatebox{90} {$\ksdx^2$}}
        \put(3,150){\rotatebox{90} {$\Re[e(\kdx)]$}}
        \put(3,60){\rotatebox{90} {$\Im[e(\kdx)]$}}
        \put(80,25){$\kdx$}
        \put(240,25){$\kdx$}
        \put(395,25){$\kdx$}
        \end{picture}
\caption{Spectral accuracy for the schemes presented 
in \rfig{second_optim_ad} and \rtab{second_optim_ad}.
Top, middle and bottom rows show the modified wavenumber,
the real part, and the imaginary part of the spectral error, 
respectively.
Left, middle and right columns correspond to $M=2$, $3$, 
and $4$, respectively. Dashed black line: 
exact differentiation ($\ksdx^2=\kdx^2$).
Magenta solid line: $\ksdx$ for optimized schemes.
In all plots, red solid line corresponds to 
standard second-order scheme ($M=1$) for 
comparison.
}
\label{fig:second_optim}
\end{figure}

Further information about the nature of the error can 
be obtained by analyzing the real and imaginary components
of $e(\kdx)$. This is shown in the middle and bottom panels
of \rfig{second_optim}. While there are significant errors in the 
real component $\Re[e(\kdx)]$ for
$\kdx\in[0,2.5]$ when $\ns=1$, they decrease
substantially when $\ns$ is increased,
as expected.
The errors outside of the support of $\gamma(\kdx)$ are marginally 
affected since those are not
modes for which the minimization process seeks 
optimal solutions. The imaginary part $\Im[e(\kdx)]$
on the other hand presents zero error.
This fact can indeed be proven and is 
presented here as,
\begin{lem}
\label{lem:imagZero}
Finite difference approximations \eqn{explicit}
obtained from \eqn{order_spect}
with $d$ even, yield a 
spectral error \eqn{e_w} which satisfies $\Im[e(\kdx)] = 0$.
\end{lem}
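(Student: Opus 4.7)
The plan is to reduce the claim to a symmetry property of the optimal stencil, namely $a^\ast_{-m}=a^\ast_m$ for all $m$, and then derive that symmetry from a reflection argument applied to the KKT system \eqn{KKT1}. First I would note that from the definition \eqn{SinCos:def} the imaginary part of the spectral error is
$\Im[e(\kdx)] = \vo{S}^T(\kdx)\vo{a}_d = \sum_m a_m \sin(m\kdx).$
Since $\sin(m\kdx)$ is odd in $m$, any stencil satisfying $a_{-m}=a_m$ makes this quantity vanish identically in $\kdx$. It therefore suffices to show that the optimizer $\vo{a}_d^\ast$ delivered by \eqn{optimal:Explicit} is symmetric whenever $d$ is even.

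To establish that symmetry I would introduce the exchange matrix $\vo{P}$ defined by $(\vo{P}\vo{a})_m = a_{-m}$, which satisfies $\vo{P}^T=\vo{P}$ and $\vo{P}^2=\vo{I}$, and then verify three invariance facts. \emph{Feasibility of the reflected stencil:} applying $\vo{P}$ to \eqn{exp:orderConstr} gives $\sum_m m^q a_{-m} = (-1)^q \sum_m m^q a_m$; for $d$ even, the only nonzero right-hand side in \eqn{exp:orderConstr} sits at the even index $q=d$, while every other constraint has right-hand side zero and is thus unaffected by the sign change, so $\vo{P}\vo{a}_d^\ast$ remains feasible. \emph{Invariance of the objective:} from \eqn{SinCos:def} one reads off $\vo{P}\vo{C}(\kdx) = \vo{C}(\kdx)$ and $\vo{P}\vo{S}(\kdx) = -\vo{S}(\kdx)$; substituting into the definitions of $\vo{Q}_d$ and $\vo{r}_d$ in \eqn{cost:a} gives $\vo{P}\vo{Q}_d\vo{P}=\vo{Q}_d$ (the two minus signs from the $\vo{S}\vo{S}^T$ term cancel) and $\vo{P}\vo{r}_d=\vo{r}_d$ (the prefactor $(-1)^q\kdx^d$ is independent of $m$), so the quadratic cost in \eqn{cost:a} is invariant under $\vo{a}_d\mapsto\vo{P}\vo{a}_d$. \emph{Uniqueness:} the nonsingularity of the KKT matrix $\vo{K}$ argued just after \eqn{KKT1} shows that the minimizer is unique.

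Combining the three facts, $\vo{P}\vo{a}_d^\ast$ is a feasible point with the same cost as $\vo{a}_d^\ast$, so uniqueness forces $\vo{P}\vo{a}_d^\ast=\vo{a}_d^\ast$, i.e., $a^\ast_{-m}=a^\ast_m$. Substituting this back into the formula for $\Im[e(\kdx)]$ yields the desired identity $\Im[e(\kdx)]\equiv 0$. The most delicate point of the plan is the feasibility step: one must verify that no constraint with odd $q$ carries a nonzero right-hand side, which is exactly what the parity of $d$ guarantees, and exactly what will fail when $d$ is odd, presaging the analogous antisymmetric statement (and its consequence $\Re[e(\kdx)]=0$) for odd-order derivatives.
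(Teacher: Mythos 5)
Your proof is correct, but it takes a genuinely different route from the paper's. The paper's proof (in \ref{sec:app_imag_zero}) decomposes $\vo{a}_d=\vo{a}_d^{s}+\vo{a}_d^{as}$ into symmetric and antisymmetric parts, notes that $\vo{C}^T(\kdx)\vo{a}_d$ depends only on $\vo{a}_d^{s}$ and $\vo{S}^T(\kdx)\vo{a}_d$ only on $\vo{a}_d^{as}$, verifies that the order constraints split the same way (the symmetric columns of $\vo{X}_d$ carry the only nonzero entry of $\vo{y}_d$ when $d$ is even, while the antisymmetric columns carry zeros), and concludes that the two minimizations are independent; the one over $\vo{a}_d^{as}$ attains its minimum value of zero exactly at $\vo{a}_d^{as}=0$, which kills the imaginary part directly. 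You instead exhibit the reflection $\vo{P}$ as a symmetry of the full constrained problem and invoke uniqueness of the minimizer to force $\vo{P}\vo{a}_d^{\ast}=\vo{a}_d^{\ast}$; your three checks (feasibility of the reflected stencil, $\vo{P}\vo{Q}_d\vo{P}=\vo{Q}_d$ and $\vo{P}\vo{r}_d=\vo{r}_d$, uniqueness) are all correct, and both arguments ultimately exploit the same parity structure in $m\mapsto -m$. The tradeoff is this: the paper's decomposition shows that \emph{every} minimizer has vanishing antisymmetric part (for any nonnegative, nontrivial $\gamma$) without appealing to uniqueness, whereas your argument leans on the nonsingularity of the KKT matrix asserted after \eqn{KKT1} --- if the minimizer were not unique, symmetrization by averaging would only produce \emph{some} symmetric minimizer rather than show they all are. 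In exchange, your route is shorter, makes the underlying group-invariance explicit, and transfers essentially verbatim to the odd-$d$ case (replacing $\vo{P}$ by $-\vo{P}$ to preserve the constraint $\sum_m m^d a_m=d!$, which yields antisymmetry and $\Re[e(\kdx)]=0$), exactly as you anticipate at the end.
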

\begin{proof}
See \ref{sec:app_imag_zero}.
\end{proof}

This result has implications in terms of the type of error expected
(dispersive versus dissipative) 
when the scheme is used in a fully  discretized PDE. 
In particular, from \eqn{modw} we have $G_2=\ksdx^2/\kdx^2 =
(-e(\kdx)/\kdx^2+1)$
which implies that the modified wavenumber will not have a phase
error since $\Im[e(\kdx)]=0$. 
Only dissipative errors are thus expected.

We note here that the error has also been defined using an 
additional arbitrary coefficient 
\cite{MTWW2006, WANG2001} 
to emphasize the relative importance of the 
dissipative and dispersive errors. 
That is, one modifies \eqn{error_SC} to 
$\|e(\kdx)\|^2_{\mathcal{L}_2} 
  = \int_0^\pi  \gamma(\kdx)
  [\sigma(\vo{C}^T(\kdx)\vo{a}_d-(-1)^q\kdx^{d})^2 +
  (1-\sigma)(\vo{S}(\kdx)^T\vo{a}_d)^2] d\kdx $
  where $\sigma\in[0,1]$.
However, from the proof of 
lemma \ref{lem:imagZero}, the minimum value of the error
is obtained when one of the integrals is
identically zero.
Therefore the parameter $\sigma$ will only appear 
as a constant and will have no effect on the optimal coefficients. 
The parameter $\sigma$, thus, is of relevance only 
when the grid is biased in one direction.
In this scenario, neither of the integrals in 
\eqn{error_SC} will be $0$, resulting in error with both real and imaginary
part. 
The optimal solution will then be a minimum of the summation of 
dispersive and dissipative errors.

It is also of interest to understand how errors change with 
$S$ (or $\ns$). This is so
because increasing $N$ (number of 
grid points) or $S$ (stencil size) leads to more computations 
and, thus, more computationally intensive simulations.
For standard schemes, when one increases the number of points in 
the stencil ($S$), 
one also increases the order of accuracy. In particular,
the error for a given $S$ (or $\ns$) is proportional
to $\dx^{2\ns}$ 
where the proportionality constant also decreases 
with $\ns$.
While there is no easy way to represent these constants in
closed form, from their structure one may then still 
expects an approximately exponential decrease in the error 
as the stencil size increases. This is indeed
what we see in \rfig{explicitL2error2} where we show 
the error $\| e(\kdx)\|_{\mathcal{L}_2}^2$ as a function 
of $\ns$ (black symbols).
For optimized schemes of fixed 
order, on the other hand, it is not obvious a priori the rate of convergence 
with the number of grid points used in the stencil.
However, we can readily evaluate the error numerically 
to assess this convergence rate. 
This is shown in \rfig{explicitL2error2} where we include 
the result of such numerical 
calculations with the second-order optimized schemes (blue symbols) in 
\rfig{second_optim_ad} as $M$ is increased.
It can be seen that the error also decreases 
approximately exponentially with $\ns$. 
While for $M=1$ the two curves coincide as expected 
since both approaches lead to the same scheme, optimized
schemes (blue) present a better convergence rate as
the stencil size increases.

\begin{figure}[h!]
\begin{center}
\includegraphics[width=0.5\textwidth]{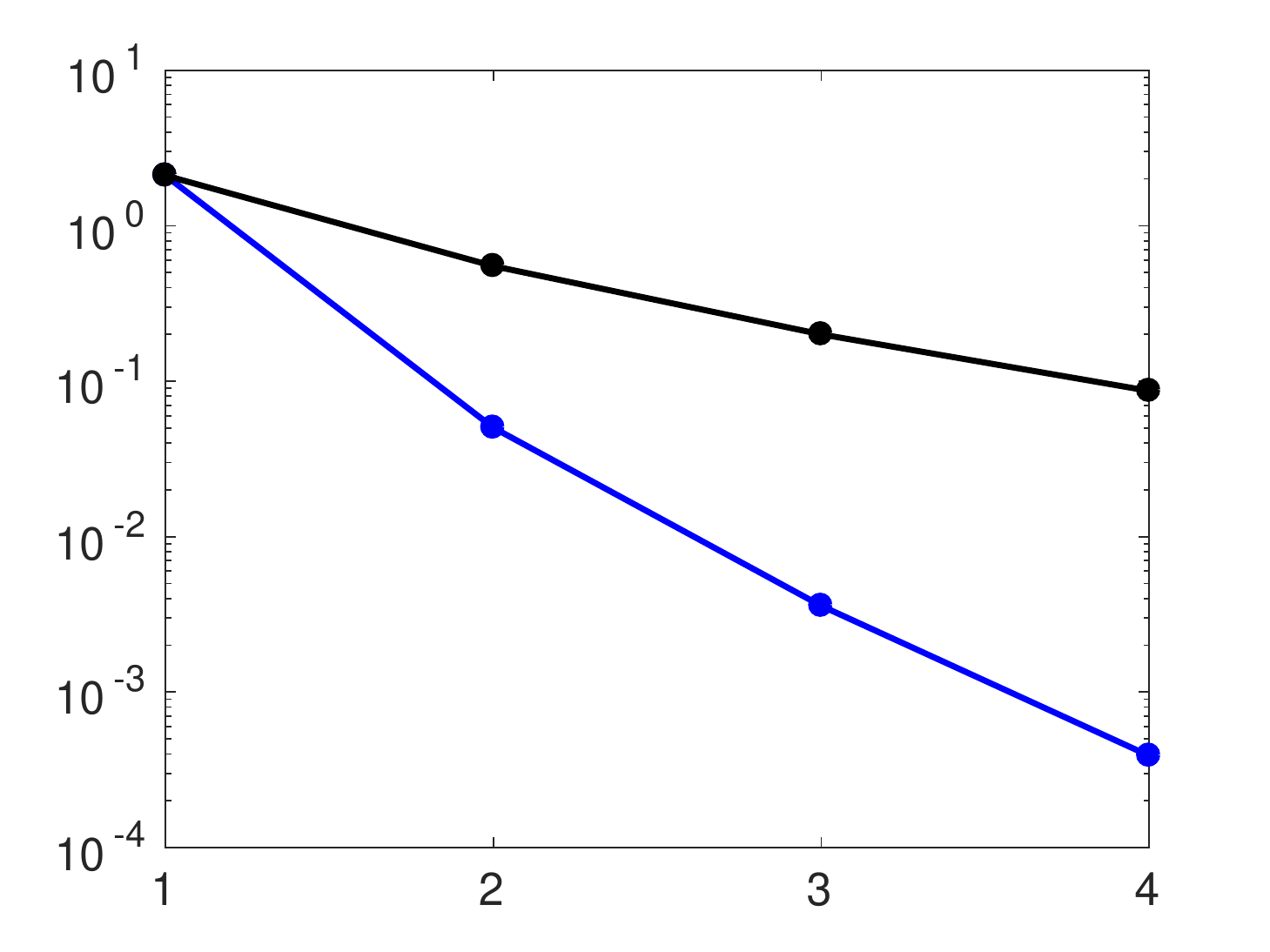}	
\begin{picture}(0,0)
        \put(-245,80){\rotatebox{90} {$\| e(\kdx)\|_{\mathcal{L}_2}^2$}}
        \put(-130,-5){$\ns$}
       \end{picture}
\caption{Optimal spectral error for various stencil size $\ns$,
approximating second derivative. The 
blue line corresponds to the second order 
optimized schemes presented in 
\rfig{second_optim_ad} and \rtab{second_optim_ad}. The 
black line corresponds to the standard schemes with 
stencil size $\ns$ and $2\ns$ order of accuracy.
}
        \label{fig:explicitL2error2}
\end{center}
\end{figure}

\subsubsection{Odd derivatives}
 For $d=2q+1$, $q=\{0,1,\cdots\}$, $e(\kdx)$ becomes
 \begin{align}
 	\nonumber e(\kdx) &= \vo{C}^T(\kdx)\vo{a}_d + j\left(\vo{S}^T(\kdx)\vo{a}_d -(-1)^q \kdx^{d}\right).
 \end{align}  
 
The $\mathcal{L}_2$-norm of the error for this case is
\begin{align}
\|e(\kdx)\|^2_{\mathcal{L}_2} &:= \int_0^\pi \gamma(\kdx) \left[\left(\vo{C}^T(\kdx)\vo{a}_d\right)^2 + \left(\vo{S}^T(\kdx)\vo{a}_d -(-1)^q \kdx^{d}\right)^2\right]d\kdx,\\
&= \vo{a}_d^T\underbrace{\left(\left\langle \gamma(\kdx) \vo{C}(\kdx)\vo{C}^T(\kdx)\right\rangle + \left\langle  \gamma(\kdx) \vo{S}(\kdx)\vo{S}^T(\kdx)\right\rangle\right)}_{\vo{Q}_d}\vo{a}_d -2\vo{a}_d^T\underbrace{\left\langle \gamma(\kdx) (-1)^q\kdx^d\vo{S}(\kdx)\right\rangle}_{\vo{r}_d} + \left\langle  \gamma(\kdx) \kdx^{2d}\right\rangle. \eqnlabel{rd:odd}
\end{align}
The optimization problem is the same as \eqn{cost:a} with $\vo{r}_d$
given by \eqn{rd:odd}, and the optimal solution is given by
\eqn{optimal:Explicit}.

As an example, we consider here again a second-order approximation but
of the first derivative for increasing values of $\ns$. The resulting
schemes are shown in \rfig{first_optim_ad} and their spectral 
behavior is shown in \rfig{first_optim}. 
Similar conclusions to the second derivative example shown above
are observed. As $\ns$ increases the wavenumbers where $\gamma(\kdx)$
is non-zero are increasingly well resolved which is seen 
as $\ksdx$ becoming closer to $\kdx$ in the top panels of \rfig{first_optim}.
We can also see that the real component of the spectral error (middle panels)
is zero. This is due to the following lemma.
\begin{lem}
Finite difference approximations \eqn{explicit}
obtained from \eqn{order_spect}
with $d$ odd, yield a 
spectral error \eqn{e_w} which satisfies $\Re[e(\kdx)] = 0$.
\label{lem:realZero}
\end{lem}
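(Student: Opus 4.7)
The plan is to mirror the strategy that (presumably) drives Lemma~\ref{lem:imagZero}, but exploiting the opposite parity. For odd $d=2q+1$, the real part of the spectral error is $\Re[e(\eta)] = \vo{C}^T(\eta)\vo{a}_d$, so it suffices to show that the optimal coefficient vector $\vo{a}_d^\ast$ is \emph{antisymmetric}, i.e.\ $a_m^\ast = -a_{-m}^\ast$. Once this is established, $\vo{C}^T(\eta)\vo{a}_d^\ast = \sum_m a_m^\ast \cos(m\eta) = 0$ since cosine is even and the coefficients change sign under $m\mapsto -m$.

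To prove antisymmetry, I would introduce the index-flip permutation matrix $\vo{P}$ acting on $\real^{2M+1}$ by $(\vo{P}\vo{v})_m = v_{-m}$, and note the parity identities $\vo{P}\vo{C}(\eta) = \vo{C}(\eta)$ and $\vo{P}\vo{S}(\eta) = -\vo{S}(\eta)$. Then define the candidate $\tilde{\vo{a}} := -\vo{P}\vo{a}_d^\ast$ and verify two things. First, $\tilde{\vo{a}}$ is feasible: computing
\begin{equation*}
\sum_m m^q \tilde{a}_m = -\sum_m m^q a^\ast_{-m} = -(-1)^q \sum_n n^q a^\ast_n,
\end{equation*}
the constraint $\sum_m m^q a_m = d!\,\delta_{q,d}$ is preserved because for $q\ne d$ both sides vanish, while for $q=d$ the sign $-(-1)^d = 1$ reproduces $d!$ precisely because $d$ is odd. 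Second, the objective is invariant: using $\vo{P}^T=\vo{P}$ and the parity identities, $\vo{P}^T\vo{Q}_d\vo{P} = \vo{Q}_d$ (the $\vo{C}\vo{C}^T$ piece is fixed and the two sign flips in $\vo{S}\vo{S}^T$ cancel), and $\vo{P}\vo{r}_d = -\vo{r}_d$ using \eqn{rd:odd}, so the linear term $-2\vo{a}^T\vo{r}_d$ is also preserved under $\vo{a}\mapsto -\vo{P}\vo{a}$.

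Having shown that $\tilde{\vo{a}}$ is feasible and attains the same cost as $\vo{a}_d^\ast$, I would appeal to the uniqueness of the minimizer (already established in the text via the invertibility of the KKT matrix $\vo{K}$ for the strictly convex quadratic program) to conclude $\tilde{\vo{a}} = \vo{a}_d^\ast$, i.e.\ $\vo{a}_d^\ast = -\vo{P}\vo{a}_d^\ast$. Substituting into $\Re[e(\eta)]$ gives $\vo{C}^T(\eta)\vo{a}_d^\ast = -\vo{C}^T(\eta)\vo{P}\vo{a}_d^\ast = -\vo{C}^T(\eta)\vo{a}_d^\ast$, forcing $\Re[e(\eta)] \equiv 0$.

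The routine parts are the bookkeeping with the permutation $\vo{P}$ and the sign check in the feasibility calculation. The only subtle step is the parity of $d$ entering at exactly the right place: for even $d$ (Lemma~\ref{lem:imagZero}) the same construction with $\tilde{\vo{a}} = +\vo{P}\vo{a}_d^\ast$ forces symmetric coefficients and kills the imaginary part, whereas for odd $d$ the sign flip in the $q=d$ constraint is precisely what forces the opposite symmetry. This is also the one place where the argument would break down if the support of $\gamma(\eta)$ or the stencil were asymmetric, so I would flag that the proof implicitly uses the symmetry of the stencil $\{-M,\dots,M\}$ and the fact that the weight $\gamma(\eta)$ is defined on $[0,\pi]$ without any left/right bias.
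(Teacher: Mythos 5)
Your proof is correct, but it takes a genuinely different route from the paper's. The paper decomposes the unknown as $\vo{a}_d=\vo{a}_d^{s}+\vo{a}_d^{as}$ (symmetric plus antisymmetric about the central element), observes that $\vo{C}^T(\kdx)\vo{a}_d$ depends only on $\vo{a}_d^{s}$ and $\vo{S}^T(\kdx)\vo{a}_d$ only on $\vo{a}_d^{as}$, so the two squared integrals in $\|e(\kdx)\|^2_{\mathcal{L}_2}$ decouple into independent minimizations over disjoint variables; the $\vo{C}$-term is then driven identically to zero by $\vo{a}_d^{s}=0$, and feasibility of the order constraint with $\vo{a}_d^{s}=0$ is checked by examining the parity of the columns of $\vo{X}_d$ against the zero pattern of $\vo{y}_d$. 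You instead prove antisymmetry of the minimizer by a symmetrization argument: the feasible set and the objective are invariant under $\vo{a}\mapsto-\vo{P}\vo{a}$, so by uniqueness the minimizer is a fixed point. The two arguments buy slightly different things. The paper's version does not need uniqueness of the minimizer — any minimizer must have $\int\gamma(\kdx)\left(\vo{C}^T(\kdx)\vo{a}_d^{s}\right)^2 d\kdx=0$, which forces $\vo{a}_d^{s}=0$ for positive $\gamma$ on an interval by linear independence of the cosines — whereas yours leans on the nonsingularity of the KKT matrix asserted in the text (which is legitimate, but is an extra hypothesis). On the other hand, your feasibility check via the explicit sign computation $-(-1)^d d!=d!$ is cleaner than the paper's column-parity argument, and it makes transparent exactly where the parity of $d$ selects symmetric versus antisymmetric stencils; the paper only sketches this for the odd case by reference to the even one. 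Your closing observation about where the symmetry of the stencil enters, and that a one-sided bias would break the argument, is also consistent with the paper's own discussion of biased grids and the $\sigma$-weighted error.
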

\begin{proof}
See \ref{sec:app_real_zero}.
\end{proof}

Since $G_1 = e(\kdx)/(j\kdx)+1$, and 
$e(\kdx)$ has only an imaginary component, then $G_1$ is real. 
As in the example before, errors are then expected to be 
\colk{dispersive} 
in nature though the exact nature of the error would depend on
the PDE in which such a scheme is used. 
As before, we found the global error to 
decrease exponentially with $\ns$ as seen in \rfig{first_optim_L2}
and faster than non-optimized schemes.


\begin{figure}[h!]
\includegraphics[trim={3cm 0 3cm 0},clip,width=\textwidth]{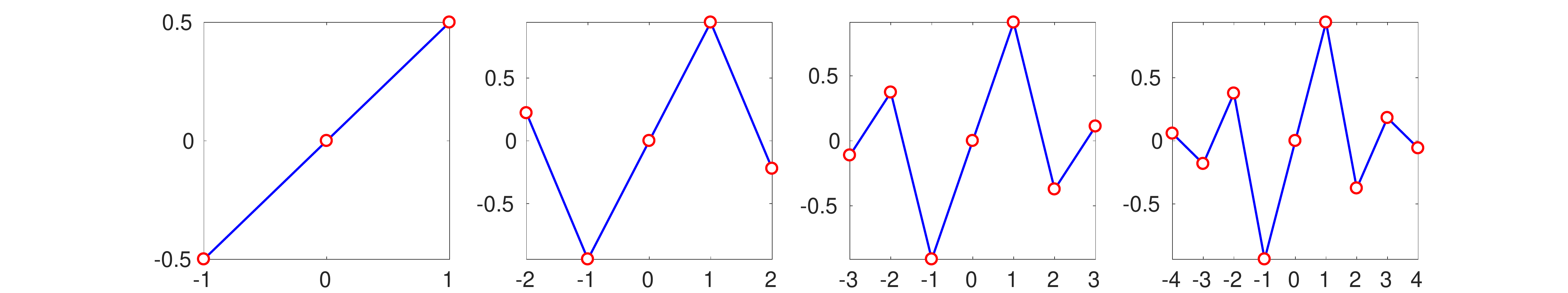}
\begin{picture}(0,0)
        \put(10,60){\rotatebox{0} {$\vo{a}_1^*$}}
        \put(55,0){$-\ns:\ns$}
	\put(165,0){$-\ns:\ns$}
	\put(275,0){$-\ns:\ns$}
	\put(385,0){$-\ns:\ns$}
        \end{picture}

\caption{Optimal coefficients for various stencil size $\ns$, approximating
first derivative with second order accuracy. The coefficients were obtained
using $\gamma(\kdx)=1$, for  $\kdx\in[0,2.5]$; and $\gamma(\kdx) = 0$ otherwise.
}
\label{fig:first_optim_ad}
\end{figure}

\begin{table}[h!]
\begin{center}
\setlength{\tabcolsep}{15pt}
\begin{tabular}{|c|c|c|c|c|}
\hline
     & $\ns=1$ &$\ns=2$ & $\ns=3$& $\ns=4$\\[0.5ex]
\hline 
       $a_1^*$ &  0.50 &0.941502204636976  &0.911624839168511 &  0.939273151104227 \\[0.5ex]
       $a_2^*$ &    &-0.220751102318488  & -0.372951233396604 & -0.376375957228243 \\[0.5ex]
       $a_3^*$ &    & & 0.111425875874899 & 0.182092697439389 \\[0.5ex]
       $a_4^*$ &    & & & -0.058199832241477 \\[0.5ex]
\hline
\end{tabular}
\end{center}
\caption{Numerical coefficients for 
the schemes in \rfig{first_optim_ad}. Note that since the 
resulting schemes are anti-symmetric $(a_{-\ns}^*=-a_{\ns}^*)$
 around the central grid point, 
only coefficients on one side are shown.}
\label{tab:first_optim_ad}
\end{table}


\begin{figure}[h!]
\includegraphics[trim={4cm 0 4cm 0},clip,width=\textwidth]{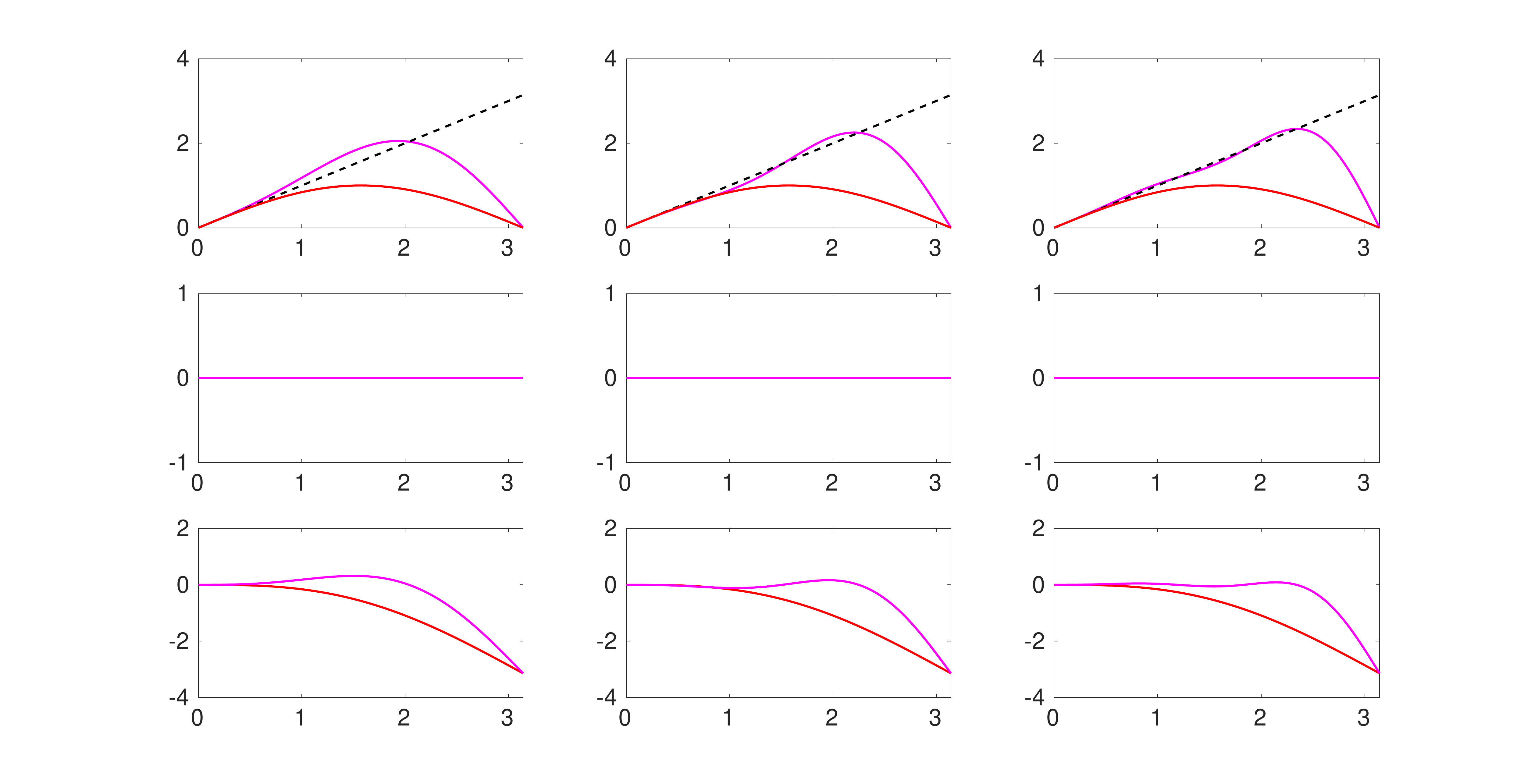}
	\begin{picture}(0,0)
        \put(70,285){$M=2$}
        \put(230,285){$M=3$}
        \put(385,285){$M=4$}
        \put(3,235){\rotatebox{90} {$\ksdx$}}
        \put(3,150){\rotatebox{90} {$\Re[e(\kdx)]$}}
        \put(3,60){\rotatebox{90} {$\Im[e(\kdx)]$}}
        \put(80,25){$\kdx$}
        \put(240,25){$\kdx$}
        \put(395,25){$\kdx$}
        \end{picture}

\caption{
Spectral accuracy for the schemes presented 
in \rfig{first_optim_ad} and \rtab{first_optim_ad}.
Top, middle and bottom rows show the modified wavenumber,
the real part, and the imaginary part of the spectral error, 
respectively.
Left, middle and right columns correspond to $M=2$, $3$, 
and $4$, respectively. Dashed black line: 
exact differentiation ($\ksdx=\kdx$).
Magenta solid line: $\ksdx$ for optimized schemes.
In all plots, red solid line corresponds to 
standard second-order scheme ($M=1$) for 
comparison.}
\label{fig:first_optim}
\end{figure}

\begin{figure}[h!]
\begin{center}
\includegraphics[width=0.5\textwidth]{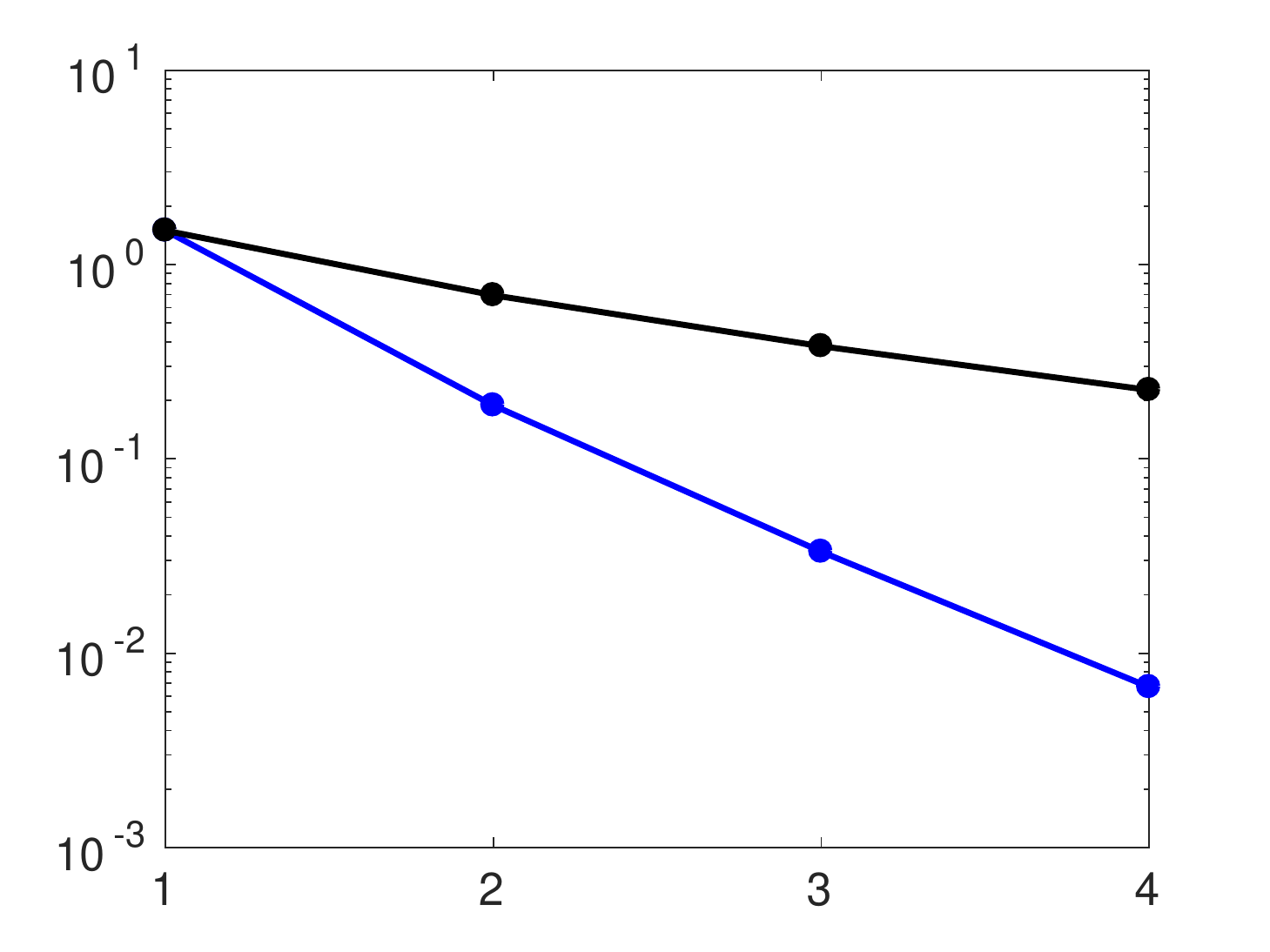}
\begin{picture}(0,0)
        \put(-245,80){\rotatebox{90} {$\| e(\kdx)\|_{\mathcal{L}_2}^2$}}
        \put(-130,-5){$\ns$}
       \end{picture}

\caption{Optimal spectral error for various stencil size $\ns$,
approximating first derivative. The 
blue line corresponds to the second order 
optimized schemes presented in 
\rfig{first_optim_ad} and \rtab{first_optim_ad}. The 
black line corresponds to the standard schemes with 
stencil size $\ns$ and $2\ns$ order of accuracy.
}
\label{fig:first_optim_L2}
\end{center}
\end{figure}

\subsubsection{Effect of $\gamma(\kdx)$} \label{sec:gamma}
An important feature of the framework presented here 
for optimized schemes is that
depending upon the physics of the problem, 
one can choose what scales need to be properly resolved.
By setting $\gamma(\kdx)$ in \eqn{sp_error}
 as unity, equal weight is given to all wavenumbers while 
solving the minimization problem.
The formulation, however, is more general and 
allow us to change $\gamma(\kdx)$ to, for example,
resolve a subset of wavenumbers more accurately 
than others.
To illustrate this, consider an eight-order standard scheme ($M=4$)
for the second derivative
whose modified wavenumber can be readily computed analytically.
The relative error at each wavenumber, 
which with \eqn{modw} can be written as 
$|\ksdx^2-\kdx^2|/\kdx^2 = G_2-1$,
is shown in \rfig{gam} as a black line.
We can see that standard schemes have very low errors at
low $\eta$ but become progressively worse at high $\eta$.
In fact, there are about 15 orders of magnitude difference 
between the error incurred at low and high wavenumbers. 
This situation may present some challenges when these schemes 
are used to resolve multiscale problems where all wavenumbers
contribute to the dynamics.
In contrast, the optimized scheme with $\gamma(\kdx)=1$ in $\eta\in[0,2.5]$
shown in magenta in the figure,
results in a much flatter error in spectral space. 
The oscillatory nature of the error is due to the following. 
The optimized schemes minimize a global measure of the error based on 
the $\mathcal{L}_2$ norm of difference between $\ksdx^d$
and $\kdx^d$. Pointwise however, the resulting scheme produces a $\ksdx$ 
that can be above or below $\kdx$; only the appropriate integral is minimized.
An easy-to-see example of this behavior is seen in \rfig{first_optim} 
for $M=3$ (top-middle panel).
The crossing points between $\ksdx$ and $\kdx$ leads then 
to zero error which correspond to the down peaks in \rfig{gam} for 
the corresponding scheme.

\begin{figure}[h]
 \centering
 \subfigure{\includegraphics[width=0.49\textwidth]{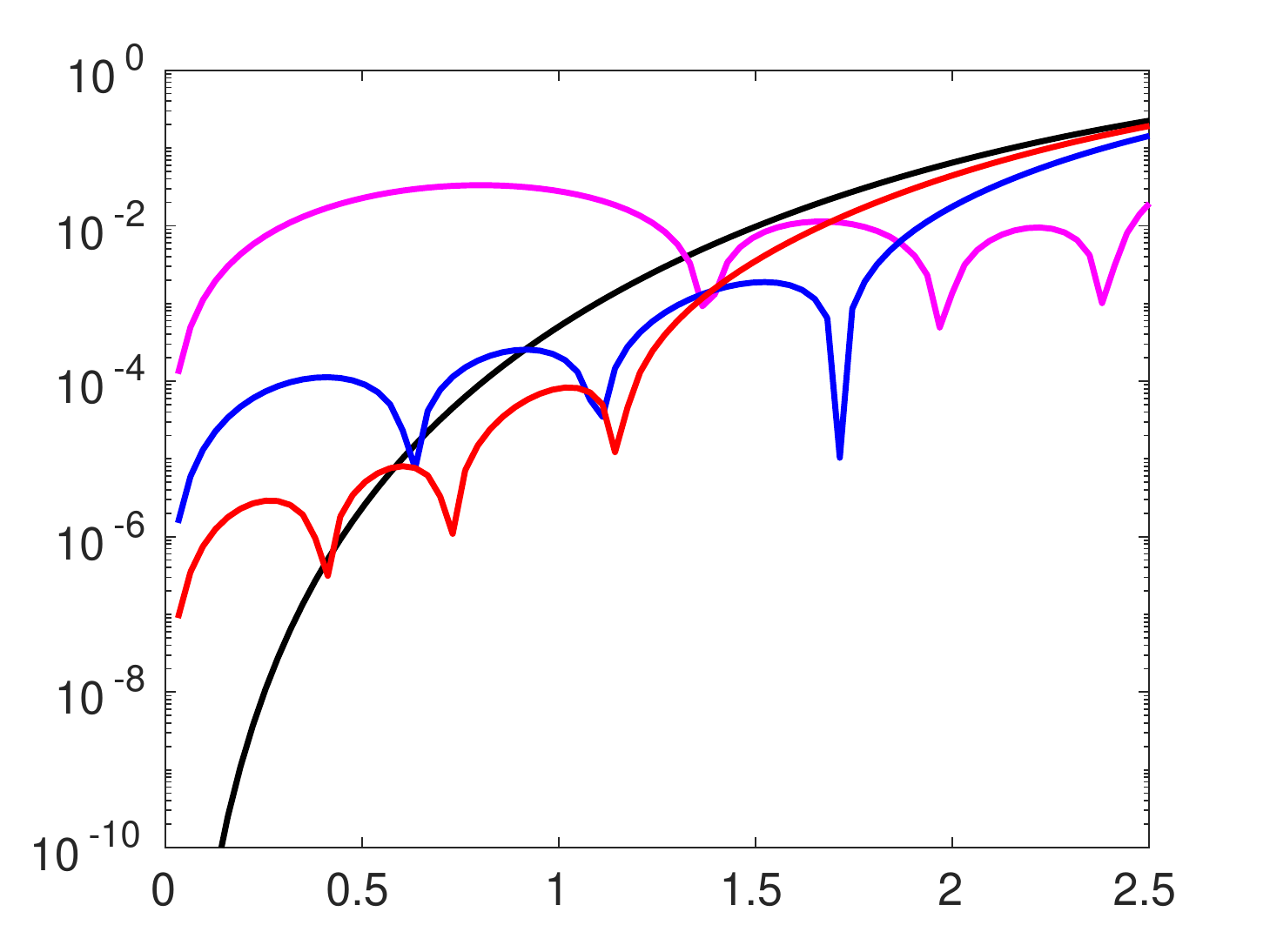}}
 \begin{picture}(0,0)
       \put(-121,-4){$\kdx$}
       \put(-240,74){\rotatebox{90} {$\frac{|\ksdx^2-\kdx^2|}{\kdx^2}$}}
 \end{picture}
 \caption{
 \figlabel{gam}
	Spectral error for various $\gamma(\kdx)$ for
	second derivative. Standard eighth order scheme (black),
	 $\gamma(\kdx) = 1$ (magenta), $\gamma(\kdx) = exp(-\kdx/0.1)$ (blue) and 
	 $\gamma(\kdx) = exp(-\kdx/0.06)$ (red) for $\kdx \in [0,2.5]$
	 and $0$ otherwise. }
\end{figure}

The other curves in \Fig{gam} illustrate the effect of changing $\gamma(\kdx)$ 
on the spectral error. For this illustration we use $\gamma(\eta)=e^{-\eta/\psi}$
where $\psi$ is a constant that 
controls how quickly the weighting function drops to zero near
the origin.
A similar form has been used in some applications 
\cite{MTWW2006, PONZ2003} though for 
some $\psi$.
As $\gamma(\kdx)$ becomes steeper (small $\psi$), the error for lower 
and higher wavenumbers decreases and increases, respectively. 
The optimized schemes then get closer to the strongly non-uniform 
distribution of error in spectral space of standard schemes.

Clearly one can use, for example, 
a banded $\gamma(\kdx)$, such that it is non-zero only for some intermmediate 
wavenumbers. Or one may need to resolve to distinct bands in wavenumber 
space which can easily be accommodated by an appropriate choice of 
$\gamma(\kdx)$. 
The choice of $\gamma(\kdx)$ 
provides a high flexibility to select how the available 
information (the value of the function at $S$ grid points)
to resolve only the scales relevant to the problem being solved.

\subsection{Stability}

As discussed in the introduction, it is common to first 
choose a scheme of given order and then
verify that spectral accuracy is acceptable. That scheme is
then used to discretize a PDE. However, before this 
numerical arrangement can be utilized, one needs to 
determine whether the fully discretized PDE is stable.
Different methods to assess stability of a given 
discretized equation have been discussed 
extensively \cite{hirsch.I}.
Our objective here is not simply to determine whether the schemes developed
above are stable or not. Rather is to incorporate the stability
requirement into a unified formulation.
This framework would thus 
provide, for a given $\ns$, a stable scheme of given order 
with the best spectral resolution possible for wavenumbers
of interest. 

For this, consider the general linear partial differential equation
\begin{align}
{\partial f\over\partial t} =
\sum_{d=1}^D \beta_d{\partial^d f\over \partial x^d}. \eqnlabel{pde}
\end{align}
discretized over the entire domain with 
$N:= 2\ns_\text{max}+1$ grid points.
As before, spatial derivatives are approximated using 
a stencil of size $S=2\ns+1$, where $\ns$ can take 
values from $1$ to $\ns_\text{max}$.
The
$d^\text{th}$ derivative at the $i^\text{th}$ grid point is
parameterized by
$\vo{a}_{i,d}\in\real^{S}$. Define $\vo{A}_d\in\real^{N\times S}$ to be
the vertical stacking of $\vo{a}^T_{i,d}$, i.e.
\begin{align}
	 \vo{A}_d := \begin{bmatrix}{\vo{a}}_{1,d}^T \\ \vdots \\
	 {\vo{a}}_{N,d}^T \end{bmatrix}.
	 \eqnlabel{Ad}
\end{align}
The order accuracy constraint \eqn{exp:orderConstr} 
for every $\vo{a}_{i,d}$, can then be
compactly written as 
\begin{align}
\vo{A}_d \vo{X}_d = \vo{Y}_d,\eqnlabel{domainOrder}
\end{align}
for $d=1,\cdots,D$; and $\vo{Y}_d:=\vo{1}_{N\times 1}\otimes \vo{y}_d$.

The cost function to be minimized is
the sum of the spectral error \eqn{sp_error} at all locations $i$:
\begin{align}
\sum_{i=1}^{N} \Ltwo{e_i(\kdx)}^2 &:=
\sum_{i=1}^{N}\sum_{d=1}^{D}\left(\vo{a}_{i,d}^T\vo{Q}_d\vo{a}_{i,d}
-2\vo{a}^T_{i,d}\vo{r}_d \right) + N\left\langle \kdx^{2d}\right\rangle, \nonumber\\
& = \sum_{d=1}^{D} \vo{v}_d^T(\vo{I}_N\otimes\vo{Q}_d)\vo{v}_d - 2\vo{v}_d^T(\vo{1}_{N\times 1}\otimes \vo{r}_d),
\eqnlabel{domain_ew}
\end{align}
where $\vo{v}_d:=\vo{vec}(\Ad^T)$, and $\vo{vec}(\cdots)$ vectorizes a
matrix by vertically stacking the columns.
The minimization of \eqn{domain_ew} subjected to 
\eqn{domainOrder} leads to optimal schemes with a given
order of accuracy. To include stability in the formulation,
we first define the vectors
\begin{align}
\vo{F} := \begin{pmatrix}
 	f_1 \\ \vdots \\ f_N
 \end{pmatrix}, 
 \text{ and } 
 \vo{F}^{(d)} := \begin{pmatrix}
 	f_1^{(d)} \\ \vdots \\ 
 	f_N^{(d)} 
 \end{pmatrix}. 	
\end{align}

Using these definitions, the finite-difference approximation for 
\eqn{pde} for all the grid points can be written compactly as 
\begin{align}
	\vo{F}^{(d)} =  \frac{1}{(\dx)^d} \Adp \vo{F},
	\eqnlabel{fdDomain}
\end{align}
where the matrix $\Adp$ contains the unknown coefficients
$\vo{a}_{i,d}$ 
arranged appropriately for the 
correct computation of the $d^\text{th}$ derivative at $i^\text{th}$
location. 
The matrix $\Adp$ can be constructed using shift operators the 
details of which are
included in  \ref{sec:app_adp}. The important element here
is that the spatial derivative is a linear operator acting on
the value of the function $f$ at all grid points.
\subsubsection{Stability of Semi-discrete Scheme}
With the spatial discretization from \eqn{fdDomain}, 
\eqn{pde} 
can be written as
 \begin{align}
 \dot{\vo{F}} = \left(\sum_{d=1}^{D} \frac{1}{(\Delta x)^d}\beta_d
 \Adp\right)\vo{F}, \eqnlabel{ode}
 \end{align}
 whose analytical solution can be readily obtained as
 \begin{align}
\vo{F}(t) := \exp \left(\sum_{d=1}^{D} \frac{1}{(\Delta x)^d}\beta_d t
\Adp\right) \vo{F}_0, \eqnlabel{sd:soln}
 \end{align}
where $\vo{F}_0$ is the initial condition. 
Clearly the stability of \eqn{sd:soln} is governed by the coefficients
$\Adp$. 

In order to understand the conditions under which the scheme defined by
$\Adp$ is stable 
consider a single Fourier mode as in \eqn{singleFmode}.
The approximate $d^\text{th}$ derivative can be conveniently  written 
in terms of the modified wavenumber 
as $(j\mw)^d \hat{f}$ and the original PDE becomes
\begin{equation}
{d\hat{f}\over dt} = \sum_{d=1}^D \beta_d(j\mw)^d \hat{f} .
\end{equation}
The solution, with $\hat{f}_0$ being the value of $\hat{f}$ at $t=0$, 
is given by
\begin{equation}
{\hat{f} \over \hat{f}_0}= \exp \left[ \sum_{d=1}^D \beta_d(j\mw)^d t \right] .
\end{equation}
If the solution to the original PDE is non-increasing in time,
then the discretization is considered stable if no Fourier mode
grows in time. This is clearly satisfied if 
\begin{equation}
\Re\left\{ \sum_{d=1}^D \beta_d(j\mw)^d \right\} \le 0
\eqnlabel{realStability}
\end{equation}

From lemma \ref{lem:imagZero} and \ref{lem:realZero}, 
we can conclude that
\begin{align*}
(j\ksdx)^d = (\vo{C}^T(\kdx) + j\vo{S}^T(\kdx))\ad = \left\{\begin{array}{r} j\vo{S}^T(\kdx)\ad, \text{ for odd derivative},\\
\vo{C}(\kdx)^T\ad, \text{ for even derivative},\end{array}\right.
\end{align*}
because $\vo{C}^T(\kdx)\ad = 0$ for odd derivatives and $\vo{S}^T(\kdx)\ad =
0$ for even derivatives. 
Then,
\begin{align}
 \sum_{d=1}^D \beta_d(j\mw)^d   &= j\beta_1\vo{S}^T(\kdx)\vo{a}_1 + \beta_2\vo{C}^T(\kdx)\vo{a}_2 + j\beta_3\vo{S}^T(\kdx)\vo{a}_3 + \beta_4\vo{C}^T(\kdx)\vo{a}_4 + \cdots,\nonumber\\
 & = \left(  \beta_2\vo{C}^T(\kdx)\vo{a}_2 + \beta_4\vo{C}^T(\kdx)\vo{a}_4 + \cdots \right) + j\left(\beta_1\vo{S}^T(\kdx)\vo{a}_1 + \beta_3\vo{S}^T(\kdx)\vo{a}_3 + \cdots \right).
\end{align}
Therefore, \eqn{realStability} implies
\begin{equation}
  \beta_2\vo{C}^T(\kdx)\vo{a}_2 + \beta_4\vo{C}^T(\kdx)\vo{a}_4 + \cdots \leq 0
\eqnlabel{realStability1}
\end{equation}
which provides a general constraint to assure stability of the 
semidiscrete system.


Clearly stability for the semi-discrete system depends only on the even derivatives. 
In that case $d$ can be written as $2q$ for $q=1,2,\cdots$,
and $(j\ksdx)^{d} =
(-1)^q\ksdx^{2q}$. Since the optimization of spectral error guarantees
$(j\ksdx)^{d} = \vo{C}^T(\kdx)\vo{a}_{2q}$, we have
$\vo{C}^T(\kdx)\vo{a}_{2q}
= (-1)^q\ksdx^{2q}$. Consequently, the sign of $\vo{C}^T(\kdx)\vo{a}_{2q}$
alternates with $q$, i.e.
\begin{align*}
\text{for } q = 1, \;&    \vo{C}^T(\kdx)\vo{a}_2 = -\ksdx^2 <= 0, \\
\text{for } q = 2, \;&    \vo{C}^T(\kdx)\vo{a}_4 = \ksdx^4 >= 0, \\
\text{for } q = 3, \;&    \vo{C}^T(\kdx)\vo{a}_6 = -\ksdx^6 <= 0, 
\end{align*}
and so on. Therefore, if the coefficients of the original PDE can be 
written as $\beta_{2q} := (-1)^{q+1}\gamma_{2q}^2$ for some
$\gamma_{2q}$, then 
\eqn{realStability1} is implicitly satisfied as it reduces to a sum of negative numbers.
In general, if $\beta_{2q}$ does not have sign as required by $(-1)^{q+1}\gamma_{2q}^2$, 
then, while that particular derivative is unstable, 
the system is still stable if 
\begin{equation}
-\beta_2 \ksdx^2 +\beta_4 \ksdx^4 -\beta_6 \ksdx^6 + \cdots \le 0.
\end{equation}

\subsubsection{Stability of Fully-Discrete Scheme}
While semi-discrete analyses provide some information about
the temporal behavior of the discretized spatial derivatives
in the PDEs, one is ultimately interested in the stability 
of the fully discretized system.
For concreteness assume that the time discretization of \eqn{pde}
is done using a \textit{forward difference}. With 
\eqn{fdDomain} for the spatial discretization 
we write \eqn{pde} as:

\begin{align}
	\vo{F}^{k+1} = \left(\vo{I}_{N} +  \sum_d \frac{\Delta t}{(\Delta x)^d}\beta_d \Adp\right)\vo{F}^k.	
	\eqnlabel{discDyn}
\end{align}
This is a linear discrete-time system in $\vo{F}$, where the system
matrix is linearly dependent on the stencil coefficients $\vo{A}_d$,
discretization parameters $\dt$ and $\dx$, 
and the coefficients $\beta_d$. In what follows 
we assume $\dx$ and $\beta_d$ are
given which is a typical situation in simulations of physical systems.
The objective then is to determine $\dt$ and $\vo{A}_d$ so that stability 
is achieved subjected to specifications in order of accuracy and spectral 
resolution.

In general,
stability is guaranteed if the spectral radius of the evolution 
matrix is bounded by unity
\cite{hirsch.I}:
\begin{equation}
\lambda_\text{max} \left(\vo{I}_{N} +  \sum_d
\frac{\Delta t}{(\Delta x)^d}\beta_d \Adp\right)\leq 1 .
\end{equation}
Because the spectral radius is bounded by matrix norms,
we guarantee stability by bounding the $2$-norm, i.e.
\begin{equation}
\left\|\vo{I}_{N} +  \sum_d \frac{\Delta t}{(\Delta x)^d}\beta_d
\Adp\right\|_2 \leq 1 .
\eqnlabel{2norm_stab}
\end{equation}
Using Schur complements, this inequality is equivalent to
\begin{align}
\begin{bmatrix} \vo{I}_{N} & \left(\vo{I}_{N} +  \sum_d \frac{\Delta t}{(\Delta x)^d}\beta_d \Adp\right)^T\\
\left(\vo{I} +  \sum_d \frac{\Delta t}{(\Delta x)^d}\beta_d \Adp\right) & \vo{I}_{N}
\end{bmatrix} \geq 0. \eqnlabel{explicit:Stability}
\end{align}
This (inequality) constraint coupled with the order of accuracy (equality)
constraint \eqn{domainOrder} completes the set of constraints for the 
optimization problem aimed at minimizing the spectral error given by 
\eqn{domain_ew}.

Unfortunately, 
the matrix inequality in \eqn{explicit:Stability} has products of $\dt$ and
$\vo{A}_d$, which makes the problem non convex. 
Thus, more general techniques than convex optimization have to be employed 
to solve the problem.
Here we present two approaches. 
In the first approach, we determine $\vo{A}_d$ analytically that
minimizes the spectral error and then maximize $\dt$ for which stability is
achieved. In the second approach, we assume a value for $\dt$ and determine
$\vo{A}_d$ which minimizes spectral error and guarantees stability. Both these
approaches are discussed in detail below.

\subsubsection{Given spectrally optimal $\vo{A}_d$, maximize $\dt$ and
guarantee stability simultaneously}\label{sec:approach1}
In this approach, we solve for the optimal $\vo{a}_{i,d}$ analytically
that,
for a given order, minimizes the spectral error and then maximize $\dt$
for which stability is guaranteed. 
Optimization of the spectral error with
given order of accuracy is given by previous formulation as
\begin{align*}
&\min_{\{\vo{A}_d\}_{d=1}^{D}}  \;\; \sum_{d=1}^{D} \vo{v}_d^T(\vo{I}_N\otimes\vo{Q}_d)\vo{v}_d - 2\vo{v}_d^T(\vo{1}_{N\times 1}\otimes \vo{r}_d),
\end{align*}
subject to
\begin{align*}
&\vo{A}_d\vo{X}_d  = \vo{Y}_d, \text{ for } d = 1,\cdots,D,
\end{align*}
where $\vo{v}_d:=\vo{vec}(\Ad^T)$.  

We observe that the cost and constraint functions are separable with respect to $d$, and thus can be independently optimized using,
\begin{align}
&\min_{\vo{A}_d}  \;\;  \sum_{d=1}^{D} \vo{v}_d^T(\vo{I}_N\otimes\vo{Q}_d)\vo{v}_d - 2\vo{v}_d^T(\vo{1}_{N\times 1}\otimes \vo{r}_d), \text{ subject to } \vo{A}_d\vo{X}_d  = \vo{Y}_d, \eqnlabel{explicit:optimD}
\end{align}
for $d = 1,\cdots,D$. The linear constraint $\Ad\vo{X}_d = \vo{Y}_d$ can be written as
$$
\left(\vo{I}_{N}\otimes \vo{X}_d^T\right)\vo{v}_d = \vo{Y}_d^T.
$$

Therefore, the optimization problem in \eqn{explicit:optimD} can be written
as
\begin{align}
\min_{\vo{v}_d}\;\; 
& 
  \vo{v}_d^T\left(\vo{I}_{N}\otimes \vo{Q}_d\right)\vo{v}_d -
  2\vo{v}_d^T\left(\vo{1}_{N\times 1}\otimes \vo{r}_d\right), 
  \\
 & \text{ subject to }
\left(\vo{I}_{N}\otimes \vo{X}_d^T\right)\vo{v}_d =  \vo{Y}_d^T ,
\end{align}
which has the analytical solution
\begin{align}
\begin{pmatrix}
\vo{v}_d\\\boldsymbol{\Lambda}_d 
\end{pmatrix}^\ast = \begin{bmatrix} 
\vo{I}_{N}\otimes \vo{Q}_d & \vo{I}_{N}\otimes \vo{X}_d\\
\vo{I}_{N}\otimes \vo{X}_d^T\ & \vo{I}_{N}\otimes\vo{0}
\end{bmatrix}^{-1}
\begin{pmatrix}
\vo{1}_{N\times 1}\otimes \vo{r}_d \\ \vo{Y}_d^T
\end{pmatrix},
\eqnlabel{ana_sol_approach1}
\end{align}
where $\boldsymbol{\Lambda}_d$ is the vector of Lagrange multipliers
associated with the constraints. 
Finally, we can recover $\vo{A}_d^\ast$,
the coefficients of 
the scheme that minimizes the spectral error and is stable,
from $\vo{v}_d^\ast$.
While intuitively it is clear that this solution should be 
the same at all grid points in the domain, here we present,
for completion, a proof as:

\begin{lem} \label{lem_invariance}
Optimal explicit symmetrical finite-difference approximation is invariant
of the grid point location in the domain.
\end{lem}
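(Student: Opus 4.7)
The plan is to exploit the block structure produced by the Kronecker products in both the cost function and the equality constraints, and to show that the joint optimization over all $N$ stencils decouples into $N$ identical single-stencil problems, each of which was already solved uniquely earlier in the paper.

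First I would expand the cost. Since $\vo{I}_{N}\otimes\vo{Q}_d$ is block diagonal with $\vo{Q}_d$ on each $S\times S$ block, and $\vo{v}_d$ is the vertical stack of $\vo{a}_{i,d}$ across $i=1,\dots,N$, the quadratic term collapses to $\sum_{i=1}^{N}\vo{a}_{i,d}^T\vo{Q}_d\vo{a}_{i,d}$. Likewise, $\vo{1}_{N\times 1}\otimes\vo{r}_d$ pairs each $\vo{a}_{i,d}$ with the same $\vo{r}_d$, giving $\sum_{i=1}^{N}\vo{a}_{i,d}^T\vo{r}_d$. Thus the full objective is a sum over $i$ of structurally identical single-stencil costs, with no cross terms that couple different grid points.

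Next I would handle the constraint $(\vo{I}_{N}\otimes\vo{X}_d^T)\vo{v}_d=\vo{Y}_d^T$ with $\vo{Y}_d^T=\vo{1}_{N\times 1}\otimes\vo{y}_d^T$. The block-diagonal action of $\vo{I}_{N}\otimes\vo{X}_d^T$ reduces this to the $N$ separate constraints $\vo{X}_d^T\vo{a}_{i,d}=\vo{y}_d^T$, i.e.\ exactly \eqn{exp:orderConstr} for every interior grid point. Combined with the decoupled cost, the joint problem \eqn{explicit:optimD} therefore splits into $N$ uncoupled subproblems, each of which is precisely the single-stencil quadratic program \eqn{cost:a}.

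The last step is to invoke the uniqueness already established for that subproblem: since $\vo{X}_d^T$ has full row rank, the KKT matrix $\vo{K}$ in \eqn{KKT1} is nonsingular, so \eqn{optimal:Explicit} yields a unique minimizer $\vo{a}_d^{\ast}$. Each of the $N$ subproblems has the same data $(\vo{Q}_d,\vo{X}_d,\vo{r}_d,\vo{y}_d)$ and hence the same unique solution, giving $\vo{a}_{i,d}^{\ast}=\vo{a}_d^{\ast}$ for every $i$. Equivalently, stacking these identical rows produces $\vo{A}_d^{\ast}=\vo{1}_{N\times 1}\otimes(\vo{a}_d^{\ast})^T$, which is the invariance claimed.

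There is essentially no serious obstacle; the only subtlety is making the Kronecker algebra explicit and keeping track of transpositions. The hypothesis of \emph{symmetrical} stencils is important here: it guarantees that every grid point admits a stencil of the same size $S=2M+1$ centered on it, so the data $(\vo{Q}_d,\vo{X}_d,\vo{r}_d,\vo{y}_d)$ are genuinely identical across $i$. The lemma accordingly concerns interior points; boundary points, which require biased stencils, fall outside its scope and would break the symmetry assumption that drives the decoupling.
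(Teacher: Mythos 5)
Your proof is correct, and it reaches the paper's conclusion by a slightly different route. The paper works directly on the stacked KKT system \eqn{ana_sol_approach1}: it invokes the partitioned-matrix inverse formula together with the mixed-product property $(\vo{A}\otimes\vo{B})(\vo{C}\otimes\vo{D})=\vo{AC}\otimes\vo{BD}$ to show that the inverse of the big KKT matrix is itself of the form $\vo{I}_N\otimes(\cdot)$ blockwise, so that multiplying by the right-hand side $\vo{1}_{N\times1}\otimes\vo{r}_d$, $\vo{1}_{N\times1}\otimes\vo{y}_d^T$ yields $\vo{1}_{N\times1}\otimes\vo{a}_d^\ast$ explicitly. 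You instead argue at the level of the optimization problem itself: the cost and the constraints are both block-separable across grid points with identical data $(\vo{Q}_d,\vo{X}_d,\vo{r}_d,\vo{y}_d)$ in every block, so the joint program splits into $N$ copies of \eqn{cost:a}, and uniqueness of each copy's minimizer (nonsingularity of $\vo{K}$ in \eqn{KKT1}, which the paper has already established) forces all $\vo{a}_{i,d}^\ast$ to coincide. The two arguments encode the same structural fact; yours is more conceptual and avoids the explicit block-inverse computation, at the cost of having to lean on uniqueness of the single-stencil solution rather than exhibiting the stacked solution directly. One small caveat: the separability-plus-uniqueness step implicitly uses that a sum of independent convex subproblems is minimized exactly when each subproblem is minimized, which holds here because each subproblem attains its (unique) minimum; it is worth stating that explicitly. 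Your closing remark about periodicity is apt — the paper's construction of $\Adp$ via cyclic shift operators assumes a periodic domain, which is precisely what makes the data identical at every grid point, and the paper itself notes the lemma's failure near non-periodic boundaries.
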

\begin{proof}
See \ref{sec:app_invariance}.
\end{proof}

From Lemma \ref{lem_invariance}, we write the complete system as 
a stacking of local solutions:
\begin{align}
\vo{A}_d^\ast = \vo{1}_{N\times 1}\otimes {\vo{a}^{\ast}_d}^T. 
\eqnlabel{explicit:Adstar}
\end{align}
Therefore, the maximum $\dt$ for which stability is guaranteed is
obtained by solving the optimization problem \eqn{explicit:Stability},
that is
\begin{equation}
\max_{\dt} \text{ subject to } 
 \begin{bmatrix} \vo{I}_{N} & \left(\vo{I} +  \sum_d \frac{\Delta t}{(\Delta x)^d}\beta_d \Adp\right)^T\\
\left(\vo{I} +  \sum_d \frac{\Delta t}{(\Delta x)^d}\beta_d \Adp\right) & \vo{I}_{N}
\end{bmatrix} \geq 0, \eqnlabel{explicit:maxTime}
\end{equation}
where $\Adp$ is determined using \eqn{linearAd} and $\vo{A}_d^\ast$ from
\eqn{explicit:Adstar}. The optimization in \eqn{explicit:maxTime} is a
convex optimization problem in $\dt$ and can be efficiently solved
numerically using software such as \texttt{cvx}\cite{grant2008cvx}.
This formulation can then be used to obtain stability limits
of the fully-discrete system.

As an example, consider an advection-diffusion equation, that 
is \eqn{pde} with $D=2$.
For given values of $\beta_1$, $\beta_2>0$, and $\dx$, one can solve
the optimization problem to obtain the largest $\dt$ for 
which a previously obtained optimal $\vo{A}_d^\ast$ remain stable.
Note that the stability of the scheme will depend only 
on the non-dimensional
parameters $\dt\beta_d/\dx^d$. For $D=2$, these are commonly 
called convective and diffusive CFL, that is, $r_c:=\dt\beta_1/\dx$ and 
$r_d:=\dt\beta_2/\dx^2$, respectively.
Stability regions can then be obtained by 
e.g.\ sweeping values of $\dx$ and plot results in 
terms of $r_c$ and $r_d$.

This is what we show in 
\Fig{explicit:rc_rd} for the advection-diffusion equation for 
different values of $\ns$. 
For $\ns=1$ the numerically evaluated 
stability region is the same as the analytical form found 
in textbooks for the standard second-order 3-point stencil
for first and second derivatives, namely
$r_c^2\le 2 r_d\le 1$ \citep{hirsch.I}.
As $\ns$ is increased
the stability region becomes smaller.
This is also generally consistent with standard schemes for which
as the order (and stencil size) 
increases the stability region shrinks. In both cases,
this is related to the decreasing dissipation at high wavenumbers
as spectral resolution improves which could trigger instabilities.

In summary, the formulation here
allows us to obtain, for optimal schemes of given order and
maximum spectral resolution, 
the largest step size which guarantees stability 
or, more generally, regions of stability.

\begin{figure}[h!]
\begin{center}
\includegraphics[width=0.5\textwidth]{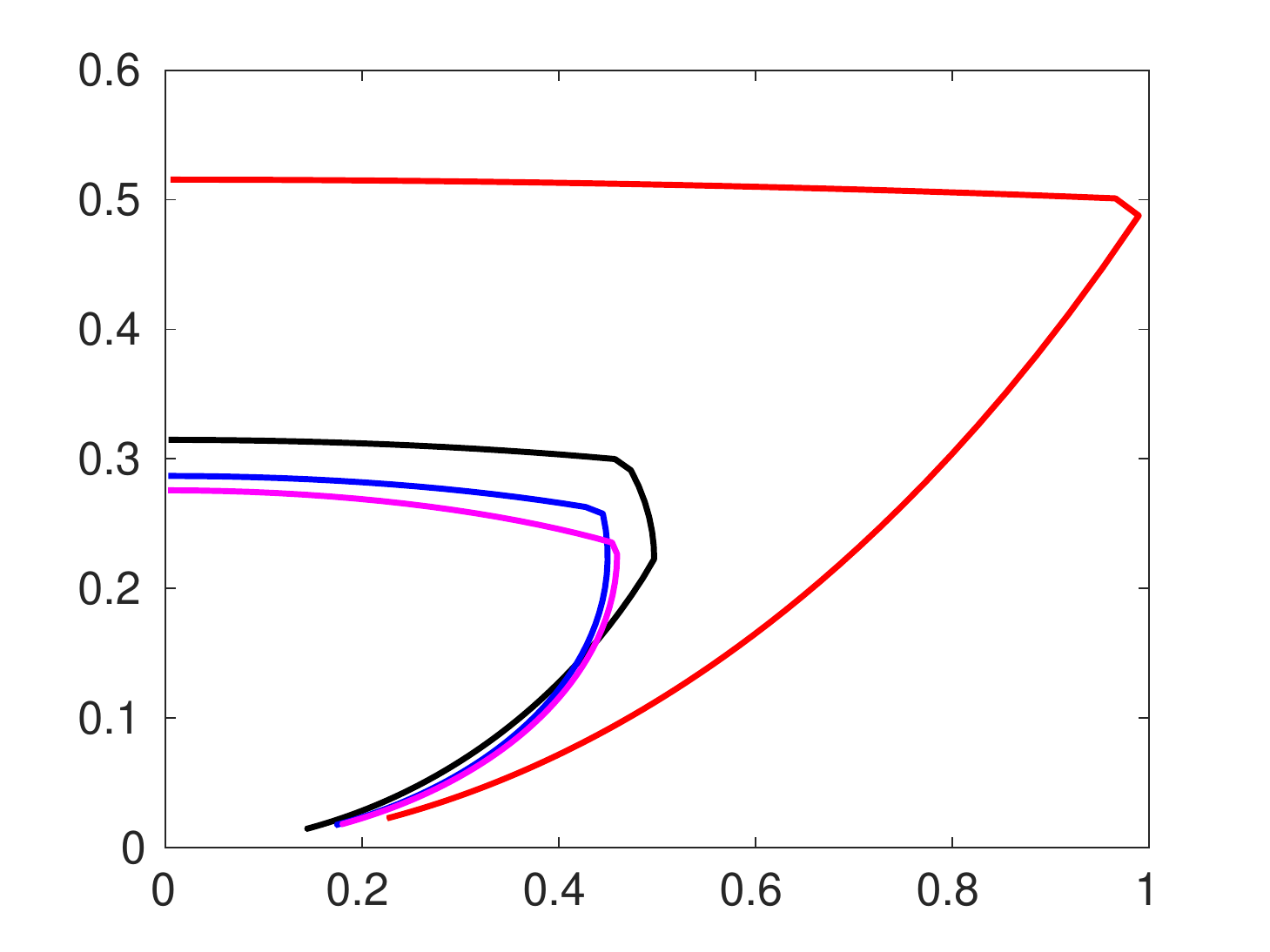}
        \begin{picture}(0,0)
        \put(-240,90){\rotatebox{90} {$r_d$}}
        \put(-120,0){$r_c$}
       \end{picture}
\caption{Stability region for the advection-diffusion 
equation with optimal coefficients 
$\vo{a}_1^*$ and $\vo{a}_2^*$ for different stencil sizes: $\ns=1$ 
(red), $\ns=2$ (black), $\ns=3$ (blue), $\ns=4$ (magenta).
}
\figlabel{explicit:rc_rd}
\end{center}
\end{figure}
\subsubsection{Given $\dt$, optimize spectral error and guarantee
stability simultanously}\label{sec:approach2}
In this approach, for a given $\dt$ 
we solve for $\vo{A}_d$ that simultaneously minimizes spectral error,
achieves given order accuracy, and guarantees stability. 
\colk{This approach is of practical relevance as the time-step $\dt$ 
would depend 
on the fastest physical process in the problem. For example, when 
chemical reactions are present in a flow, the 
Damkohler number (the ratio of flow time scales to chemistry time scales) 
can be very high which means the time 
scales of reactions are considerably smaller than flow time scales
which forces $\dt$ 
to be small for an accurate solution.}

The formulation here is done by combining the results in previous sections
to write the following optimization problem,

\begin{equation}\left.
\begin{aligned}
&\min_{\{\vo{A}_d\}_{d=1}^{D}}  \;\; \sum_{d=1}^D  \sum_{i=1}^{N}\left(\vo{a}_{i,d}^T\vo{Q}_d\vo{a}_{i,d} -2\vo{a}^T_{i,d}\vo{r}_d\right), \\
\text{subject to} &\\
&\vo{A}_d\vo{X}_d  = \vo{Y}_d, \text{ for } d = 1,\cdots,D;\\[2mm]
& \begin{bmatrix} \vo{I}_{N} & \left(\vo{I} +  \sum_d \frac{\Delta t}{(\Delta x)^d}\beta_d \Adp\right)^T\\
\left(\vo{I} +  \sum_d \frac{\Delta t}{(\Delta x)^d}\beta_d \Adp\right) & \vo{I}_{N}
\end{bmatrix} \geq 0.	 
\end{aligned}\;\;\right\} \eqnlabel{explicitSynth}
\end{equation}
Equation \eqn{explicitSynth} is a convex optimization problem
\cite{ben2001lectures} in variables ${\{\vo{A}_d\}_{d=1}^{D}}$, with
$\vo{A}_d\in\real^{N\times S}$, and can be efficiently solved numerically
using \texttt{cvx} \cite{grant2008cvx}. The time step $\dt$ can be
maximized by
iteratively solving \eqn{explicitSynth} with increasing $\dt$ until the
problem
becomes unfeasible. Since the spectra of a matrix is continuous in terms of the
elements, the maximum $\dt$ can be  determined using a bisection algorithm
\cite{stoer2013introduction}.

In \Fig{explicit:Err_Stab} we show the result of such a 
computation again for an advection-diffusion equation discretized with 
second-order approximations.
In particular, we show contours of the sum of the 
spectral error $\|e(\kdx)\|^2_{\mathcal{L}_2}$ in the first and second derivatives
for different values of $M$ normalized by the value for $M=1$. 
Colored areas represent regions where schemes 
are stable. For $M=1$, the stable region is well known as
discussed above and presents values of 1.0 due to the normalization
chosen. 
Interestingly, as $M$ increases the stability area
increases instead of decreasing as in \Fig{explicit:rc_rd}.
This illustrates an interesting aspect of the formulation. 
By providing additional degrees of freedom (bigger stencil size) but
fixing the formal order of accuracy, \eqn{explicitSynth} 
minimizes spectral error with the condition that the scheme be stable
when some fixed $\dt$ is used.
As can be seen, for $M=4$, schemes can remain stable for very large 
CFL numbers (an order of magnitude larger than for standard schemes)
though with larger spectral error.
In other words, the framework allows one to trade 
off error with stability.
If one is interested in stationary states, for example, one can use
the optimal scheme with $M=4$ with very large time steps
in a fully explicit arrangement to solve transients. After 
the desired steady state is attained one can 
reduce $\dt$ with corresponding optimized schemes
to reduce errors. In fact, by comparing 
part (a) and (d) in the figure, we can see that for 
regions where $M=1$ is stable, $M=4$ provides errors which 
could be two orders of magnitude smaller. 
Conversely, for similar accuracy, schemes with $M=4$ can 
be used with much larger $\dt$ than a scheme with $M=1$.
Note that this approach could prove beneficial at
very large levels of parallelism where implicit 
schemes in time, while providing good stability 
characteristics for large time steps, become challenging  
due to the necessity to invert large matrices.


\begin{figure}[h!]
\includegraphics[trim={3.5cm 0 0 0},clip,width=\textwidth]{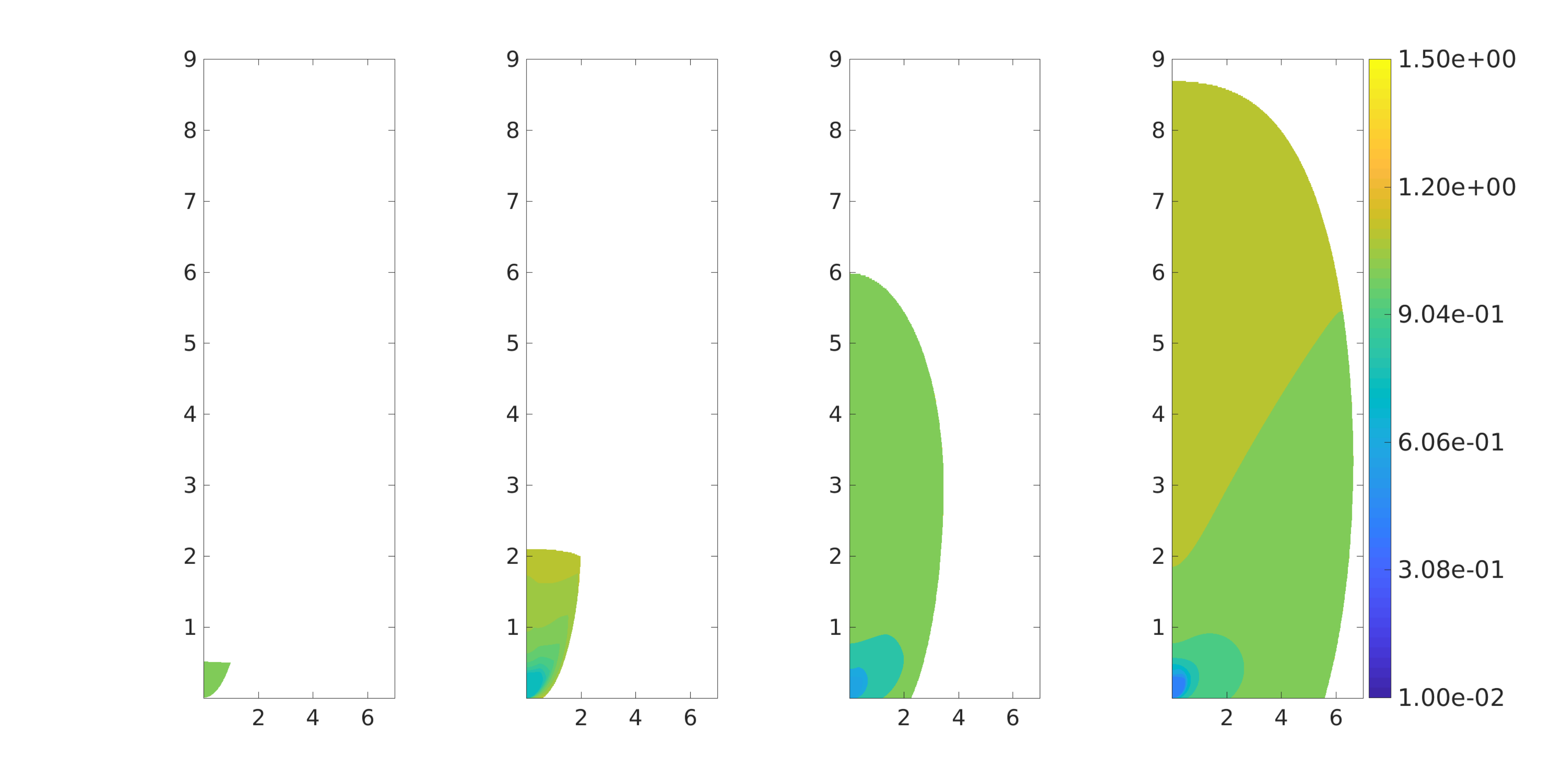}
\begin{picture}(0,0)
\put(75,235){\scriptsize $(a)$}
\put(180,235){\scriptsize $(b)$}
\put(285,235){\scriptsize $(c)$}
\put(390,235){\scriptsize $(d)$}
\put(10,140){$r_d$}
\put(60,18){$r_c$}
\put(165,18){$r_c$}
\put(270,18){$r_c$}
\put(375,18){$r_c$}
\end{picture}
\caption{Tradeoff between stability and spectral accuracy with respect to
$(r_d,r_c)$, for $\beta_1=10$ and
$\beta_2=1$. 
Contours represent sum of spectral errors $\|e(\kdx)\|^2_{\mathcal{L}_2}$
for first and second derivatives using optimal coefficients obtained
as a solution of \eqn{explicitSynth} with fixed $\dt$ for the 
advection-diffusion equation.
Errors are normalized by the spectral error 
obtained for $M=1$. Different 
stencil sizes shown in (a) $M=1$, (b) $M=2$, (c) $M=3$, and (d) $M=4$.}
\figlabel{explicit:Err_Stab}
\end{figure}

It is interesting to observe how 
optimal coefficients change to maintain stability 
for very large time steps. Since biased (upwind) schemes 
tend to be more dissipative and stable, one would expect 
that gains in stability are mediated by losses in accuracy 
due to biasing. This is, in fact, what is observed.
In order to quantify deviations from symmetry 
($a_{i}=a_{-i}$,
where $a_i$s are the coefficients in \eqn{explicit}
that are obtained from $\vo{a}_d$) for 
even derivatives or anti-symmetry 
($a_{i}=-a_{-i}$) for odd
derivatives we define the metric:
\begin{equation}
A=\sum_{i=1}^{M}\frac{|a_{-i}+(-1)^{d+1}a_{i}|}{|a_{-i}|+|a_{i}|},
\eqnlabel{defA}
\end{equation}
for the $d$-th derivative.
Clearly $A=0$ when the coefficients are 
symmetric (for even derivatives) or anti-symmetric 
(for odd derivatives).
Larger values of $A$ are associated with increasingly 
biased approximations.
The denominator in \eqn{defA} is included such that the contribution 
from all coefficients are of the same order. This is needed because
coefficients tend to decrease in magnitude with 
distance from the point where derivatives are computed.

In \rfig{skew}(a) we plot $A$ for the first (a) and 
second (b) derivatives, respectively for $M=2$. 
As before, colored 
areas correspond to stable conditions.
The black line in \Fig{skew}(a) corresponds to the stability region 
given in \Fig{explicit:rc_rd} for $M=2$.
When we are within this boundary, the value of $A$ for the 
first derivative is very small 
which suggests that all extra degrees  
of freedom are used to minimize the spectral error for which
symmetric stencils are generally better.
However, as we move outside of this area, these 
extra degrees of freedom are needed to satisfy the stability 
constraint. By losing symmetry or anti-symmetry error increases 
(as seen in \Fig{explicit:Err_Stab}) but stability improves.
In fact, asymmetry for the first derivative implies that 
the error has both real and imaginary parts as seen in 
\rfig{errM2}. The biased nature of 
coefficients leads not only to dispersion errors (as assured 
by lemma \ref{lem:realZero} with symmetric coefficients) but also 
to dissipation errors as $\Re[e(\eta)]\ne 0$. 
Both dissiption and dispersion error 
increase as we increase the value of $r_c$ and $r_d$.

A different behavior is observed for the second derivative
whose contours of $A$ are shown in \Fig{skew}(b).
Here we see that $A$ is very small for any $r_c$ and $r_d$
within the limits of stability. The implication of this is 
that stability is essentially governed by the first derivative,
which may be intuitive. It is indeed common to stabilize
fluid flow simulation codes by special treatment of the convective
terms while leaving diffusive terms approximated by standard 
central differences. 
What our results show, beyond heuristic 
stabilization considerations is that, indeed, biasing
convective terms and keeping central schemes for diffusion 
is the spectrally optimal way of achieving stability. 
\colk{They also show that, unlike \cite{TKS2003}, so-called 
anti-diffusion 
does not lead to 
instability due to the compensating effect of the other term 
in the equation}.
For completion, in \Fig{skew}(c) we show 
the change in optimal coefficients for the first derivative 
for the conditions represented 
by different symbols in part (a). We clearly see how 
the scheme changes from a completely anti-symmetric 
configuration (red circles) to a more biased set of 
coefficients. We emphasize this stable configurations
are optimal in spectral space.

\colk{We close this section with two remarks about the stability 
limits computed here.
First, in terms of computational cost we note that 
as a consequence of Lemma \ref{lem_invariance} 
for periodic domains, the optimization problem in \eqn{explicitSynth} 
has a total of $n=Sd$ degrees of freedom, which is independent 
of the total number of grid points $N$. 
Since the computational cost in terms 
of time and memory for state-of-the-art optimization algorithms is $\mathcal{O}(n)$ 
\cite{Zhang2018}, obtaining the numerical scheme 
represents a small fraction of the total 
computational cost. Furthermore, since the time step size is typically
controlled either by $r_c$ or $r_d$, the computational cost of solving
the equation scales as $N^2$ or $N^3$, respectively, making the optimization
cost increasingly smaller as the problem size increases.
Second, 
we note that the 
stability limits shown in \Fig{explicit:Err_Stab} were 
obtained numerically with machine precision. 
Thus, caution needs to be exercised when 
selecting conditions in the $r_c$-$r_d$ plane 
very close to the stability boundaries.}


\begin{figure}[h!]
\begin{center}
\subfigure{\includegraphics[trim={0cm 0cm 0.1cm 0cm},clip,width=0.45\textwidth]{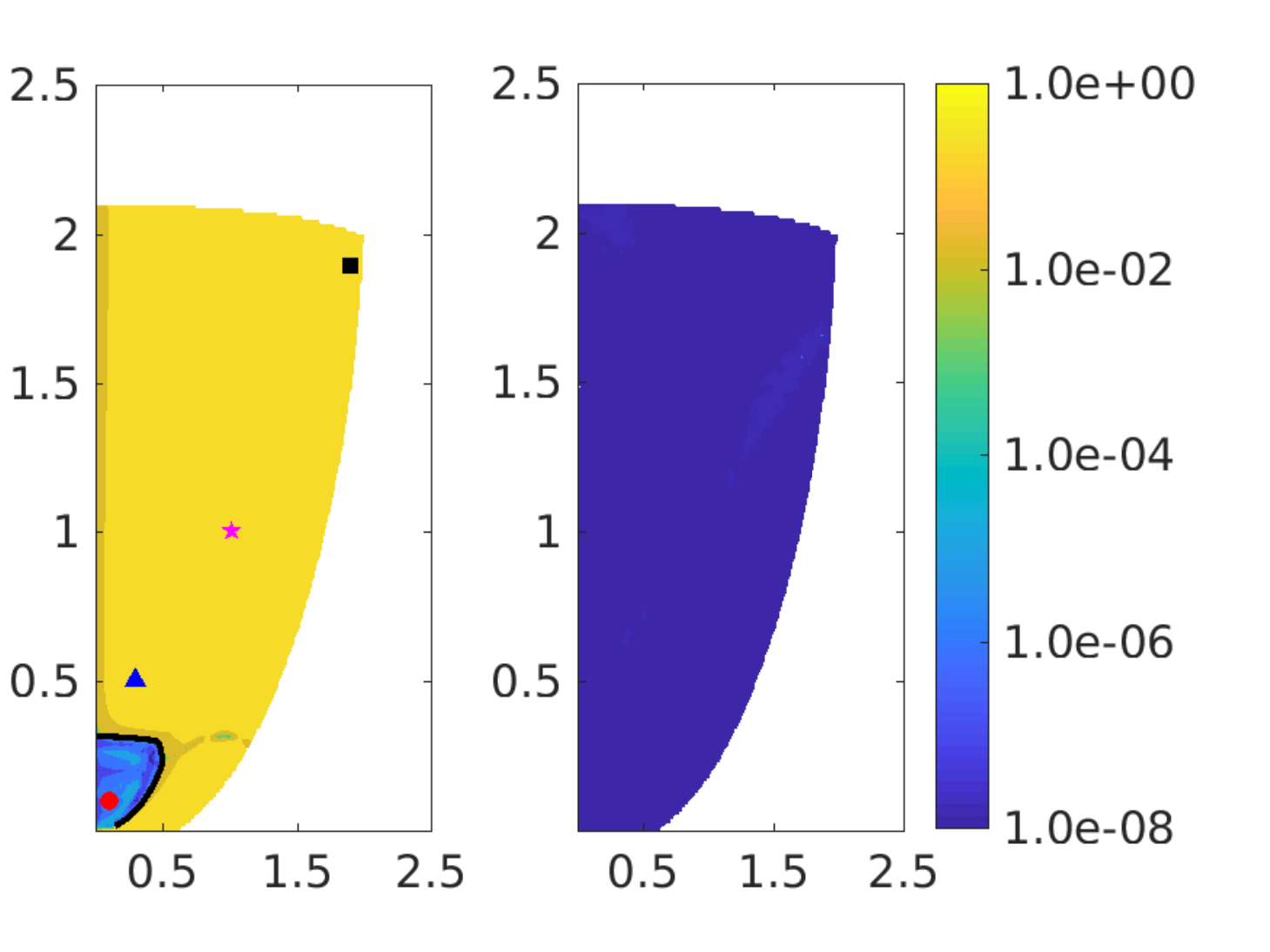}}
\subfigure{\includegraphics[width=0.45\textwidth]{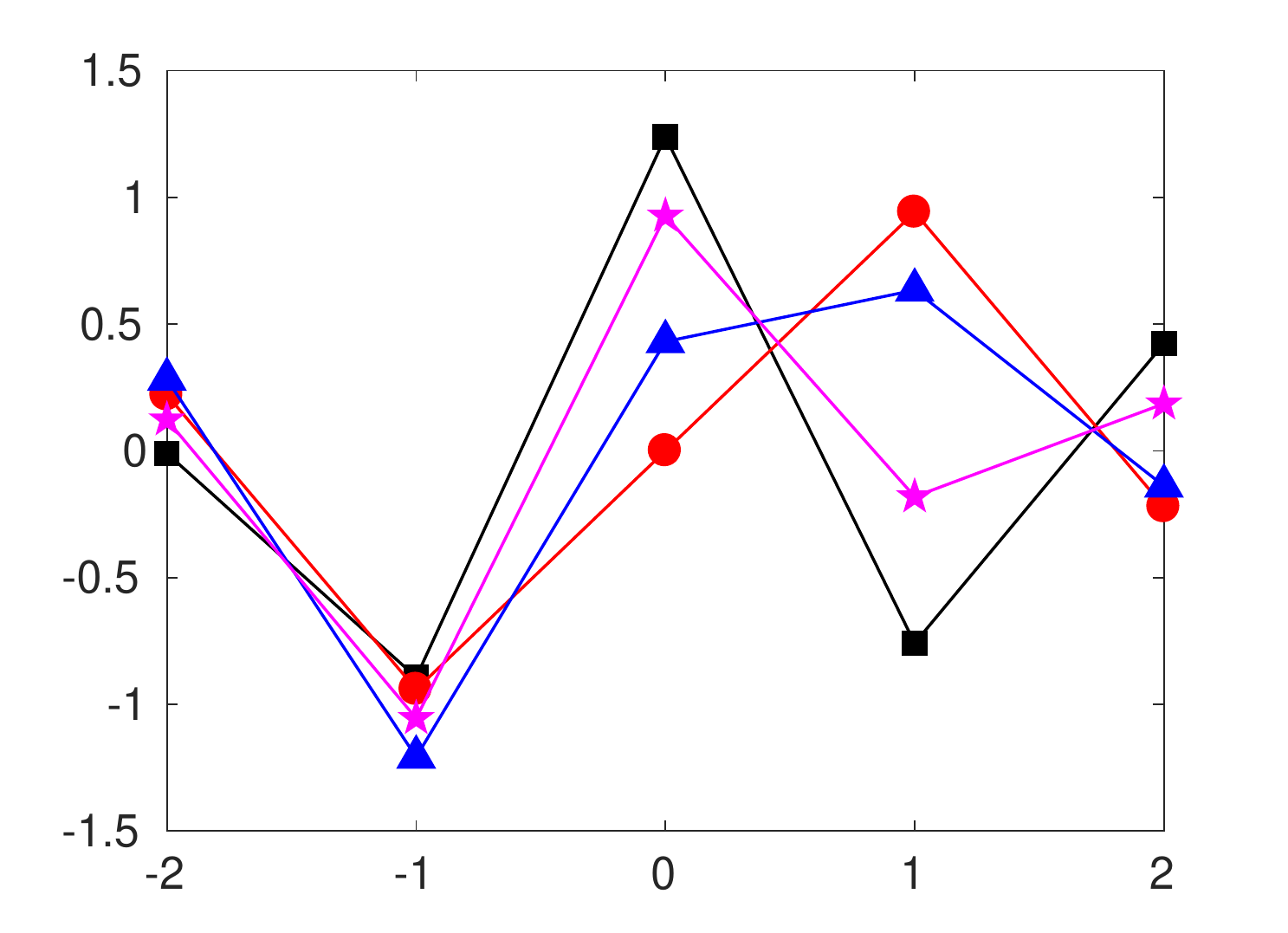}}
\begin{picture}(0,0)
\put(-390,-5){$r_c$}
\put(-310,-5){$r_c$}
\put(-450,80){$r_d$}
\put(-128,-5){$-M:M$}
\put(-40,135){\scriptsize $(c)$}
\put(-370,135){\scriptsize $(a)$}
\put(-290,135){\scriptsize $(b)$}
\put(-215,50){\rotatebox{90}{Elements of $\vo{a}_1$}}
\end{picture}
\caption{Contours of measure $A$ for $M=2$ for (a) first derivative 
(b) second derivative. (c) Optimal coefficients for 
different values of $r_c$ and $r_d$ as marked, with the 
same symbols, in (a).
}
\figlabel{skew}
\end{center}
\end{figure}

\begin{figure}[h!]
\begin{center}
\subfigure{\includegraphics[width=0.45\textwidth]{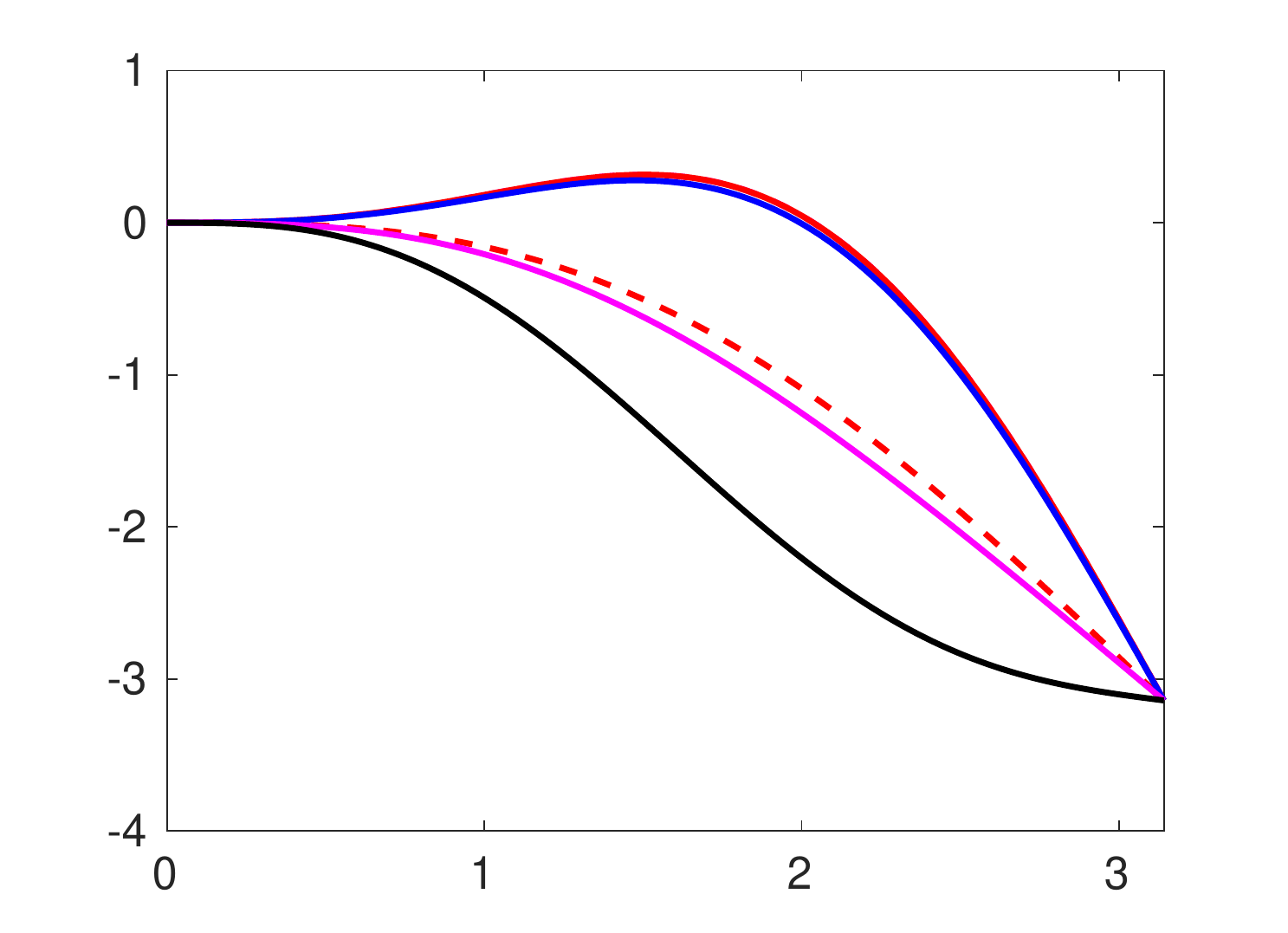}}
\subfigure{\includegraphics[width=0.45\textwidth]{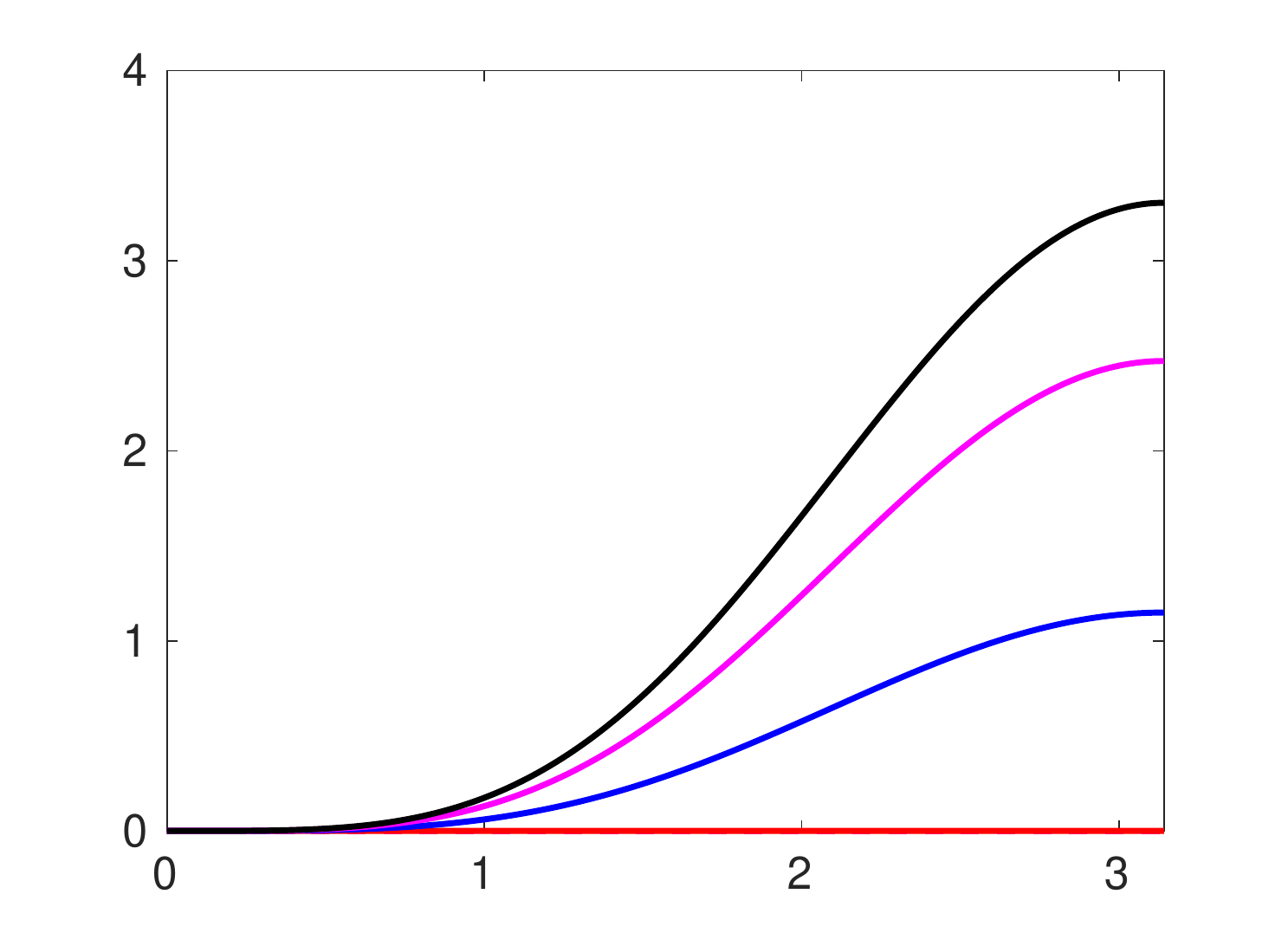}}
\begin{picture}(0,0)
\put(-325,-5){$\eta$}
\put(-110,-5){$\eta$}
\put(-40,132){\scriptsize $(b)$}
\put(-255,132){\scriptsize $(a)$}
\put(-215,65){\rotatebox{90} {$\Re[e(\kdx)]$}}
\put(-440,65){\rotatebox{90} {$\Im[e(\kdx)]$}}
\end{picture}
\caption{\colk{Dispersion error (a) and
dissipation error (b) for the first derivative with 
$M=2$ at different values of $r_c$ and $r_d$ as marked
with the same color in \rfig{skew}(a).}
The dashed 
red line corresponds to standard second order 
scheme with $M=1$ and the dashed magenta line (hidden) corresponds 
to optimized scheme obtained for $M=2$ without stability constraint.
}
\figlabel{errM2}
\end{center}
\end{figure}

\subsubsection{Higher-order temporal schemes}

In the previous section we illustrated how stability can 
be incorporated into the framework using a first-order forward 
temporal discretization. However, in practice higher order 
schemes are typically used. 
Thus, we generalize 
\eqn{discDyn} to include higher order multi-step temporal schemes, such 
as Adams-Bashforth.
For this we define a vector $\underline{\vo{F}}$ 
as a stack of the solution at $L$ consecutive time levels,  
\begin{equation}
\underline{\vo{F}}^{k+1}=\begin{bmatrix} \vo{F}^{k+1} & \vo{F}^k& \dots 
& \vo{F}^{k-L+1}\end{bmatrix}^T .
\end{equation}
 The evolution equation can then be written as 
\begin{equation}
\underline{\vo{F}}^{k+1}=\left(\vo{A}_t+\sum_d\frac{\Delta t}{(\Delta x)^d}
\beta_d\vo{B}_t\right)\underline{\vo{F}}^k,
\eqnlabel{mse}
\end{equation}
where
\begin{equation}
\vo{A}_t:=\begin{bmatrix} \vo{A}_{1}
 \\
\vo{A}_{2}\end{bmatrix}
~~~\text{and}~~~
\vo{B}_t:=\begin{bmatrix} \vo{b}^T
\otimes \vo{A}_d^{\vo{\Phi}} \\
\vo{0}_{NL \times N(L+1)} \end{bmatrix},
\end{equation}
\begin{equation}
\vo{A}_1:= \vo{e}_{1 \times (L+1)}
\otimes \vo{I}_{N \times N} ~~~\text{and}~~~
\vo{A}_2:=\begin{bmatrix}\vo{I}_{NL \times NL} & \vo{0}_{NL \times N}\end{bmatrix},
\end{equation}
with $\vo{e}=\begin{bmatrix} 1&0&\dots&0\end{bmatrix}$.
The vector $\vo{b}^T$ is the vector of the coefficients of the
temporal scheme that uses $L+1$ time levels.
By examining \eqn{mse} 
it is clear that multistep methods have the same structure as
the example in \rsec{approach2} where, for a given $\dt$,
the unknowns are contained in $\vo{B}_t$.
Thus, the stability condition can then be written as
\begin{equation}
\lambda_{max}\left(
\vo{A}_t+\sum_d\frac{\Delta t}{(\Delta x)^d}
\beta_d\vo{B}_t
\right)\le1
\end{equation}
Following the same procedure as before, 
this condition, and thus stability,
is guaranteed by bounding the $2$-norm, i.e.
\begin{equation}
\left\|\vo{A}_{t} +  \sum_d \frac{\Delta t}{(\Delta x)^d}\beta_d
\vo{B}_t\right\|_2 \leq 1 .
\eqnlabel{ms-2norm}
\end{equation}
which is the generalization for the inequality constraint 
\eqn{2norm_stab}.

Thus, the unified generalized formulation for 
arbitrary multi-step temporal discretizations
consists of 
an optimization problem that minimizes the spectral error given by
\eqn{domain_ew} subjected to 
the order of accuracy (equality)
constraint \eqn{domainOrder}, and 
the stability (inequality) constraint 
\eqn{ms-2norm}.

The two approaches discussed in \rsecs{approach1}{approach2},
can then be applied to the generalized multi-step schemes 
as well. That is, we
can compute maximum $\Delta t$ for given $\vo{A}_d^{\vo{\Phi}}$
or we can compute $\vo{A}_d^{\vo{\Phi}}$ by fixing $\Delta t$. 
We do note, however, that 
bounding the spectral radius by the $2$-$norm$ can be 
unnecessarily restrictive for some multistep schemes especially
for long temporal stencils which leads to 
long vectors resulting from the stacking 
of increasingly large number of time levels.


\section{Numerical results} \label{sec:simulations}

In order to test the theoretical results from previous 
sections we conducted several test on model PDEs of increasing 
complexity.  The focus would be in a comparison between the schemes developed here
and standard schemes.
In particular, we will compare our optimized schemes 
against 
\begin{enumerate}[(a)]
\item standard scheme of the same order 
\item standard scheme with the same stencil size 
\end{enumerate}
While (a) allows us to assess how the additional degrees 
of freedom are used to increase spectral accuracy and/or
maintain stability, 
(b) provides comparison between two schemes with 
the same computational cost in computing spatial derivatives 
since both schemes use the same stencil size. 
As pointed out later on, however, optimal schemes of lower 
order may indeed provide a computational advantage 
when they are coupled with a temporal scheme of matching order
to solve a PDE.

We begin our analysis with the diffusion equation and the linear
advection-diffusion equation for which
exact analytical solutions are known and the error in 
numerical solutions can be evaluated accurately. We then turn to the non-linear
advection-diffusion (Burgers)
equation which is a widely used proxy to study important 
features of fluid flow motion 
governed by the Navier-Stokes equations. \colk{We will 
conclude the numerical section by a brief analysis of the wave equation
as well as a discussion of the effect on dispersion relations.}

For short, we will refer to the standard and optimized 
finite difference schemes as SFD and OFD, respectively in what 
follows.

\subsection{Diffusion equation}
Consider the equation: 
\begin{equation}
{\partial u\over \partial t} = \alpha {\partial^2 u \over \partial x^2},
\eqnlabel{diffeqn}
\end{equation}
where $\alpha$ is the diffusivity.
Since this equation 
is linear, different Fourier modes do not interact and 
wavenumbers present in the solution are only due to them 
being present in the initial conditions.
The dissipative action of the second derivative causes the 
decay of amplitude of all modes with time. This decay becomes more 
prominent as the wavenumber increases.
Equation \eqn{diffeqn} is solved 
in a periodic domain of length $L=2\pi$. The initial condition is a 
superimposition of sinusoidal waves,
\begin{equation}
u(x,0)=\sum_k A(k) \sin(k x + \phi_k),
\eqnlabel{diffinit}
\end{equation}
where $k$ denotes the wavenumber, 
$\phi_{k}$ is
a random phase angle corresponding to each wavenumber, and
$A(k)$ is the amplitude of each mode taken here to be represented as
a power law of the form $A(k)=A(1)k^\sigma$. The value of the 
exponent was chosen to be $\sigma={-1/6}$ which, by being small,
corresponds to a shallow spectrum representative of a 
solution with a wide range of energetic modes. The reason for this 
is to critically assess the ability of schemes to represent accurately 
a wide range of scales.

The analytical 
solution of \eqn{diffeqn} is known:
\begin{equation}
u_a(x,t)=\sum e^{-\alpha k ^2 t}A(k) \sin(k x + \phi_k)
\eqnlabel{diffsola}
\end{equation}

For the semi-discrete analysis, we discretize \eqn{diffeqn} using 
an optimized second order scheme (OFD2) with $\ns=4$ in space
whose coefficients can be 
found in \rtabs{second_optim_ad}{first_optim_ad}.
 The numerical results so obtained are compared with standard second order 
 (SFD2) and standard eighth order (SFD8) in space.
For the fully discrete system, we match the order of 
accuracy of time and space discretization.
Time and space step sizes ($\dt$, $\dx$) are related through 
a diffusive CFL condition
($r_{d}=\alpha \Delta t/ \Delta x^2$).
Thus, we use a forward first order discretization in time
for OFD2 and SFD2,
and a fourth-order five-stage Runge-Kutta scheme for SFD8.

To assess the error across scales, we compute the relative difference
between the energy at individual 
Fourier modes (at a given wavenumber) of the numerical 
and analytical solutions, that is 
$|\hat{u}(\eta)^2-\hat{u}_a(\eta)^2|/|\hat{u}_a(\eta)|^2$
where as before $\eta=k\dx$.
In \Fig{spec_err} we 
show this error for the semi-discrete system (solid lines)
at a normalized time of 
$t \alpha k_0^2 \approx0.002$ where $k_0=1$ is the 
lowest wavenumber in the simulation.
This small normalized time was selected to ensure that
the solution has evolved enough to 
present measurable errors while, at the same time,
energy in high wavenumbers is not completely dissipated. 
In terms of the highest wavenumber in the simulation
he normalized time is $t\alpha k_{max}^2\approx 20$.
The error for SFD2 is larger than OFD2
and grows drastically with increasing wavenumber, illustrating 
the inability of low-order standard schemes to capture rapid 
spatial fluctuations. However, this is clearly not the case of 
optimized second-order schemes.
For SFD8, the error is very small at low wavenumbers but also increases
significantly at high wavenumbers. In fact, we see that 
the error spans
more than fifteen orders of magnitude showing a dramatic disparity in 
resolution capabilities for multiscale problems. 

The optimized scheme OFD2, on the other hand, shows a more uniform 
error distribution across the entire wavenumber
 space shown. In fact, 
 OFD2 presents much smaller errors than even the
 eighth order SFD8 at high wavenumbers. 
By fixing the formal order the scheme to two, 
the additional degrees of freedom provided are used to increase 
the resolution capabilities of increasingly large regions of 
wavenumber space.
The fact that OFD2 is more {\it spectrally flat} than the 
other schemes stems for 
the choice we have made for the function $\gamma(\eta)$ 
in \eqn{sp_error}
as unity in the region of interest $\eta\in [0,2.5]$ and
zero otherwise.
This assigns equal weights
to all wavenumbers in that interval and, thus, leads to spectrally flat schemes.
Such performance is highly desired when one wants to study 
multiscale physical processes like turbulence. 
As we have shown in \rsec{gamma}, however, the formulation allows for 
non-uniform weights which can be used to obtain better resolution in one or 
more arbitrary regions of the wavenumber space.

In the figure we also include the fully-discrete system integrated 
with a very small time step ($r_{d}=0.0005$) as a dashed line. 
This is seen to be very similar to the semi-dicrete case. 
However, if we increase the time-step, that is, increase $r_{d}$, then 
the error increases and the time discretization errors may dominate 
the solution. Note that this is true both for optimized and standard schemes.

\begin{figure}[h]
 \centering
 \subfigure{\includegraphics[width=0.49\textwidth]{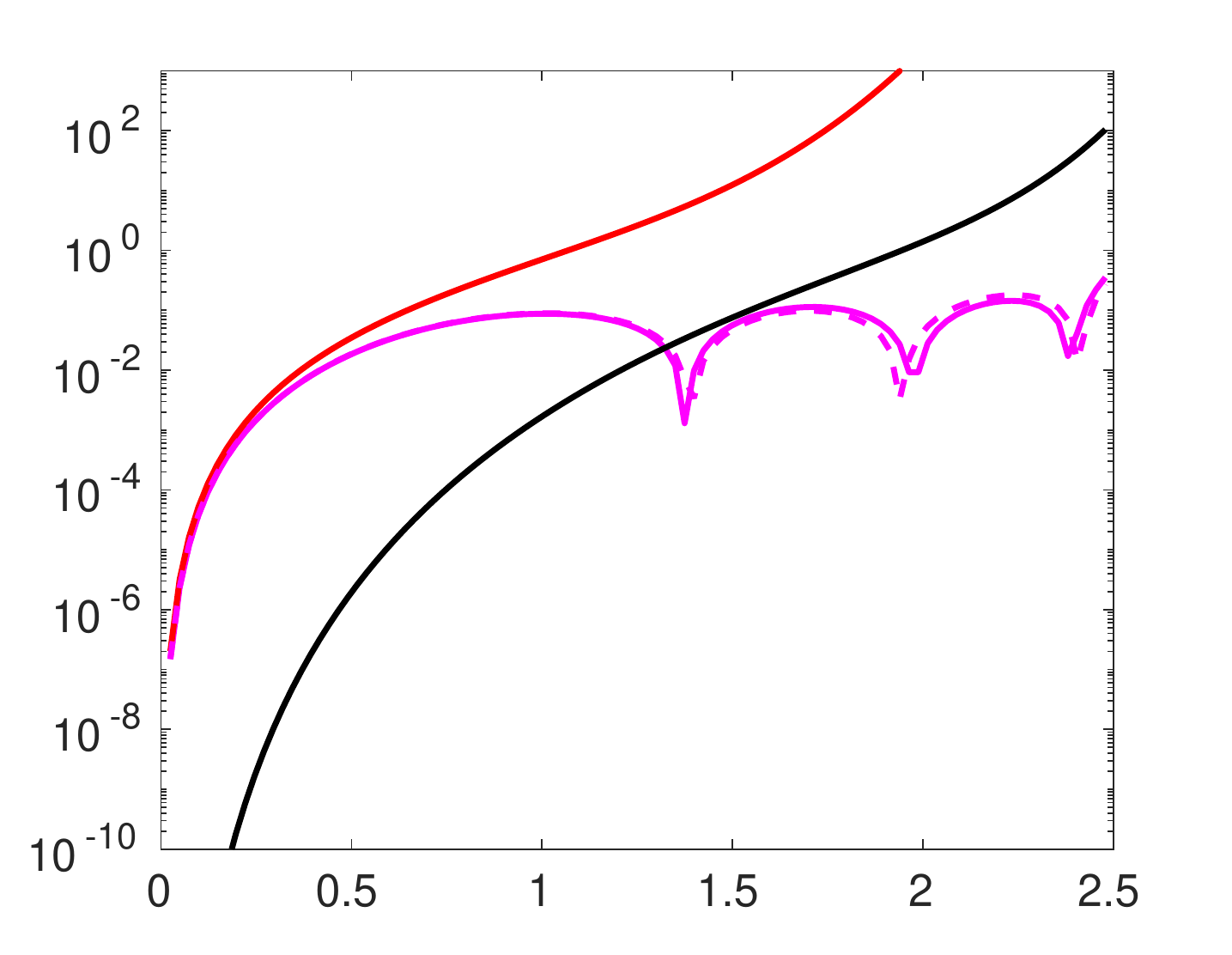}}
 \begin{picture}(0,0)
       \put(-115,-4){$\kdx$}
       \put(-240,64){\rotatebox{90} {$\frac{|\hat{u}(\kdx)^2-\hat{u}_a(\kdx)^2|}{(\hat{u}_a(\kdx))^2}$}}
 \end{picture}
 \caption{
 \figlabel{spec_err}
Normalized error in spectral energy for diffusion equation. 
Solid lines correspond to semi-discrete integration 
with different numerical schemes: 
SFD2 (red), SFD8 (black), OFD2 (solid magenta). 
Dashed magenta line is OFD2 for the fully-discrete system 
using forward first order in time with $r_d=0.0005$.
 }
\end{figure}

\subsection{Linear advection-diffusion equation}

We now consider, next in complexity, the linear advection-diffusion equation which has both the 
first and second derivatives in space
which, as shown in previous sections, typically present different types of errors:
\begin{equation}
{\partial u\over \partial t} +c{\partial u \over \partial x} = \alpha {\partial^2 u \over \partial x^2},
\eqnlabel{adleqn}
\end{equation}
In this case, Fourier modes are convected at the velocity $c$ and dissipated 
at a rate determined by the diffusivity $\alpha$.
Here we use the same initial condition \eqn{diffinit} 
as in the previous section. 
The analytical solution for \eqn{adleqn} is given by 
\begin{equation}
u_a(x,t)=\sum e^{-\alpha k^2 t}A(k) \sin(k (x - c t)+ \phi_k)
\eqnlabel{adlsola}
\end{equation}

Because of the presence of both first and second derivatives 
we expect the numerical solution to be affected by 
both dispersion and dissipation errors.
Dispersion error typically due to the first derivative, distorts 
phase relations between different waves and tend to create 
distorted shapes. Dissipation errors, as explained before, 
affect the amplitude of different waves.
In \Fig{spec-err-adl} we show again the relative 
error of spectral energy which represents 
dissipation error across 
the wavenumber space. The trend is similar to what we observed 
for the diffusion equation. 

\begin{figure}[ht]
  \centering
  \subfigure{\includegraphics[width=0.49\textwidth]{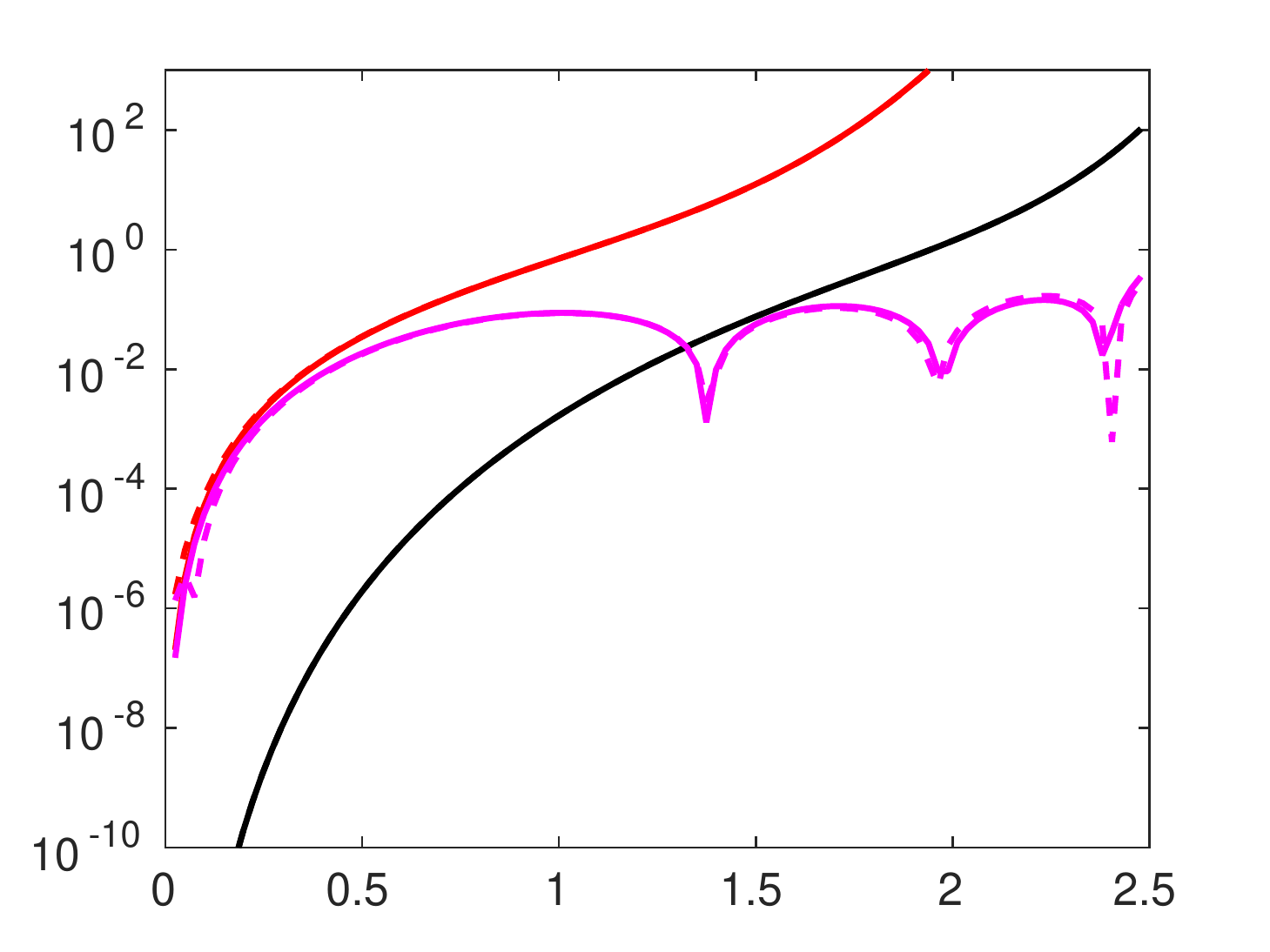}}
  \begin{picture}(0,0)
       \put(-121,-4){$\kdx$}
       \put(-244,58){\rotatebox{90} {$\frac{|\hat{u}(\kdx)^2-\hat{u}_a(\kdx)^2|}{(\hat{u}_a(\kdx))^2}$}}
 \end{picture}
   \caption{
  \figlabel{spec-err-adl}
Normalized error in spectral energy for advection-diffusion equation. 
Different lines correspond to different numerical schemes: 
standard second order in space (red), standard eighth order in space (black), optimized ($\ns$-$4$) second order in space (solid magenta) and 
optimized ($\ns$-$4$) second order in space and forward first order in time (dashed magenta)}

\end{figure}

It is also possible to quantify dispersion errors 
by computing the effective propagation speed $c^*$ 
of Fourier modes at wavenumber $k$ which can readily 
shown to be given by $c^*(k)=\arg[\hat{u}_t(k)/\hat{u}_0(k)]/k t$, 
where $\hat{u}_0(k)$ and
$\hat{u}_t(k)$ are the Fourier coefficients at the beginning 
of the simulations and at a time $t$, respectively.
The ratio of the numerical speed $c^*$ to
the actual propagation speed $c$ is a measure of phase (dispersive)
errors.
This ratio is 
unity if there is no phase error which
implies that the numerical solution travels with the same speed 
 as the actual solution.
In \fig{c-err}(a) we show results for a normalized time of 
$c k_0 t=0.1$, or in terms of the highest wavnumber
in the simulation $ck_{max}t\approx 10$.
We see that, for SDF2,
 the numerical wave speed becomes much smaller compared 
 to the actual speed $c$ as the wavenumber increases, resulting 
 in large phase errors. For OFD2 the numerical speed 
 remains much closer to the actual speed, though 
 with some oscillations.
OFD2 is also seen to be better than even 
the eighth order scheme SFD8 for which, as the wavenumber increases,
the numerical speed decreases monotonically leading to very large phase
error.
The normalization of \fig{c-err}(b) highlights this trend in the phase
error. In particular, it shows that, because of the particular choice
of $\gamma(\eta)$, the numerical scheme presents a more spectrally 
flat response.  

\begin{figure}
 \centering
 \subfigure{\includegraphics[width=0.49\textwidth]{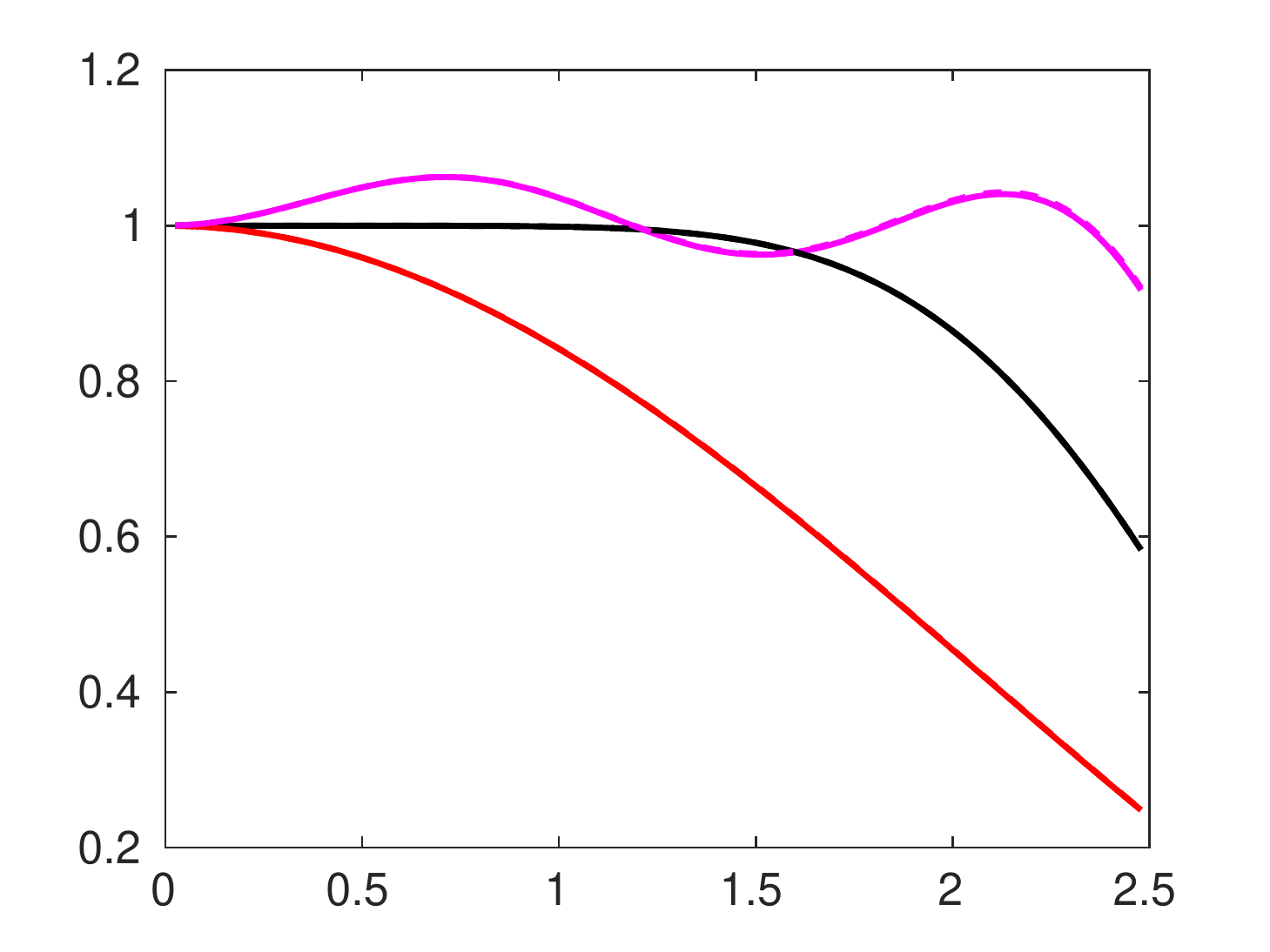}}
 \subfigure{\includegraphics[width=0.49\textwidth]{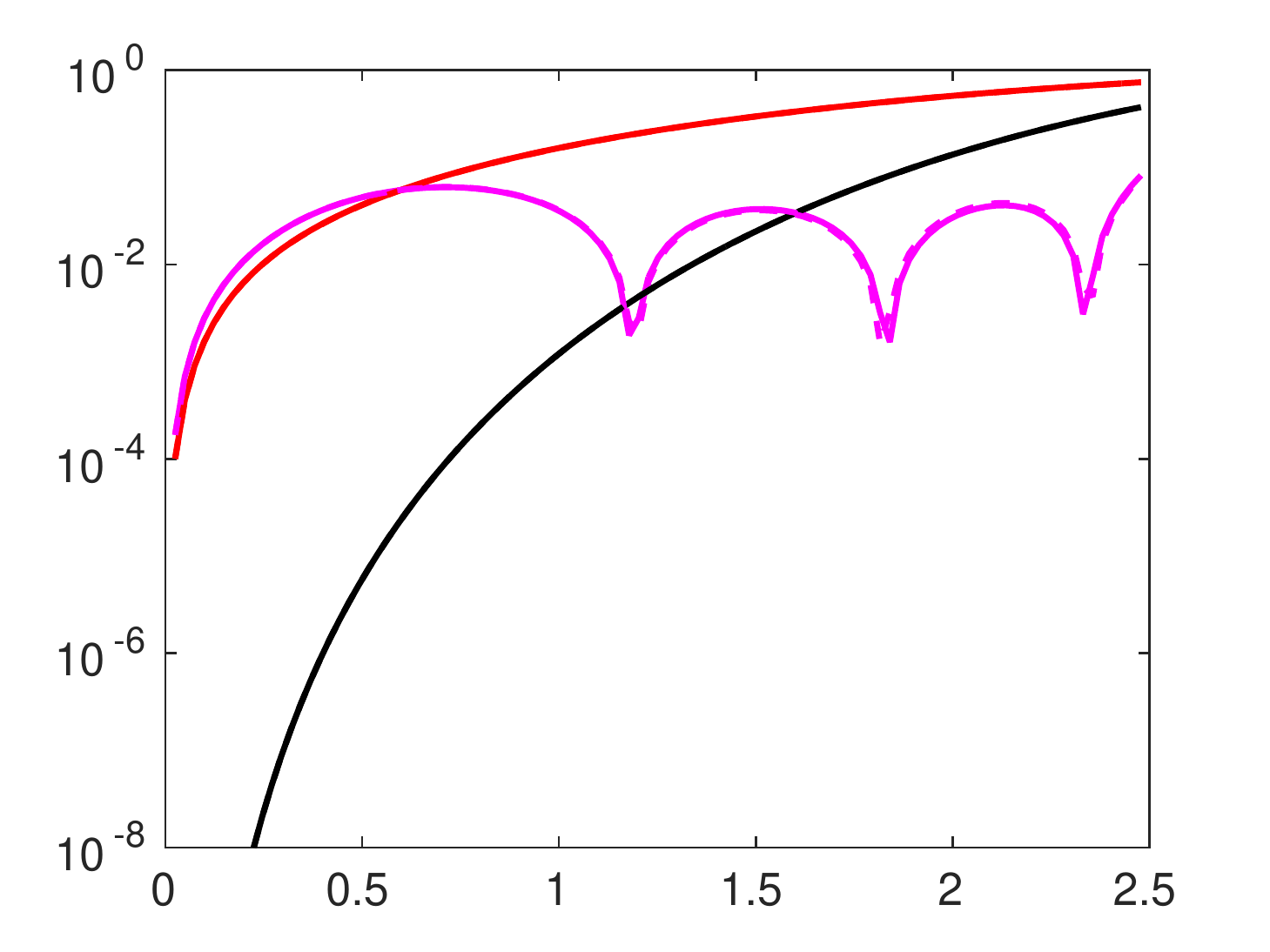}} 
  \begin{picture}(0,0)
	\put(-195,150){\scriptsize{(b)}}
	\put(-430,150){\scriptsize{(a)}}
       \put(-121,-4){$\kdx$}
       \put(-244,64){\rotatebox{90} {$|\frac{c^*}{c}-1|$}}
       \put(-355,-4){$\kdx$}
       \put(-470,80){\rotatebox{90} {$\frac{c^*}{c}$}}
 \end{picture}
\caption{
 \figlabel{c-err}
(a) Ratio of numerical to actual speed. 
(b) Normalized phase error for
advection-diffusion equation. 
Different lines correspond to different numerical schemes: 
SFD2 (red), SFD8 (black), semi-discrete with OFD2 in space
(solid magenta) and 
OFD2 in space and forward first order in time (dashed magenta).}
\end{figure}

A complementary assessment of the numerical performance of 
these schemes can be obtained by comparing the time evolution 
of different Fourier modes.
In the fully discrete system, order of accuracy in time is chosen 
to be consistent with the spatial discretization. With a fixed
diffusive CFL, this implies 
that the second and eight order schemes in space need a first
and fourth order 
temporal discretization, respectively. In the examples
below we use, thus, forward Euler (first order) and RK4
(fourth order) for time integration.

For illustration purposes we consider waves at the extreme
 ends ($k={10,100}$) of the wavenumber interval over which the 
schemes have been optimized, 
and an intermediate wave-number ($k=50$). These correspond
to $\kdx =\{0.245,1.227,2.454\}$.
In \fig{ener_3k} we show the evolution
of energy for these three wavenumbers.
The decay in energy occurs at a faster rate as the wavenumber increases.
At lower $\kdx$, the energy decay is well captured by all three
schemes. 
As we increase $\kdx$, SFD2 over-predicts 
the energy content and fails to capture the actual dissipation.
The disparity between SFD2 and OFD2 increases with increasing 
wavenumber with the latter remaining close to the analytical solution. 
For very high wavenumbers $\kdx=2.454$, 
the second-order optimized scheme 
is in fact visibly closer to the analytical solution 
than a standard eighth-order scheme.
Clearly by using the additional information provided by neighboring
points to increase spectral accuracy instead of formal 
order of accuracy leads to better resolved physics, especially 
at high wavenumbers where dissipation is strongest.


We have also compared schemes derived with our framework
with those in \cite{BB2004,TW1992}. For example, 
we have computed fourth-order optimized schemes for $\ns=4$ with the same
conditions as those in those references and found essentially
the same coefficients. Thus, our general framework can reproduce 
other particular results in the literature.


\begin{figure}
 \centering
 \subfigure{\includegraphics[width=0.49\textwidth]{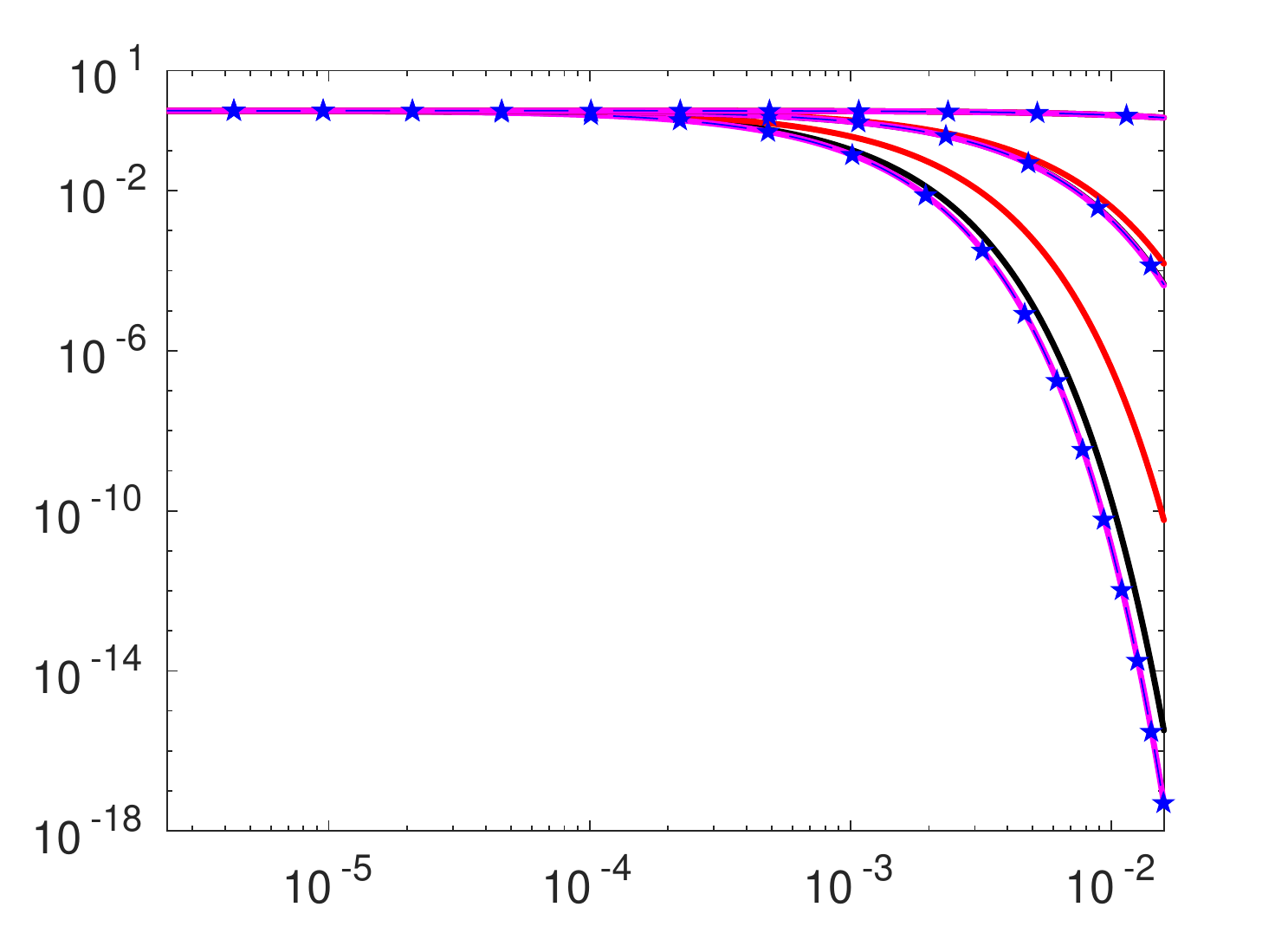}}
\begin{picture}(0,0)
        \put(-121,-6){$t^*$}
	\put(-250,72){\rotatebox{90}
		{$\frac{\langle u(x,t)^2\rangle}{\langle u(x,0)^2\rangle}$}}
	\put(-44,148){\scriptsize 0.245}
	\put(-41,134){\scriptsize 1.227}
	\put(-67,95){\scriptsize 2.454}
\end{picture}
 \caption{
 \figlabel{ener_3k}
Evolution of space-averaged energy normalized by 
the initial space-averaged energy with normalized time $t^*=tc/L$ 
for three different 
wave-numbers ($\kdx =0.245,1.227,2.454$).
Different line styles correspond to different numerical schemes: 
SFD8 with RK4 in time (black), 
OFD2 with forward Euler in time (solid magenta),
SFD2 with forward Euler in time (red).
The blue dashed-star line is the analytical solution.}
\end{figure}

\subsection{Non-linear advection-diffusion equation}

Of fundamental and practical interest is the non-linear
advection-diffusion equation 
as it resembles the one-dimensional version of the
Navier-Stokes equation that governs the motion of fluid flows,
\begin{equation}
{\partial u\over \partial t} + u{\partial u \over \partial x} =
\alpha {\partial^2 u \over \partial x^2}.
\eqnlabel{adnleq}
\end{equation}
Here $u(x,t)$ is the velocity and $\alpha$ is viscosity.
The non-linear term causes interaction between Fourier modes
which redistributes energy 
among the different scales (wavenumbers) in the solution
and produces new scales of motion. 
Because of the absence of a constant input of energy, 
the amplitude of different modes decay with time 
due to the dissipative action which becomes more effective 
at smaller scales.


\eqn{adnleq} is solved in a periodic domain of length $L=2\pi$, 
subject to initial conditions given by \eqn{diffinit}. The 
value of the exponent in \eqn{diffinit} was chosen to be $\sigma=-5/6$. 
This exponent, which corresponds to an energy spectrum decaying as 
$k^{-5/3}$ consistent with fully developed turbulence,
ensures that the spectrum is shallow enough to ensure 
high energy content at high wavenumbers while remaining stable.

Although \eqn{adnleq} is non-linear, we can apply 
the Cole-Hopf transformation and find an analytical solution 
to the problem \cite{Hopf, Cole}.
Define a transformation variable $\phi$, such that 
\begin{equation}
u=-2\alpha\frac{1}{\phi}\frac{\partial{\phi}}{\partial{x}},
\eqnlabel{transf}
\end{equation}
Then \eqn{adnleq} reduces to a simple diffusion equation in $\phi$,
which can be readily solved analytically. The result in terms 
of the primitive variable $u(x,t)$ is  
\begin{equation}
u(x,t)=\frac{\int_{-\infty}^{\infty}
\frac{(x-y)\II}{t} dy}
{\int_{-\infty}^{\infty}
\II dy}.
\eqnlabel{adnl-sol}
\end{equation}
where $\II = e^{-\frac{(x-y)^2}{4 \alpha t}}\phi(y,0).$
and the initial condition for $\phi$, 
is computed from the initial condition for $u$ as
\begin{equation}
\phi(x,0)=e^{\left( 
-\int_0^x \frac{u(y,0)}{2\alpha}dy
\right)}.
\end{equation}
While \eqn{adnl-sol} is the exact solution, the integrals involved 
are computed numerically with standard integral techniques 
which were tested for grid convergence.
This is compared with the numerical solution of \eqn{adnleq}
using, as before, SFD2, SFD8, and OFD2. 
The fully discrete system is 
formed with compatible temporal scheme as described in previous section.
We performed grid convergence studies and found that 
e.g.\ the space-averaged 
kinetic energy evolution in time becomes
independent of resolution at $N=256$. 
This is the resolution used for the comparisons that follow.

In \rfig{ener-adnl} we show the energy spectrum obtained 
for these three schemes along with the analytical solution 
\eqn{adnl-sol}, at $t/t_0\approx 0.464$, 
where $t_0=K_0/\epsilon_0$ is a characteristic time scale 
defined by the initial 
energy ($K_0\equiv  \langle u_0^2\rangle/2$, where 
angular brackets denote space averages and a subscript $0$ denotes
initial conditions) and the energy 
dissipation rate 
$(\epsilon_0\equiv \alpha \langle (\partial u_0/\partial x)^2\rangle)$.
We observe that OFD2 and SFD8 agree closely with each 
other and with the analytical solution throughout the 
range of $\eta$. Results for SFD2, however, exhibit clear departures
especially at high wavenumbers.
The ability of OFD2 to capture high wavenumbers accurately 
is expected since, as shown above, this scheme presents a 
more spectrally flat response. 
In fact, this optimized scheme presents better resolution 
than SFD8 at very high waveumbers though the converse is
true at low waveumbers.

\begin{figure}
\centering
\includegraphics[width=0.49\textwidth]{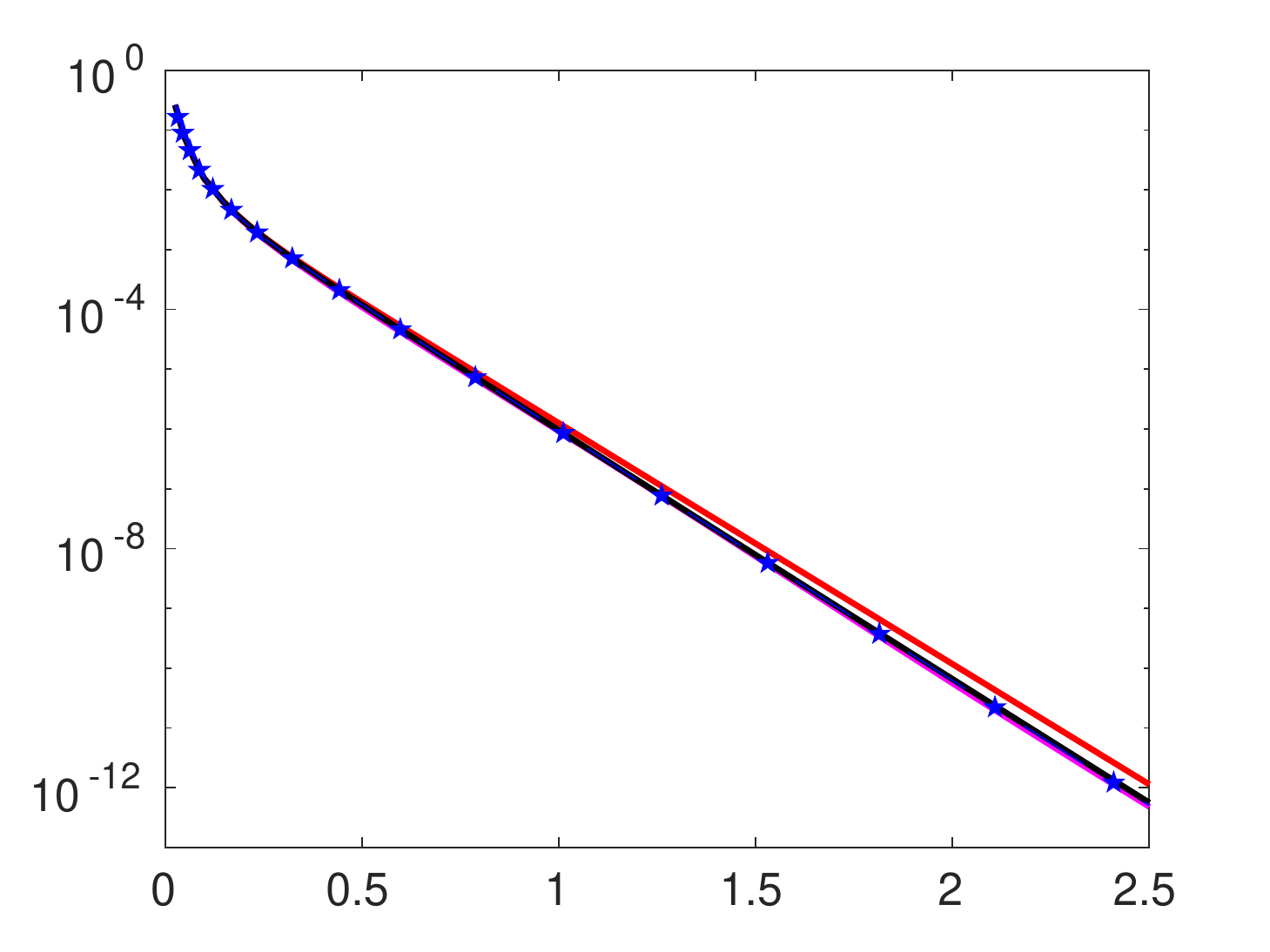}
\begin{picture}(0,0)
	\put(-120,-5){$\kdx$}
        \put(-240,85){\rotatebox{90} {$\hat{u}(\kdx)^2$}}
\end{picture}
\caption{\figlabel{ener-adnl} Energy spectrum 
for the non-linear advection-diffusion equation at 
$t^*=t/t_0\approx0.464$.
Different lines correspond to different numerical schemes: 
SFD8 with RK4 in time (black), 
OFD2 with forward first-order in time (magenta), and
SFD2 with forward first-order in time (red).
The blue dashed-star line is the analytical solution.}
\end{figure}

Because of the better small-scale resolution of OFD2,
one would expect it to capture quantities that depend 
sensitively on the small scales more accurately than SFD8 as
well. An example of such a quantity is the dissipation rate 
introduced above which is proportional to the second order moment
of the velocity gradient and thus is dominated by 
high wavenumber activity. 
This is indeed observed in \rfig{disp-adnl}(a) where we show 
the evolution of $\langle\epsilon\rangle$ as the flow decays
along with the analytical value 
(dashed-star line)
which can be computed easily 
by taking derivative of \eqn{adnl-sol}, 
$
\langle \epsilon \rangle = 
\alpha \langle ({du}/{dx})^2 \rangle=
\alpha \langle 
\big(
	\int_{-\infty }^{\infty } \II dy
	\int_{-\infty }^{\infty }
	\frac{\left(2 \alpha t-(x-y)^2\right)
	\II }{2 \alpha t^2} dy \allowbreak 
	-\int_{-\infty }^{\infty } \frac{ (x-y) \II }{t} dy
 \int_{-\infty }^{\infty } \frac{(y-x) \II }{2 \alpha t} dy
\big) 
 \left(\int_{-\infty }^{\infty } \II dy\right)^{-2}
 \rangle.$
We can clearly see that initially OFD2 is very close to 
the analytical value followed by SFD8 and SFD2.
As time evolves, diffusive effects damp high wavenumbers 
faster then small scales and the main contribution to dissipation 
moves to lower wavenumbers where the three schemes present similar 
resolution capabilities. 
The same conclusion holds for the more challenging 
higher order moments. This is seen in 
\rfig{disp-adnl}(b).
where we show the fourth-order moment of velocity gradients. 
Again OFD2 is more accurate than second and eight order 
standard schemes.

\begin{figure}
 \centering
\subfigure{\includegraphics[width=0.48\textwidth]{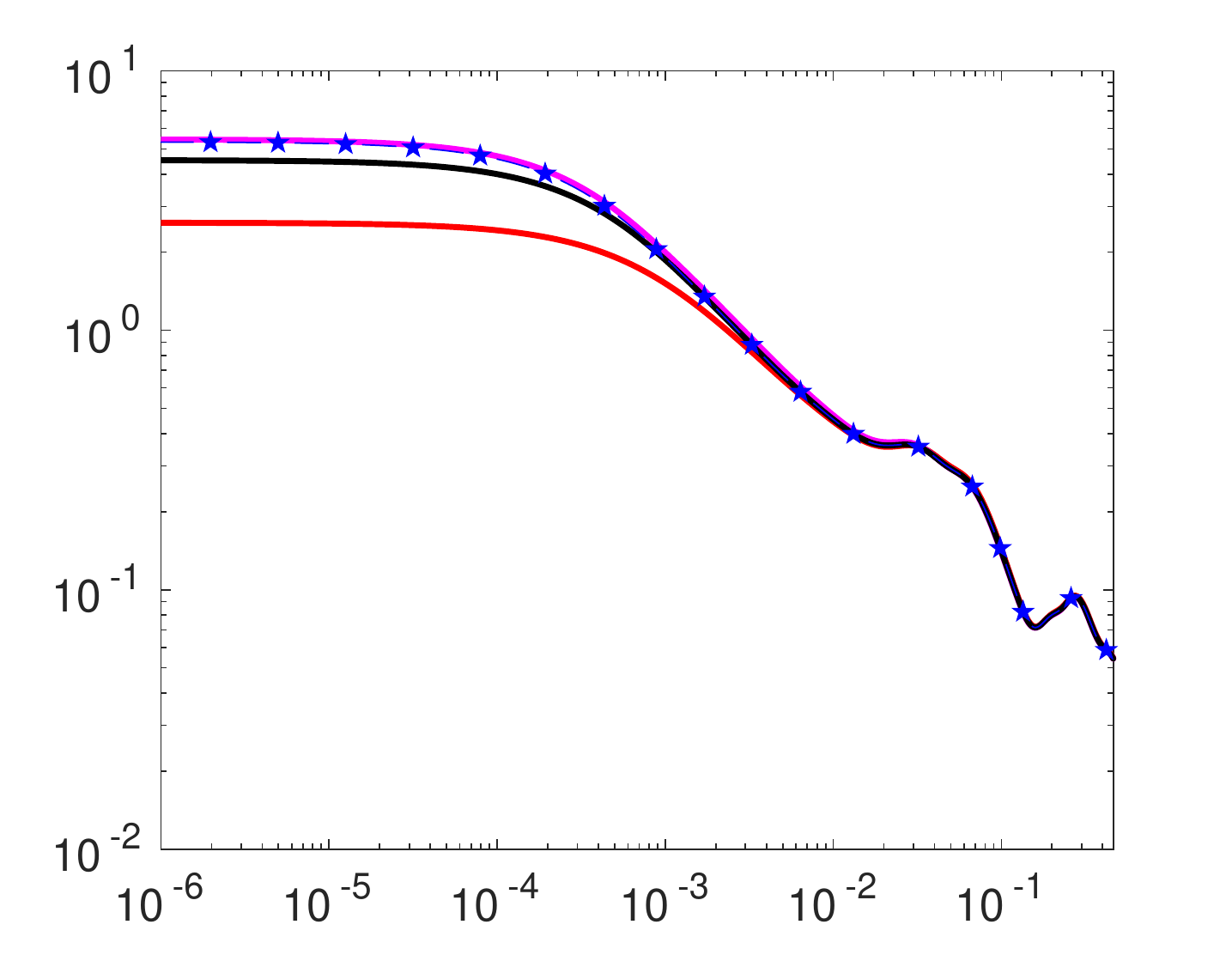}}
\subfigure{\includegraphics[width=0.51\textwidth]{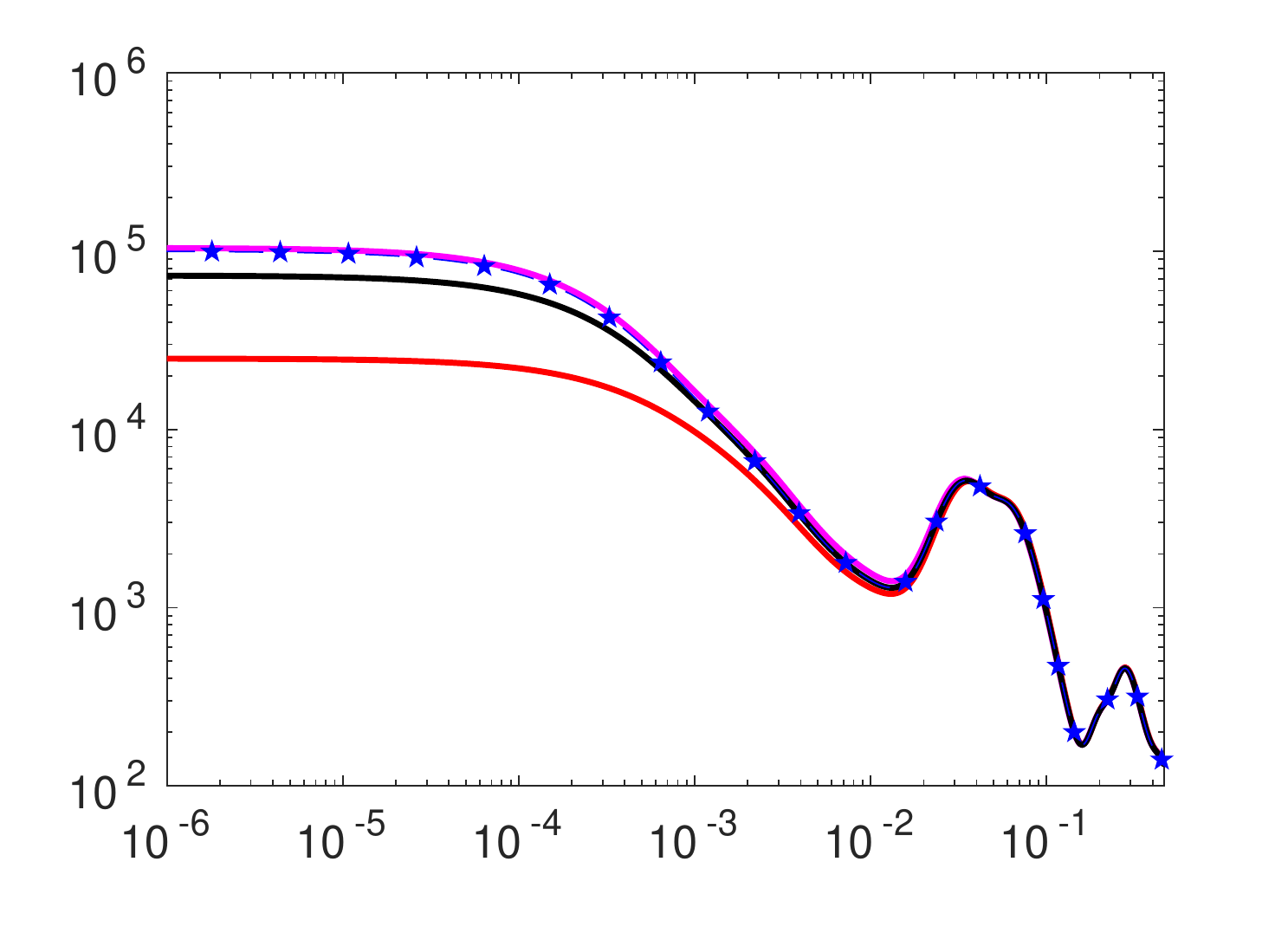}}
  \begin{picture}(0,0)
	\put(-195,165){\scriptsize{(a)}}
	\put(30,165){\scriptsize{(b)}}
        \put(-120,2){$t^*$}
        \put(120,2){$t^*$}
        \put(-240,95){\rotatebox{90} {$\langle\epsilon\rangle$}}
        \put(-10,80){\rotatebox{90} {$\langle(\partial u/\partial x)^4\rangle$}}
\end{picture}
\caption{\figlabel{disp-adnl}(a) Evolution of space-averaged
dissipation (b) Evolution of space-averaged
fourth order moment of velocity gradient with normalized time $(t^*=t/t_0)$ .
Different lines correspond to different numerical schemes: 
SFD8 with RK4 in time (black), 
OFD2 with forward first-order in time (magenta), and
SFD2 with forward first-order in time (red).
The blue dashed-star line is the  
analytical solution.}
\end{figure}

\colk{
\subsection{Wave Equation}
Another equation that has been used to assess the 
performance of numerical schemes is the second order 
linear PDE for the discription of waves \cite{HARAS1994}, commonly 
used in acoustics and given by,
\begin{equation}
\frac{\partial^2 u}{\partial t^2}=c^2
\frac{\partial^2 u}{\partial x^2}
\end{equation}
where $c$ is the propagation speed. The general 
solution to this equation is a standing wave formed by 
the superposition of two travelling
waves moving with velocity $c$ to the right and 
-$c$ to the left respectively. 
Following \cite{HARAS1994}, this equation can be expressed 
as the following system of equations,
\begin{equation}
\begin{pmatrix} \partial u/\partial t \\ 
\partial v/\partial t
\end{pmatrix}=
\begin{pmatrix} 0&1 \\ 
c^2\partial^2/\partial x^2&0
\end{pmatrix}
\begin{pmatrix} u \\ 
v
\end{pmatrix}.
\end{equation}
We solved this system 
for a Gaussian initial condition, $u_0(x)=0.2e^{-64x^2}$, 
that has a narrow width in the 
physical domain and therefore the spectrum 
spans a wide range of wavenumbers in Fourier space.
We use the optimized second-order OFD2 for the second 
derivative in space computed with a weight 
function, $\gamma(w)=e^{-w^2/256}$ for $w=[0,2]$ 
and $\gamma(w)=0$ elsewhere in \eqn{error_SC}. 
This was done to emphasize wavenumbers relevant to the problem. 
Because of the use of a
$CFL$ condition given by $r_c=c\dt/\dx=0.1$, we 
also employ second-order temporal
discretization (RK2).
The solution was advanced until a physical time of $t=2$ for 
periodic boundary conditions and the results have been plotted 
in \rfig{wave}. 
We can see that the SFD2 scheme has already developed significant 
oscillations that trigger instabilities, 
whereas the OFD2 is comparable to both SFD8 and the 
exact solution. The space averaged 2-norm of the
 error for OFD2 is an order 
of magnitude smaller than the error for SFD2. The 
2-norm of the error for OFD2 
scheme is comparable to the error obtained for SFD8. Besides this, 
the OFD2 scheme is three times computationally less 
expensive as compared to the SFD8.
\begin{figure}
 \centering
 \includegraphics[width=0.51\textwidth]{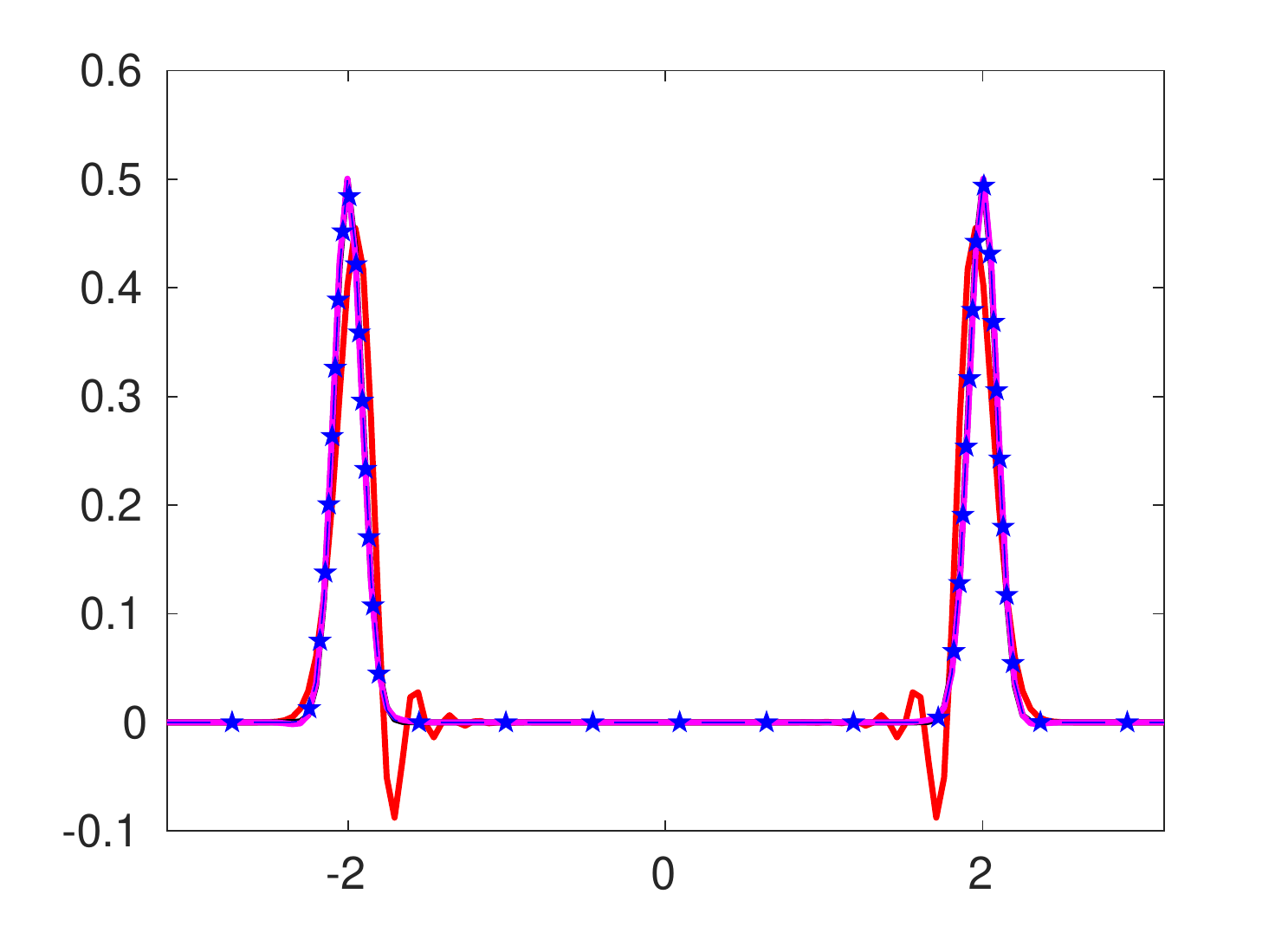}
  \begin{picture}(0,0)
        \put(-120,-8){$x$}
        \put(-245,80){\rotatebox{90} {$u(x,t)$}}
\end{picture}
\caption{\figlabel{wave} \colk{Solution of the wave-equation for $t=2$
with a Gaussian initial condition.
Different lines correspond to different numerical schemes: 
SFD8 with RK4 in time (black), 
OFD2 with RK2 in time (magenta), and
SFD2 with RK2 in time (red).
The blue dashed-star line is the  
analytical solution.}}
\end{figure}
}

\subsection{Space-time errors}
%
%
In addition to spectral accuracy, other performance metrics 
have been utilized in the literature to assess the appropriateness of 
a numerical scheme to reproduce physics of interest. For 
strongly convective problems, such as in acoustics, 
the dispersion relation provides important information 
about propagation speeds and characteristics.
Thus, the interest in so-called dispersion-relation preserving schemes \cite{TW1992}.

Consider again the convection-diffusion equation:
\begin{equation}
\frac{\partial u}{\partial t}+c\frac{\partial u}{\partial x}=
\alpha\frac{\partial^2 u}{\partial x^2},~~c,\alpha>0 \ ,
\eqnlabel{conv}
\end{equation} 
This linear equation propagates the initial condition 
to the right at the speed $c$ and 
because of the diffusive term the amplitude decreases with time
at a rate determined by the diffusivity coefficient $\alpha$. 
Using $u=\hat{u}e^{-j\omega t}e^{jkx}$ 
it is easy to obtain the exact dispersion relation for this
equation, $\omega=ck-j\alpha k^2$, where $k$ is the wavenumber. 
Two physically meaningful quantities related to the dispersion 
relation are the phase and group velocities ($c_p$ and $c_g$ respectively) 
which are given by the real part of 
$c_p=\omega/k$ and $c_g=d\omega/d k$.
Note that both are equal to $c$ for the analytical solution.
The equivalent numerical dispersion relation is, on
the other hand, given by
$\omega^*=c^*k-j\alpha^*k$, 
where $c^*$ and $\alpha^*$ are the numerical velocity and diffusivity
respectively.  
One is thus interested in how the phase and group velocities from the
computed solution ($c_p^*$ and $c_g^*$) compares to their exact values.

A convenient way to obtain dispersion relations and derived quantities
is
through the so-called amplification factor $G^*$ defined as the ratio
of Fourier modes of the numerical solution at two consecutive
time steps. In general, one can write 
$G^*=|G^*|e^{-j\beta}$,
where $|G^*|$ is the magnitude of the amplification factor 
 and the phase $\beta$ is related to the numerical phase speed
 $(c_p^*)$.
Note that the exact amplification factor is given by $G=e^{-i\omega \dt}$. 
It is then readily shown that the numerical group 
velocity can be written as \cite{TKS2006},
\begin{equation}
\frac{c_{g}^*}{c_g}=\frac{1}{r_c\dx}\frac{d\beta}{dk}.
\end{equation}
where, as noted above, $c_g=c$.
In a dispersion-relation preserving scheme, this ratio 
as well as $c_p^*/c_p$
should be close to unity. 
 
As an illustration, consider \eqn{conv} discretized with a forward
difference in time and the 
spatial schemes in \rtab{second_optim_ad} and 
\rtab{first_optim_ad}:
\begin{equation}
\frac{u_i^{n+1}-u^n_i}{\dt}=\frac{-c}{\dx}\sum_{m=-M}^{M}a_{m,1} u_{i+m}
+\frac{\alpha}{\dx^2}\sum_{m=-M}^{M}a_{m,2} u_{i+m}.
\eqnlabel{dis_conv}
\end{equation}
Since $c_g^*/c_g$ and $c_p^*/c_p$ depend on both $k\dx$ and 
$\omega \dt$, one can assess performance by 
measuring the area in the $k\dx-\omega \dt$ 
plane in which these ratios area within
some percentage of unity, say $5\%$ \cite{TKS2003}. 
The larger this area, the larger the range of wavenumbers
that preserve the dispersion relation. 
And since the phase errors are more prominent in convection dominated 
problems, we will consider relatively small values of $r_d$.
From table \ref{table:vg_area}, we can see that when the optimized 
schemes derived 
above are coupled with forward difference in time, the region in 
the $k\dx-\omega \dt$ for which $c_{g}^*/c_g$ is close to 
unity, increases considerably for different values of $r_d$. 
For $M=4$, this area is five times larger than 
that for the standard second order scheme.
We can also see that the 
ratio $c_p^*/c_p$ is close to unity for a much larger area for
optimized schemes.
This is in fact not unexepected 
as the objective function in the minimization problem is designed 
to bring the modified wavenumber close to the actual wavenumber,
resulting in more accurate derivatives leading, 
in general, to lower errors in phase and group velocities. 
Similar results were also observed for the diffusivity ratio 
$\alpha^*/\alpha$. 


\begin{table}[h]
\begin{center}
 \begin{tabular}{|c|c|c|c|c|}
        \hline
& \multicolumn{2}{c|}{$r_d=0$}& \multicolumn{2}{c|}{$r_d=0.02$}\\
\hline
 $M$ & $c_g^*/c_g$ &$c_p^*/c_p$& $c_g^*/c_g$ &$c_p^*/c_p$\\ \hline
 1(S) & 0.58 &1.73 & 0.62  & 1.84 \\
 2(O) & 2.29 &6.87& 2.27&6.89\\
 3(O) & 1.82 &5.13& 1.87&5.34\\
 4(O) & 2.91& 10.09&3.04 &10.59\\
\hline
\end{tabular}
\end{center}
	  \caption{\colk{Percentage area in the $k\dx-\omega \dt$ plane
	  for which $c_{g}^*/c_g$ and $c_p^*/c_p$ 
	  is within $5\%$ of unity for
	  $r_d=0$ and $r_d=0.02$. In parenthesis, 
	  O stands for optimized scheme, and S for standard
	  scheme. The schemes used here are those in 
	  \rtab{second_optim_ad} and
	  \rtab{first_optim_ad}.
	  }
	}
        \label{table:vg_area}
\end{table}

\colk{
We close this section by noting that although
optimized schemes seem to naturally preserve dispersion relations 
better, constraints on these dispersion relations 
can be directly included in the general framework
proposed here. For example, one can construct 
objective functions as in \cite{TW1992} where one can 
emphasize either propagation characteristics or damping 
characteristics. In the present framework, this can be done
in such a way to ensure, at the same time, stability of the resulting
scheme.
Some of the effects of such 
an implementation were discussed in \rsec{even}. 
}

\section{Conclusions and final remarks}\label{sec:conclusions}

Standard finite difference schemes are commonly derived 
to maximize its formal order of accuracy for a given stencil. 
The spectral accuracy and stability of the schemes so obtained 
are typically checked \textit{a posteriori}. 
While efforts have been devoted to merge order-of-accuracy 
constraints and spectral accuracy, no general mathematical 
framework has been 
put forth which, perhaps more importantly, introduced 
stability as an additional constraint.
Here we develop such a framework to derive finite differences 
that accounts for order of accuracy, 
spectral resolution and stability.
The most general formulation is given by \eqn{explicitSynth}.

While order of accuracy is defined as the power of the lowest order 
term in the truncation error, 
spectral accuracy is defined through 
an objective function that 
minimizes the spectral error in some specific way. 
This definition of this error is 
rather general and includes a weighting function 
$\gamma(\kdx)$ which can be used to emphasize
different scales relevant to the physical problem being solved. 
The fusion of order-of-accuracy  constraints and spectral 
accuracy leads to a minimization problem which is convex 
and thus leads to a global minimum.
The optimal coefficients, which are obtained analytically
from the minimization problem, were shown to be symmetric for the 
even derivatives and anti-symmetric for the odd derivatives. 
This was shown to correspond to vanishing imaginary and 
real part of the error $e(\kdx)$.
This is the case, regardless of the functional form of 
the weighting function $\gamma(\kdx)$. 
In other words, we have shown that the minimization of
spectral errors leads to symmetric or anti-symmetric stencils 
for even and odd derivatives respectively, 
regardless of which range of wavenumber is optimized.  

We have also incorporated stability into the unified framework. 
The semi-discrete system is shown to depend upon the sign of the 
parameters $\beta_d$ for even $d$ in the PDE.
For the fully-discrete system,
stability is assured by requiring the spectral radius 
of the evolution matrix to be smaller than unity which can 
in turn be bounded by its 2-norm. The latter can be written as a 
linear matrix inequality.
This provides an additional 
constraint imposed on the minimization 
problem to ensure that the resulting scheme is stable.
The final unified mathematical framework consists on the
minimization of an objective function representative of 
spectral error constrained by given order of accuracy and
stability guarantees.
Due to the non-linearity inherent in this constraint, this problem is no 
longer convex and therefore cannot be solved directly using standard 
optimization tools. Two approaches were proposed to tackle this
non-linear optimization. 

In the first approach, given an optimal scheme (subjected to 
constraints in order and spectral accuracy) one finds the 
largest $\dt$ for which the scheme remains stable.
This approach is similar to standard practices in which 
given a scheme, one aims at finding the largest time step
that assures stability. Here, however, the largest time step 
results from a convex
optimization which gives global extrema and can be solved efficiently.
Through an example, we showed that the stability region 
decreased with increasing stencil size $M$ consistent 
with a reduction of (artificial) dissipation at higher wavenumbers.
This approach 
illustrates how accuracy and stability are separate requirements
in the formulation: 
the additional degrees of freedom available for longer stencils are
used to maximize spectral resolution regardless of stability. The result
is a reduced stability region in the $r_d$-$r_c$ space.
In the second approach, the three elements are combined: 
optimal coefficients are obtained with both order of 
accuracy and stability constraints for a given time-step $\dt$.
This is a common situation when the time step is set by physical considerations
(e.g.\ shortest time scale in the problem).
In this approach, the additional information provided by neighboring 
grid points is used to extend stability which 
make the use of much larger time-step feasible
with spectral error comparable to the standard schemes. 
We showed that explicit schemes both in time and space, can 
remain stable for very large time steps. This can provide 
significant advantages for massively parallel simulations for which 
implicit schemes become increasingly challenging at large processor counts. 

Several numerical results were presented to illustrate the 
numerical performance against standard finite differences 
of different orders. In particular, we compared optimized schemes 
against standard schemes of the same order and the same stencil size.
While the latter presents the same computational cost in terms of
spatial derivatives, the fully discrete system may be less expensive 
if the order of the temporal discretization is to be compatible with spatial order.
Introducing the effect of temporal discretization in the unified framework 
presented here is part of our own ongoing research. 
Another 
application of the framework presented here is the inclusion
of more general boundary conditions. 
Although we have limited 
our results to periodic domains,
the framework can straightforwardly 
be extended to solve problems with non-periodic boundary conditions. 
The overall minimization problem and the constraints remain the same 
but the structure of 
some matrices 
has to be changed to restrict the stencil used close to boundaries.
In particular, 
the first few and last elements of $\vo{A}_d$ in \eqn{Ad} will 
be biased so that only grid points to the right and left of the 
boundary are used respectively. The size of each stencil 
as well as the order or accuracy can be set individually through
the corresponding entries in $\vo{X}_d$ in \eqn{domainOrder}. 
As commonly done in simulations of complex flows, a
progressive reduction of order of accuracy 
close to the boundary can thus be easily 
accomodated in this formulation. The stability constraint 
is identical to that presented for periodic domains.
Thus, here too 
\eqn{explicitSynth} will yield spectrally optimal schemes
for a given $\dt$ and \eqn{ana_sol_approach1} 
will yield the maximum $\dt$ 
for which a spectrally optimal scheme remain stable. 

\colk{We finally make some remarks about further potential 
generalizations of the framework.
The stability constraint used here ensure 
that solutions are non-growing and are applied to the entire PDE
which may involve terms of 
different characteristics (convection, diffusion, etc).
While this type of constraint is standard practice, 
it seems possible to extend the framework
to other approaches that can capture other details of the 
error dynamics \cite{SR2011}. This is part of future work.
Furthermore, our focus was on spatial discretization with a
given temporal scheme which could be of arbitrary order. 
Because of the potential additional effects when 
coupling space-time operators, it is a natural next step 
to optimize spatial and temporal schemes simultaneously. 
Unfortunately, this coupled optimization 
problem yields non-linear objective functions \cite{HARAS1994} 
which makes the mathematical problem much more challenging.
 This is also part of our ongoing research 
and will be discussed in detail elsewhere.
}

%

In summary, we integrated order of accuracy, spectral resolution 
and stability in the derivation of 
finite differences in a unified framework. 
We have shown 
specific properties of the resulting schemes in terms of the kind 
of error expected.
The coupling of these three critical elements in a unified 
formulation allows one to decouple 
requirements in terms of e.g.\ order of accuracy and spectral
accuracy.
This coupling also manifests itself in the trading of accuracy with stability.
We showed, for example, how spectrally optimal finite differences 
bias odd order derivatives to maintain stability at the expense of 
accuracy. Other tradeoffs can be analyzed similarly 
within the framework presented here.

\section{Acknowledgments}

The authors gratefully acknowledge NSF
(Grant OCI-1054966 and CCF-1439145)
for financial support. The authors also
thank XSEDE for computer time on their systems. 

%
\appendix
\section{Imaginary component of the optimal spectral error $e(\kdx)$ is zero
for even derivatives}
\label{sec:app_imag_zero}
\begin{proof}
The spectral error for even derivative is
 \begin{align*}
 	 e(\kdx) &= \left(\vo{C}^T(\kdx)\ad-(-1)^q\kdx^{d}\right) + j\vo{S}(\kdx)^T\ad.
 \end{align*}  
 Therefore, 
 \begin{align*}
 \|e(\kdx)\|_{\mathcal{L}_2}^2 &= \int_0^\pi \gamma(\kdx)\left[ \left(\vo{C}^T(\kdx)\vo{a}_d-(-1)^q\kdx^{d}\right)^2 + (\vo{S}(\kdx)^T\vo{a}_d)^2\right]d\kdx,
 \end{align*}

or 

$$
 \min_{\ad} \|e(\kdx)\|_{\mathcal{L}_2}^2 = \min_{\ad}  \int_0^\pi \gamma(\kdx)\left[\left(\vo{C}^T(\kdx)\ad-(-1)^q\kdx^{d}\right)^2\right]d\kdx  + \min_{\ad} \int_0^\pi\gamma(\kdx) \left(\vo{S}(\kdx)^T\ad\right)^2d\kdx.
$$
Since the cost function is sum of squares, it is minimized if and only if individual terms are minimized.

Without loss of generality, we can write $\ad:=\ad^s + \ad^{as}$, where $\ad^s$ is symmetrical about the central element, and $\ad^{as}$ is anti-symmetric about central element. Therefore,
\begin{align*}
\vo{S}^T(\kdx)\ad &= \vo{S}^T(\kdx)(\ad^s + \ad^{as}) = \vo{S}^T(\kdx)\ad^{as},\\
\vo{C}^T(\kdx)\ad &= \vo{C}^T(\kdx)(\ad^s + \ad^{as}) = \vo{C}^T(\kdx)\ad^{s},
\end{align*}
since $\vo{S}^T(\kdx)\ad^s = 0$ for symmetric coefficients and $\vo{C}^T(\kdx)\ad^{as} = 0$ for anti-symmetric coefficients.

Consequently,
\begin{align*}
&\min_{\ad}  \int_0^\pi \gamma(\kdx)\left[\left(\vo{C}^T(\kdx)\ad-(-1)^q\kdx^{d}\right)^2\right]d\kdx  + \min_{\ad} \int_0^\pi \gamma(\kdx)\left(\vo{S}(\kdx)^T\ad\right)^2d\kdx, \\
= & \min_{\ad^s}  \int_0^\pi \gamma(\kdx)\left[\left(\vo{C}^T(\kdx)\ad^s-(-1)^q\kdx^{d}\right)^2\right]d\kdx  + \min_{\ad^{as}} \int_0^\pi \gamma(\kdx)\left(\vo{S}(\kdx)^T\ad^{as}\right)^2d\kdx.
\end{align*}
Therefore, the two optimizations are independent of each other.
For 
a positive real valued function $\gamma(\kdx)$, the second term is zero if
and only if $\ad^{as} = 0$. Consequently, for $\mathcal{L}_2$ optimal
spectral errors, the imaginary part of the spectral error is zero for even
derivatives, and the optimal $\ad$ is symmetrical about central element.

We next analyze the feasibility of the order constraint with $\ad^{as} = 0$. Let $\vo{T}_{s}:=(\vo{I} + \vo{J})/2$, and $\vo{T}_{as}:=(\vo{I} - \vo{J})/2$, where $\vo{J}$ represents anti-diagonal matrix. Wth these transformation matrices, we can write $\ad^{s} := \vo{T}_{s}\ad$ and $\ad^{as} := \vo{T}_{as}\ad$. Therefore, 
$$ \vo{a}_d^T\vo{X}_d = (\ad^s + \ad^{as})^T\vo{X}_d = (\ad^s)^T\vo{X}_d = \ad^T\vo{T}^T_{s}\vo{X},$$ and the order accuracy constraint can be written as 
$$
\ad^T\vo{T}^T_{s}\vo{X} = \vo{y_d}.
$$

We observe that the structure of $\vo{X}_d$ is such that the odd columns 
are symmetric and the even columns are anti-symmetric, about the 
central element. Therefore, for the even columns, $\vo{T}^T_{s}(\vo{X}_d)_i=0$. Noting that the even columns of $\vo{y_d}$ are zero, we can conclude that the constraints corresponding to the even columns are trivially satisfied for symmetric coefficients. For the odd columns of $\vo{X}_d$, we observe that $\vo{T}^T_{s}(\vo{X}_d)_i=(\vo{X}_d)_i$. That is, the constraints corresponding to the odd columns are unaffected. Consequently, if $\vo{a}_d^T\vo{X}_d=\vo{y}_d$ is feasible, then $\ad^T\vo{T}^T_{s}\vo{X} = \vo{y_d}$ is also feasible.

%
%
%
 
\end{proof}

\section{Real component of the optimal spectral error $e(\kdx)$ is zero for odd derivatives}
\label{sec:app_real_zero}
\begin{proof}
The proof follows similarly to above. The spectral error in this case is
\begin{align*}
e(\kdx) &:= \vo{C}^T(\kdx)\vo{a}_d + j\left(\vo{S}^T(\kdx)\vo{a}_d -(-1)^q \kdx^{d}\right).\end{align*}
Using the same decomposition for $\ad$ as above, we get
\begin{align*}
\min_{\ad} \|e(\kdx)\|_{\mathcal{L}_2}^2 &= \min_{\ad} \int_0^\pi\gamma(\kdx) \left(\vo{C}^T(\kdx)\ad\right)^2 d\kdx  + \min_{\ad} \int_0^\pi\gamma(\kdx) \left[\vo{S}^T(\kdx)\ad -(-1)^q \kdx^{d}\right]^2 d\kdx,\\
&\min_{\ad^s} \int_0^\pi\gamma(\kdx) \left(\vo{C}^T(\kdx)\ad^s\right)^2 d\kdx  + \min_{\ad^{as}} \int_0^\pi \gamma(\kdx) \left[\vo{S}^T(\kdx)\ad^{as} -(-1)^q \kdx^{d}\right]^2 d\kdx.
\end{align*}
Using similar arguments as above, the optimal solution will guarantee $\ad^s = 0$ and 
consequently, the real part of the spectral error is zero for odd derivatives. With $\ad^s = 0$, the optimal $\ad$ will be anti-symmetric about its central element.

The proof for feasibility of the accuracy order constraint, with $\ad^s = 0$, is similar to the feasibility proof for the even derivative.
%
%
%
%

\end{proof}

\section{$\Ad$ Invariance of grid point location}
\label{sec:app_invariance}
\begin{proof}
We know that the inverse of a partitioned matrix can be written as
\begin{align}
\begin{bmatrix}\mathbf {A} &\mathbf {B} \\\mathbf {C} & \mathbf {D} \end{bmatrix}^{-1}={\begin{bmatrix}\mathbf {A} ^{-1}+\mathbf {A} ^{-1}\mathbf {B} (\mathbf {D} -\mathbf {CA} ^{-1}\mathbf {B} )^{-1}\mathbf {CA} ^{-1}&-\mathbf {A} ^{-1}\mathbf {B} (\mathbf {D} -\mathbf {CA} ^{-1}\mathbf {B} )^{-1}\\-(\mathbf {D} -\mathbf {CA} ^{-1}\mathbf {B} )^{-1}\mathbf {CA} ^{-1}&(\mathbf {D} -\mathbf {CA} ^{-1}\mathbf {B} )^{-1}\end{bmatrix}}.
\eqnlabel{partitionInverse}
\end{align}

Using this result, let 
\begin{align}
\begin{bmatrix} \vo{Q}_d & \vo{X}_d \\
 \vo{X}_d^T & \vo{0}\end{bmatrix}^{-1} := \begin{bmatrix}\vo{M}_1 & \vo{M}_2\\ \vo{M}_3 & \vo{M}_4 \end{bmatrix},
\end{align}
for suitably defined $\vo{M}_i$ using \eqn{partitionInverse}. Using the kronecker product result
\begin{align}
(\vo{A}\otimes\vo{B})(\vo{C}\otimes\vo{D}) = \vo{AC}\otimes\vo{BD},
\end{align}
and \eqn{partitionInverse}, we can write
\begin{align}
\begin{bmatrix} 
\vo{I}_{N}\otimes \vo{Q}_d & \vo{I}_{N}\otimes \vo{X}_d\\
\vo{I}_{N}\otimes \vo{X}_d^T\ & \vo{I}_{N}\otimes\vo{0}
\end{bmatrix}^{-1} = \begin{bmatrix}\vo{I}_{N}\otimes \vo{M}_1 & \vo{I}_{N}\otimes \vo{M}_2\\ \vo{I}_{N}\otimes \vo{M}_3 & \vo{I}_{N}\otimes \vo{M}_4 \end{bmatrix}.
\end{align}
Therefore, the optimal solution 
\begin{align*}
\begin{pmatrix}
\vo{v}_d\\\boldsymbol{\Lambda}_d 
\end{pmatrix}^\ast &= \begin{pmatrix}\left(\vo{I}_{N}\otimes \vo{M}_1\right)\left(\vo{1}_{N\times 1}\otimes \vo{r}_d\right) + \left(\vo{I}_{N}\otimes \vo{M}_2\right)\left(\vo{1}_{N\times 1}\otimes \vo{y}_d^T\right)\\
\left(\vo{I}_{N}\otimes \vo{M}_3\right)\left(\vo{1}_{N\times 1}\otimes \vo{r}_d\right) + \left(\vo{I}_{N}\otimes \vo{M}_4\right)\left(\vo{1}_{N\times 1}\otimes \vo{y}_d^T\right)\end{pmatrix},\\
& = \begin{pmatrix} \vo{1}_{N\times 1}\otimes \left(\vo{M}_1\vo{r}_d + \vo{M}_2\vo{y}_d^T\right)\\[2mm]
\vo{1}_{N\times 1}\otimes \left(\vo{M}_3\vo{r}_d + \vo{M}_4\vo{y}_d^T\right)\end{pmatrix},\\
& = \begin{pmatrix}
\vo{1}_{N\times 1}\otimes \vo{a}_d^\ast\\
\vo{1}_{N\times 1}\otimes \boldsymbol{\lambda}_d^\ast
\end{pmatrix},
\end{align*}
where $(\vo{a}_d^\ast,\boldsymbol{\lambda}_d^\ast)$ is the solution of \eqn{optimal:Explicit}. Therefore, the optimal solution is identical for all grid points.
\end{proof}

\section{Construction of coefficient matrix $\Adp$}
\label{sec:app_adp}
\begin{proof}
Define a shift operator $\shift{k}$, which is an $N\times N$ matrix,
with elements

\begin{align}
        \Phi_{k_{ij}} := \delta ((i-j-k) \mod N),
\end{align}
where $\delta(\cdot)$ is the Kronecker  delta function defined as
$$
\delta(i) = \left\{\begin{array}{c} 0 \text{ if } i\neq0,\\ 1 \text{ if } i = 0.\end{array}\right.
$$
For a column vectors, the operator  $\shift{k}$ cyclically shifts the elements down, $k$ times. For example, for
$$
\vo{v}:=\begin{pmatrix} 1\\2\\3\\4 \end{pmatrix}, \shift{1}\vo{v} = \begin{pmatrix} 4\\1\\2\\3 \end{pmatrix},
$$
where
$$
\shift{1} :=  \delta ((i-j-1) \mod 4) = \begin{bmatrix}
        0 & 0 & 0 & 1 \\
        1 & 0 & 0 & 0 \\
        0 & 1 & 0 & 0 \\
        0 & 0 & 1 & 0
\end{bmatrix}.
$$
For a row vector, the operator cyclically shifts the elements left, $k$ times. That is,
$$
\vo{v}^T:=\begin{pmatrix} 1 &2&3&4 \end{pmatrix}, \vo{v}^T\shift{1} = \begin{pmatrix} 2\\3\\4\\1 \end{pmatrix},
$$

From the definition of vector $\vo{F}$ and $\vo{F}^{(d)}$
\begin{align}
\vo{F} := \begin{pmatrix}
        f_1 \\ \vdots \\ f_N
 \end{pmatrix},
 \text{ and }
 \vo{F}^{(d)} := \begin{pmatrix}
        f_1^{(d)} \\ \vdots \\
        f_N^{(d)}
 \end{pmatrix}.
\end{align}
we can write the finite difference approximation at the $i$-th grid point as
$$
f_i^{(d)} = \frac{1}{(\dx)^d}\vo{a}^T_{i,d}\vo{T}\shift{\ns_\text{max}+i-1}\vo{F},
$$
for $i=\{1,\cdots,N\}$, and $\vo{T} \in \real^{S\times N}$ is a transformation matrix defined by
\begin{align}
        \vo{T} :=\begin{bmatrix}\vo{0}_{S\times (\ns_\text{max} - \ns)} & \vo{I}_S & \vo{0}_{S\times (\ns_\text{max} - \ns)}\end{bmatrix}.
\end{align}

The matrix $\vo{T}\shift{\ns_\text{max}+i-1}$ is a linear operator, or simply a mask, that picks the correct elements from $\vo{F}$ in determining the derivative at the $i^\text{th}$ location.

Now, let $\Ad^{\vo{\Phi}}$ be the vertical stacking of $\vo{a}^T_{i,d}\vo{T}\shift{\ns_\text{max}+i}$, for $i=\{1,\cdots,N\}$, i.e.
\begin{align}
        \Adp := \begin{bmatrix}
        \vo{a}^T_{1,d}\vo{T}\shift{\ns_\text{max}} \\
        \vdots \\
        \vo{a}^T_{N,d}\vo{T}\shift{\ns_\text{max}+N-1}
        \end{bmatrix} := \sum_i^{N}\bdel_i\bdel_i^T\vo{A}_d\vo{T}\shift{\ns_\text{max}+i-1},
\end{align}
where $\bdel_i\in\real^{n}$ is a vector whose $k^\text{th}$ element is defined by $\delta(i-k)$, i.e. the $i^\text{th}$ element of $\bdel_i\in\real^{n}$ is equal to one and the rest are zero. The vector $\bdel_i$ in \eqn{linearAd} is defined for $n=N$.

The definition of $\Adp$ can be compactly written as
\begin{align}
\Adp = \vo{M}_1\left(\vo{I}_N\otimes \vo{A}_d\right)\vo{M_2},
\eqnlabel{linearAd}
\end{align}
where
\begin{align}
\vo{M}_1 &:= \begin{bmatrix} \bdel_1\bdel_1^T & \cdots & \bdel_N\bdel_N^T\end{bmatrix}, & \vo{M}_2 &:= \begin{bmatrix} \vo{T}\shift{\ns_\text{max}} \\ \vdots \\ \vo{T}\shift{\ns_\text{max}+N-1}\end{bmatrix}.
\end{align}
Equation \eqn{linearAd} shows that $\Ad^{\vo{\Phi}}$ is linear in $\Ad$.

Thus, the finite-difference approximation for the $d^{th}$ derivative
for all the grid points is
\begin{align}
        \vo{F}^{(d)} =  \frac{1}{(\dx)^d} \Adp \vo{F}.
        \eqnlabel{fdDomain_ap}
\end{align}

\end{proof}
%

\newpage
\bibliographystyle{model1-num-names}
\bibliography{main,raktim}

\begin{thebibliography}{25}
\expandafter\ifx\csname natexlab\endcsname\relax\def\natexlab#1{#1}\fi
\providecommand{\bibinfo}[2]{#2}
\ifx\xfnm\relax \def\xfnm[#1]{\unskip,\space#1}\fi
\bibitem[{Hirsch(1994)}]{hirsch.I}
\bibinfo{author}{C.~Hirsch}, \bibinfo{title}{Numerical computation of internal
  and external flows}, volume~\bibinfo{volume}{1}, \bibinfo{publisher}{Wiley,
  New York}, \bibinfo{year}{1994}.
\bibitem[{Lele(1992)}]{lele1992}
\bibinfo{author}{S.~K. Lele},
\newblock \bibinfo{title}{Compact finite-difference schemes with spectral-like
  resolution},
\newblock \bibinfo{journal}{J. Comp. Phys.} \bibinfo{volume}{103}
  (\bibinfo{year}{1992}) \bibinfo{pages}{16--42}.
\bibitem[{Tam and Webb(1993)}]{TW1992}
\bibinfo{author}{C.~K. Tam}, \bibinfo{author}{J.~C. Webb},
\newblock \bibinfo{title}{Dispersion-relation-preserving finite difference
  schemes for computational acoustics},
\newblock \bibinfo{journal}{J. Comp. Phys.} \bibinfo{volume}{{107}}
  (\bibinfo{year}{{1993}}) \bibinfo{pages}{{262 -- 281}}.
\bibitem[{Kim and Lee(1996)}]{KL1996}
\bibinfo{author}{J.~W. Kim}, \bibinfo{author}{D.~J. Lee},
\newblock \bibinfo{title}{Optimized compact finite difference schemes with
  maximum resolution},
\newblock \bibinfo{journal}{AIAA Journal} \bibinfo{volume}{34}
  (\bibinfo{year}{1996}) \bibinfo{pages}{887--893}.
\bibitem[{Wang and Chen(2001)}]{WANG2001}
\bibinfo{author}{Z.~Wang}, \bibinfo{author}{R.~Chen},
\newblock \bibinfo{title}{Optimized weighted essentially nonoscillatory schemes
  for linear waves with discontinuity},
\newblock \bibinfo{journal}{J. Comp. Phys.} \bibinfo{volume}{174}
  (\bibinfo{year}{2001}) \bibinfo{pages}{381 -- 404}.
\bibitem[{Zhuang and Chen(2002)}]{Zhuang2002}
\bibinfo{author}{M.~Zhuang}, \bibinfo{author}{R.~F. Chen},
\newblock \bibinfo{title}{Applications of high-order optimized upwind schemes
  for computational aeroacoustics},
\newblock \bibinfo{journal}{AIAA Journal} \bibinfo{volume}{40}
  (\bibinfo{year}{2002}) \bibinfo{pages}{443--449}.
\bibitem[{Ponziani et~al.(2003)Ponziani, Pirozzoli, and Grasso}]{PONZ2003}
\bibinfo{author}{D.~Ponziani}, \bibinfo{author}{S.~Pirozzoli},
  \bibinfo{author}{F.~Grasso},
\newblock \bibinfo{title}{Development of optimized weighted-{ENO} schemes for
  multiscale compressible flows},
\newblock \bibinfo{journal}{Int. J. Numer. Meth. Fluids} \bibinfo{volume}{42}
  (\bibinfo{year}{2003}) \bibinfo{pages}{953--977}.
\bibitem[{Bogey and Bailly(2004)}]{BB2004}
\bibinfo{author}{C.~Bogey}, \bibinfo{author}{C.~Bailly},
\newblock \bibinfo{title}{A family of low dispersive and low dissipative
  explicit schemes for flow and noise computations},
\newblock \bibinfo{journal}{J. Comp. Phys.} \bibinfo{volume}{194}
  (\bibinfo{year}{2004}) \bibinfo{pages}{194 -- 214}.
\bibitem[{Martin et~al.(2006)Martin, Taylor, Wu, and Weirs}]{MTWW2006}
\bibinfo{author}{M.~P. Martin}, \bibinfo{author}{E.~M. Taylor},
  \bibinfo{author}{M.~Wu}, \bibinfo{author}{V.~G. Weirs},
\newblock \bibinfo{title}{A bandwidth-optimized {WENO} scheme for the effective
  direct numerical simulation of compressible turbulence},
\newblock \bibinfo{journal}{J. Comp. Phys.} \bibinfo{volume}{220}
  (\bibinfo{year}{2006}) \bibinfo{pages}{270--289}.
\bibitem[{Sengupta et~al.(2006)Sengupta, Sircar, and Dipankar}]{TKS2006}
\bibinfo{author}{T.~K. Sengupta}, \bibinfo{author}{S.~K. Sircar},
  \bibinfo{author}{A.~Dipankar},
\newblock \bibinfo{title}{High accuracy schemes for {DNS} and acoustics},
\newblock \bibinfo{journal}{Journal of Scientific Computing}
  \bibinfo{volume}{26} (\bibinfo{year}{2006}) \bibinfo{pages}{151--193}.
\bibitem[{Fang et~al.(2013)Fang, Li, and Lu}]{Fang2013}
\bibinfo{author}{J.~Fang}, \bibinfo{author}{Z.~Li}, \bibinfo{author}{L.~Lu},
\newblock \bibinfo{title}{An optimized low-dissipation monotonicity-preserving
  scheme for numerical simulations of high-speed turbulent flows},
\newblock \bibinfo{journal}{J. Sci. Comput.} \bibinfo{volume}{56}
  (\bibinfo{year}{2013}) \bibinfo{pages}{67--95}.
\bibitem[{Zhang and Yao(2013)}]{ZY2013}
\bibinfo{author}{J.-H. Zhang}, \bibinfo{author}{Z.-X. Yao},
\newblock \bibinfo{title}{Optimized explicit finite-difference schemes for
  spatial derivatives using maximum norm},
\newblock \bibinfo{journal}{J. Comp. Phys.} \bibinfo{volume}{{250}}
  (\bibinfo{year}{{2013}}) \bibinfo{pages}{{511 -- 526}}.
\bibitem[{Pirozzoli(2007)}]{PIROZ2007}
\bibinfo{author}{S.~Pirozzoli},
\newblock \bibinfo{title}{Performance analysis and optimization of
  finite-difference schemes for wave propagation problems},
\newblock \bibinfo{journal}{J. Comp. Phy.} \bibinfo{volume}{222}
  (\bibinfo{year}{2007}) \bibinfo{pages}{809--831}.
\bibitem[{Ashcroft and Zhang(2003)}]{AZ2003}
\bibinfo{author}{G.~Ashcroft}, \bibinfo{author}{X.~Zhang},
\newblock \bibinfo{title}{Optimized prefactored compact schemes},
\newblock \bibinfo{journal}{J. Comp. Phys.} \bibinfo{volume}{190}
  (\bibinfo{year}{2003}) \bibinfo{pages}{459 -- 477}.
\bibitem[{Haras and Ta'asan(1994)}]{HARAS1994}
\bibinfo{author}{Z.~Haras}, \bibinfo{author}{S.~Ta'asan},
\newblock \bibinfo{title}{Finite difference schemes for long-time integration},
\newblock \bibinfo{journal}{J. Comp. Phy.} \bibinfo{volume}{114}
  (\bibinfo{year}{1994}) \bibinfo{pages}{265 -- 279}.
\bibitem[{Sengupta et~al.(2003)Sengupta, Ganeriwal, and De}]{TKS2003}
\bibinfo{author}{T.~Sengupta}, \bibinfo{author}{G.~Ganeriwal},
  \bibinfo{author}{S.~De},
\newblock \bibinfo{title}{Analysis of central and upwind compact schemes},
\newblock \bibinfo{journal}{J. Comp. Phy.} \bibinfo{volume}{192}
  (\bibinfo{year}{2003}) \bibinfo{pages}{677 -- 694}.
\bibitem[{Sengupta and Bhole(2014)}]{TKS2014}
\bibinfo{author}{T.~K. Sengupta}, \bibinfo{author}{A.~Bhole},
\newblock \bibinfo{title}{Error dynamics of diffusion equation: Effects of
  numerical diffusion and dispersive diffusion},
\newblock \bibinfo{journal}{Journal of Computational Physics}
  \bibinfo{volume}{266} (\bibinfo{year}{2014}) \bibinfo{pages}{240 -- 251}.
\bibitem[{Bertsekas(1999)}]{bertsekas1999nonlinear}
\bibinfo{author}{D.~P. Bertsekas}, \bibinfo{title}{Nonlinear programming},
  \bibinfo{publisher}{Athena scientific Belmont}, \bibinfo{year}{1999}.
\bibitem[{Grant et~al.(2008)Grant, Boyd, and Ye}]{grant2008cvx}
\bibinfo{author}{M.~Grant}, \bibinfo{author}{S.~Boyd}, \bibinfo{author}{Y.~Ye},
  \bibinfo{title}{{CVX}: {M}atlab software for disciplined convex programming},
  \bibinfo{year}{2008}.
\bibitem[{Ben-Tal and Nemirovski(2001)}]{ben2001lectures}
\bibinfo{author}{A.~Ben-Tal}, \bibinfo{author}{A.~Nemirovski},
  \bibinfo{title}{Lectures on modern convex optimization: analysis, algorithms,
  and engineering applications}, \bibinfo{publisher}{SIAM},
  \bibinfo{year}{2001}.
\bibitem[{Stoer and Bulirsch(2013)}]{stoer2013introduction}
\bibinfo{author}{J.~Stoer}, \bibinfo{author}{R.~Bulirsch},
  \bibinfo{title}{Introduction to numerical analysis},
  volume~\bibinfo{volume}{12}, \bibinfo{publisher}{Springer Science \& Business
  Media}, \bibinfo{year}{2013}.
\bibitem[{Zhang and Lavaei(2018)}]{Zhang2018}
\bibinfo{author}{R.~Y. Zhang}, \bibinfo{author}{J.~Lavaei},
\newblock \bibinfo{title}{Efficient algorithm for large-and-sparse {LMI}
  feasibility problems},
\newblock \bibinfo{journal}{IEEE CDC}  (\bibinfo{year}{2018})
  \bibinfo{pages}{6868--6875}.
\bibitem[{Eberhard(1950)}]{Hopf}
\bibinfo{author}{H.~Eberhard},
\newblock \bibinfo{title}{The partial differential equation $u_t + uu_x = \mu
  u_{xx}$},
\newblock \bibinfo{journal}{Communications on Pure and Applied Mathematics}
  \bibinfo{volume}{3} (\bibinfo{year}{1950}) \bibinfo{pages}{201--230}.
\bibitem[{Cole(1951)}]{Cole}
\bibinfo{author}{J.~D. Cole},
\newblock \bibinfo{title}{On a quasi-linear parabolic equation occurring in
  aerodynamics},
\newblock \bibinfo{journal}{Quarterly of Applied Mathematics}
  \bibinfo{volume}{9} (\bibinfo{year}{1951}) \bibinfo{pages}{225--236}.
\bibitem[{Shokin and Roesner(2011)}]{SR2011}
\bibinfo{author}{Y.~Shokin}, \bibinfo{author}{K.~Roesner}, \bibinfo{title}{The
  Method of Differential Approximation}, Scientific Computation,
  \bibinfo{publisher}{Springer Berlin Heidelberg}, \bibinfo{year}{2011}.

\end{thebibliography}




%



\end{document}